\documentclass[12pt,twoside]{article}

 \usepackage{float}
\usepackage{graphicx}
\usepackage{epstopdf}
\usepackage{graphicx}
\usepackage{epic}
\usepackage{multirow}
\usepackage{tikz}
\usepackage{threeparttable}
\usepackage{xcolor}
\usetikzlibrary{arrows,shapes,chains}

\renewcommand{\paragraph}{\roman{paragraph}}
\usepackage[a4paper]{geometry}
\setlength{\textwidth}{6.3in}
\setlength{\textheight}{8.8in}
\setlength{\topmargin}{0pt}
\setlength{\headsep}{25pt}
\setlength{\headheight}{0pt}
\setlength{\oddsidemargin}{0pt}
\setlength{\evensidemargin}{0pt}


\makeatletter
\renewcommand\title[1]{\gdef\@title{\reset@font\Large\bfseries #1}}
\renewcommand\section{\@startsection {section}{1}{\z@}%
                                   {-3.5ex \@plus -1ex \@minus -.2ex}%
                                   {2.3ex \@plus.2ex}%
                                   {\normalfont\large\bfseries}}
\renewcommand\subsection{\@startsection{subsection}{2}{\z@}%
                                     {-3ex\@plus -1ex \@minus -.2ex}%
                                     {1.5ex \@plus .2ex}%
                                     {\normalfont\normalsize\bfseries}}
\renewcommand\subsubsection{\@startsection{subsubsection}{3}{\z@}%
                                     {-2.5ex\@plus -1ex \@minus -.2ex}%
                                     {1.5ex \@plus .2ex}%
                                     {\normalfont\normalsize\bfseries}}

\def\@runningauthor{}\newcommand{\runningauthor}[1]{\def\runningauthor{#1}}
\def\@runningtitle{}\newcommand{\runningtitle}[1]{\def\runningtitle{#1}}

\renewcommand{\ps@plain}{%
\renewcommand{\@evenhead}{\footnotesize\scshape \hfill\runningauthor\hfill}
\renewcommand{\@oddhead}{\footnotesize\scshape \hfill\runningtitle\hfill}}

\newcommand{\F}{\mathbb{F}}
\newcommand{\x}{\mathbf{x}}

\pagestyle{plain}

\g@addto@macro\bfseries{\boldmath}

\makeatother



\usepackage{amsthm,amsmath,amssymb}
\usepackage{cite}
\usepackage{graphicx}

\usepackage[colorlinks=true,citecolor=black,linkcolor=black,urlcolor=blue]{hyperref}

\theoremstyle{plain}
\newtheorem{theorem}{Theorem}[section]

\newtheorem{lem}[theorem]{Lemma}
\newtheorem{cor}[theorem]{Corollary}
\newtheorem{prop}[theorem]{Proposition}

\theoremstyle{definition}

\newtheorem{example}[theorem]{Example}

\newtheorem{open}[theorem]{Open Problem}

\theoremstyle{remark}
\newtheorem{remark}[theorem]{Remark}








\runningauthor{}

\date{}

\begin{document}

\title{Binary self-orthogonal codes which meet the Griesmer bound or have optimal minimum distances}
\author{Minjia Shi\thanks{smjwcl.good@163.com}, Shitao Li\thanks{lishitao0216@163.com}, Tor Helleseth\thanks{tor.helleseth@uib.no}, Jon-Lark Kim\thanks{jlkim@sogang.ac.kr}
\thanks{Minjia Shi and Shitao Li are with School of Mathematical Sciences, Anhui University, Hefei, China. Tor Helleseth is with Department of Informatics, University of Bergen, Bergen, Norway.
Jon-Lark Kim is with Department of Mathematics, Sogang University, Seoul, South Korea.}}

\date{}
    \maketitle

\begin{abstract}
The purpose of this paper is two-fold. First, we characterize the existence of binary self-orthogonal codes meeting the Griesmer bound by employing Solomon-Stiffler codes and some related residual codes.
Second, using such a characterization, we determine the exact value of $d_{so}(n,7)$ except for five special cases and the exact value of $d_{so}(n,8)$ except for 41 special cases, where $d_{so}(n,k)$ denotes the largest minimum distance among all binary self-orthogonal $[n, k]$ codes. Currently, the exact value of $d_{so}(n,k)$ $(k \le 6)$ was determined by Shi et al. (2022).
In addition, we develop a general method to prove the nonexistence of some binary self-orthogonal codes by considering the residual code of a binary self-orthogonal code.
\end{abstract}
{\bf Keywords:} binary self-orthogonal codes, simplex codes, first order Reed-Muller codes, the Solomon-Stiffler codes, the Belov codes \\
{\bf Mathematics Subject Classification} 94B05 15B05 12E10

\section{Introduction}
Since the beginning of coding theory, the classification of binary self-dual or self-orthogonal (for short, SO) codes has been one of the most active research problems \cite{SO-SD}.
 There are several reasons why they become so interesting and popular.
 First, some interesting SO codes include the binary simplex code $\mathcal{S}_k$ for $k\geq 3$, the extended binary $[8,4,4]$ Hamming code $\widehat{\mathcal{H}}_3$, the extended binary and ternary Golay codes, and the first order binary Reed-Muller codes $\mathcal{R}(1,k)$ for $k\geq 3$.
  Second, SO codes have close connections with other mathematical structures such as combinatorial $t$-design theory \cite{t-design}, group theory \cite{C-1}, Euclidean or Hermitian lattice theory \cite{C-1,B-1,H-1}, and modular forms \cite{Shi-book}. More specifically, many finite groups such as the Mathieu groups appear as the groups of some SO codes. The Conway group is related to the extended binary SO $[24,12,8]$ Golay code. Many new 5-designs were found from SO codes \cite{5-design}.
  Besides their interesting algebraic and combinatorial structures, they have applications in quantum information theory and can be employed to construct quantum codes \cite{Quantum-1,Quantum-2}. However, finding SO codes with good minimum distances are non-trivial. More precisely, let $d_{so}(n,k)$ denote the largest minimum distance among all binary SO $[n,k]$ codes.
Then the determination of $d_{so}(n,k)$ has been a fundamental and difficult problem in coding theory because there are too many binary SO codes as the dimension $k$ increases.

It is well-known that SO codes form an important class of codes which are asymptotically good \cite{SO-good} and have been extensively studied over different alphabets, such as the Kleinian four group (the direct sum of $\F_2$ and $\F_2$) \cite{Annalen}.
Pless \cite{Pless-1} gave a classification of self-dual codes with even $n$ between 2 and 20, and SO $[n,\frac{n-1}{2}]$ codes with odd $n$ between 3 and 19.
In 2006, Bouyukliev et al. \cite{SO-40} completed the characterization of binary optimal SO codes for $n\leq 40$ and $k\leq 10$, and determined the exact value of $d_{so}(n,3)$. Later, Li et al. \cite{Li-Xu-Zhao} partially characterized the exact value of $d_{so}(n,4)$ by systems of linear equations. Kim et al. \cite{Kim-embedding} completely determined the remaining cases and partially characterized the exact value of $d_{so}(n,5)$ by embedding linear codes into SO codes. Recently, Kim and Choi \cite{Kim-SO} constructed many new optimal binary SO codes by considering the self-orthogonality matrix and gave two conjectures on $d_{so}(n,k)$ for $k=5$ or 6. Very recently, Shi et al. \cite{SO-5-6} solved the two conjectures proposed by Kim and Choi, and determined the exact value of $d_{so}(n,k)$ for $k=5$ or 6.
Readers can refer to \cite{SO-JCTA-1,SO-JCTA-2,SO-DDC-32,Shi-1} for the details on the classification of binary SO codes.
Furthermore, Kim and Choi \cite{Kim-SO} also proposed an open problem, namely,
\begin{open}
Find new optimal SO codes with $n\geq 30$ and dimension $k\geq 7$.
\end{open}

On the other hand, another fundamental problem in coding theory is to characterize the existence of Griesmer codes.
Constructing a linear code that meets the Griesmer upper bound has became a popular research problem.
In 1965, Solomon and Stiffler \cite{SS-code} presented a class of Griesmer codes by systematically puncturing certain coordinates of the Simplex codes. Belov \cite{Belov-1} gave a family of linear codes meeting the Griesmer bound after a slight generalization and reformulation. In \cite{Tor-Til-1} and \cite{Tor-Til-2}, Helleseth and van Tilborg constructed some linear codes meeting the Griesmer bound which could not be obtained from the Solomon and Stiffler or the Belov constructions. Later, Helleseth \cite{Tor-1983} gave a new construction of Griesmer codes, which generalized the results of Solomon and Stiffler \cite{SS-code}, Belov \cite{Belov-1}, and Helleseth and van Tilborg \cite{Tor-Til-1}, \cite{Tor-Til-2}.
Another important result is that Helleseth \cite{Tor-4} proved that any binary $[n,k]$ Griesmer code with the minimum distance at most $2^{k-1}$ is either a Solomon-Stiffler code or a Belov code. There is also a characterization of Griesmer codes using minihypers in a finite projective geometry~\cite{Ham}.

However, there has been not much attention on binary SO codes which meet the Griesmer bound, which is mainly due to the fact that there is no general method to construct SO codes meeting the Griesmer bound and that there are not many such codes. In this paper, we characterize binary SO Griesmer codes.

The main contribution of this paper is to characterize optimal binary SO codes including SO Griesmer codes.
We characterize the exact value of $d_{so}(n,k)$ by employing binary SO Griesmer codes.
Our contributions are summarized as follows.

\begin{itemize}
\setlength{\itemsep}{1.5pt}
\setlength{\parsep}{1.5pt}
\setlength{\parskip}{1.5pt}
  \item [(1)] First, we characterize binary SO Griesmer codes based on the Solomon-Stiffler codes, the binary Simplex codes, and the first order binary Reed-Muller codes.
  \item [(2)] Second, we prove that the Belov codes of non-Solomon-Stiffler type are not SO. We also present a sufficient and necessary condition for the Solomon-Stiffler codes to be SO. As a consequence, we determine the exact value of $d_{so}(n,k)$ where $n$ is large relative to $k$ (see Theorems \ref{thm-1} and \ref{thm-3}). In other words, we reduce a problem with an infinite number of cases to a finite number of cases.
  \item [(3)] By considering the residual code of a binary SO code, we develop a general method to prove the nonexistence of some binary SO codes. To be specific, we obtain the residual code of a binary SO code by combining with the self-orthogonality of the SO code in order to  determine the first few rows of the generator matrix of the SO code, and we finally get a contradiction. In addition, we completely solve the remaining case of $k=6$ in \cite{Kim-SO} and \cite{SO-5-6}. We determine the exact value of $d_{so}(n,7)$ except for five special cases and the exact value of $d_{so}(n,8)$ except for 41 special cases.
\end{itemize}

The paper is organized as follows. In Section 2, we give some notations and preliminaries. In Section 3, we present a general construction method for binary SO codes and study the conditions for the existence of binary SO Griesmer codes. In Section 4, we construct binary SO Griesmer codes from Solomon-Stiffler codes and Belov codes. In Section 5, we present an asymptotic result on the largest minimum distance of binary SO codes. In Section 6, we prove the nonexistence of some binary SO codes with dimension 7 by using the residual codes to approach it. In Section 7, we conclude the paper.

\section{Preliminaries}
Let $\F_2$ denote the finite field with $2$ elements. A binary {\em linear $[n,k]$ code} is a $k$-dimensional subspace of $\F_2^n$. For any ${\bf x}=(x_1,x_2,\ldots,x_n)\in \F_2^n$, the {\em support} of ${\bf x}$ is defined as follows:
$$supp({\bf x})=\{i~|~x_i=1\}.$$
The {\em Hamming weight} ${\rm wt}({\bf x})$ of ${\bf x}$ is the number of nonzero components of ${\bf x}$, i.e., ${\rm wt}({\bf x})=|supp({\bf x})|$.
The {\em minimum (Hamming) distance} of a linear code $C$ is defined to be the smallest nonzero Hamming weight of all codewords in $C$.
A binary linear $[n,k,d]$ code $C$ is a binary linear $[n,k]$ code with the minimum distance $d$.
The dual code $C^{\perp}$ of a binary linear code $C$ is defined as
$$C^{\perp}=\{\textbf y\in \F_2^n~|~\langle \textbf x, \textbf y\rangle=0, {\rm for\ all}\ \textbf x\in C \},$$
where $\langle \textbf x, \textbf y\rangle=\sum_{i=1}^n x_iy_i$ for $\textbf x = (x_1,x_2, \ldots, x_n)$ and $\textbf y = (y_1,y_2, \ldots, y_n)\in \F_2^n$.
A binary linear code $C$ is {\em self-orthogonal} (SO) if $C\subseteq C^\perp$. In particular, $C$ is {\em self-dual} if $C=C^\perp.$

There are many bounds on the minimum distance of linear codes, and one of them is the Griesmer bound (see \cite{Griesmer} and \cite[Chap. 2, Section 7]{Huffman}), which is defined on a binary linear $[n,k,d]$ code $C$ as
$$n\geq g(k,d)=\sum_{i=0}^{k-1}\left\lceil \frac{d}{2^i}\right\rceil,$$
 where $\lceil a \rceil$ is the least integer greater than or equal to a real number $a$.
A binary linear $[n, k, d]$ code $C$ is said to be a {\em Griesmer code} if $n$ meets the Griesmer bound, i.e., $n=g(k,d)$.
A binary linear $[n,k,d]$ code $C$ is optimal with respect to the Griesmer bound if $n\geq g(k,d)$ and $n<g(k,d+1)$.

Let $d(n,k)$ denote the largest minimum distance among all binary linear $[n,k]$ codes.
Let $d_{so}(n,k)$ denote the largest minimum distance among all binary SO $[n, k]$ codes.
A binary SO $[n,k]$ code is {\em optimal} if it has the minimum distance $d_{so}(n,k)$.
A vector $\x = (x_1, x_2, \ldots , x_n)\in \F^n_2$
is {\em even-like} if $\sum_{i=1}^nx_i=0$ and is {\em odd-like} otherwise.
An even-like vector $\x = (x_1, x_2, \ldots , x_n)\in \F^n_2$
is {\em doubly-even} if ${\rm wt}({\bf x})$ is a multiple of 4 and is {\em singly-even} otherwise.
A binary linear code is said to be {\em even-like} if it has only even-like codewords, and is said to be {\em odd-like} if it is not even-like. An even-like linear code is said to be {\em doubly-even} if it has only doubly-even codewords, and is said to be {\em singly-even} if it is not doubly-even.

\begin{remark}
Due to the self-orthogonality in SO codes, the value $d_{so}(n,k)$ is always even. In addition, the best possible minimum distance of a binary SO $[n,k]$ code is $2\left\lfloor d(n,k)/2\right\rfloor$, that is to say, $d_{so}(n,k)\leq 2\left\lfloor d(n,k)/2\right\rfloor$.
\end{remark}


Assume that $S_k$ is a matrix whose columns are all nonzero vectors in $\F_2^k$.
It is well-known that $S_k$ generates a binary simplex code, which is an one-weight SO $[2^k-1,k,2^{k-1}]$ Griesmer code for $k\geq 3$ (see \cite{Huffman}).
Consider the following matrix
$$R(1,k)=\left[\begin{array}{cc}
    1 & 1\cdots1  \\
    0 & S_k
  \end{array}\right],$$
which generates the first order Reed-Muller code $\mathcal{R}(1,k)$ (see \cite{Huffman}).
It can be checked that $\mathcal{R}(1,k)$ is a binary SO $[2^k,k+1,2^{k-1}]$ code for $k\geq 3$.

\section{Binary optimal SO codes related with the Simplex codes}

The following lemma shows that we can construct a family of SO codes from an SO code.

\begin{lem}\label{lemma-k-3}
 Let $G^*=[G_0~|~G]$, where $G_0$ generates a binary SO $[n_0,k,d_0]$ code $C_0$ and $G$ is a $k\times n$ matrix (the rank of $G$ can be less than $k$). Then, $G$ generates a binary SO code if and only if $G^*$ generates a binary SO $[n+n_0,k]$ code $C^*$.
In particular, if $G$ generates a binary $[n,k,d]$ code, then $C^*$ has the minimum distance at least $d+ d_0$.
\end{lem}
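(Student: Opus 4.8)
The plan is to reduce everything to the standard matrix identity characterizing self-orthogonality: a binary linear code with generator matrix $M$ (of any rank) is self-orthogonal precisely when $MM^T = 0$, since self-orthogonality of the row space is equivalent to the pairwise orthogonality of the rows of $M$ (write two codewords as $\F_2$-combinations of rows and expand the inner product bilinearly).

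First I would observe that, since $G_0$ has rank $k$, the block matrix $G^* = [G_0 \mid G]$ also has rank $k$ (its column set contains that of $G_0$), so $G^*$ automatically generates a binary linear $[n+n_0,k]$ code $C^*$; thus the only content of the equivalence is the self-orthogonality. Next I would compute
$$G^*(G^*)^T = [G_0 \mid G]\,[G_0 \mid G]^T = G_0 G_0^T + G G^T .$$
Because $C_0$ is self-orthogonal we have $G_0 G_0^T = 0$, hence $G^*(G^*)^T = G G^T$. Consequently $C^*$ is self-orthogonal $\iff G^*(G^*)^T = 0 \iff G G^T = 0 \iff$ the row space of $G$ is self-orthogonal, which is exactly the claimed equivalence.

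For the ``in particular'' statement, assume $G$ generates an $[n,k,d]$ code, so in particular $G$ has rank $k$. Every nonzero codeword of $C^*$ can be written as $(aG_0,\, aG)$ for some nonzero $a \in \F_2^k$; since both $G_0$ and $G$ have rank $k$, both $aG_0$ and $aG$ are nonzero, with ${\rm wt}(aG_0) \ge d_0$ and ${\rm wt}(aG) \ge d$. Hence ${\rm wt}\big((aG_0, aG)\big) = {\rm wt}(aG_0) + {\rm wt}(aG) \ge d_0 + d$, which gives the asserted bound on the minimum distance of $C^*$.

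I do not expect a genuine obstacle here; the only points needing a moment's care are (i) noting that the characterization $MM^T=0$ of self-orthogonality does not require $M$ to have full rank, so it applies to $G$ even when ${\rm rank}(G) < k$, and (ii) in the minimum-distance part, using that $G$ has full rank (guaranteed once $G$ generates an $[n,k,d]$ code) to conclude $aG \neq 0$ whenever $a \neq 0$.
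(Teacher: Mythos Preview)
Your proposal is correct and follows essentially the same approach as the paper: both reduce the self-orthogonality equivalence to the identity $G^*(G^*)^T = G_0G_0^T + GG^T = GG^T$ and then read off the minimum-distance bound from the decomposition $(aG_0,aG)$. Your version is simply more explicit about the rank-$k$ observation for $G^*$ and the need for $G$ to have full rank in the distance argument, points the paper leaves implicit.
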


\begin{proof}
Since $C_0$ is a binary SO $[n_0,k,d_0]$ code, $G_0G_0^T=O_{k \times k}$. It turns out that
$$G^*{G^*}^T=GG^T.$$
Therefore, $G$ generates a binary SO code if and only if $C^*$ is a binary SO $[n+n_0,k]$ code. In particular, suppose that $G$ generates a binary $[n,k,d]$ code. Since $C_0$ has the minimum distance $d_0$, $C^*$ has the minimum distance at least $d+d_0$. This completes the proof.
\end{proof}

\begin{remark} \label{rem-short}
If we take some special cases, then we have the following results.
\begin{itemize}
\setlength{\itemsep}{1.5pt}
\setlength{\parsep}{1.5pt}
\setlength{\parskip}{1.5pt}
  \item [(1)] If $G_0=[\underbrace{S_k~|~\cdots~|~S_k}_m]$, then Lemma \ref{lemma-k-3} is the same as \cite[Lemma 3.1]{SO-5-6}.
  \item [(2)] Let $C$ be a binary SO $[n,k,d]$ code with $d\leq m2^{k-1}$ for some $m \ge 1$ and generator matrix $G$. If $G_0=[\underbrace{R(1,k)~|~\cdots~|~R(1,k)}_m]$, where $R(1,k)=\left[\begin{array}{cc}
    1 & 1\cdots1  \\
    0 & S_k
  \end{array}\right]$, then it can be checked that the following matrix
$$G'=\left[\begin{array}{c|c|c|c}
             R(1,k)&\cdots&R(1,k) & \begin{array}{ccc}
     0\ldots0 \\
    G
  \end{array}
           \end{array}
\right]$$
generates a binary SO $[m2^k+n,k+1,m2^{k-1}+d]$ code.
\end{itemize}
\end{remark}



\begin{lem}\label{lem-griesmer}
Assume that $S_k$ is a matrix whose columns are all nonzero vectors in $\F_2^k$ for $k\geq 3$. Let $C$ be a binary $[n,k,d]$ linear code with generator matrix $G$.
Then $C$ is an optimal binary $[n,k,d]$ linear code with respect to the Griesmer bound if and only if the following matrix
$$G'=[\underbrace{S_k~|~\cdots~|~S_k}_m|~G]$$
 generates an optimal binary $[N=m(2^k -1) +n, K=k, D=2^{k-1} m + d]$ code $C'$ with respect to the Griesmer bound for any $m \ge 0$.
In particular, $C$ is a Griesmer code if and only if $C'$ is a Griesmer code.
\end{lem}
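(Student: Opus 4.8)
The plan is to reduce everything to the Griesmer function $g(k,d)=\sum_{i=0}^{k-1}\lceil d/2^i\rceil$ and exploit the fact that appending $m$ copies of the simplex generator $S_k$ adds exactly $m(2^k-1)$ to the length and exactly $m2^{k-1}$ to the minimum distance. First I would record the elementary identity $g(k, d + m2^{k-1}) = g(k,d) + m(2^k-1)$ for all $m\ge 0$: indeed $\lceil (d+m2^{k-1})/2^i\rceil = \lceil d/2^i\rceil + m2^{k-1-i}$ for $0\le i\le k-1$, and summing over $i$ gives $\sum_{i=0}^{k-1} m2^{k-1-i} = m(2^k-1)$. This is the only real computation and it is routine; it is the backbone of the equivalence.

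Next I would verify the two claims about the parameters of $C'$. The length $N = m(2^k-1)+n$ is immediate. For the minimum distance, note each copy of $S_k$ contributes weight exactly $2^{k-1}$ to every nonzero codeword (the simplex code is one-weight), so a codeword of $C'$ corresponding to a nonzero message $\mathbf{u}$ has weight $m2^{k-1} + \mathrm{wt}(\mathbf{u}G)$; taking the minimum over nonzero $\mathbf{u}$ gives $D = m2^{k-1} + d$ exactly, and the dimension is unchanged since the $S_k$ blocks already have full row rank $k$ (here $k\ge 3$ guarantees $2^k-1\ge k$, so $S_k$ indeed has rank $k$). This also handles the "in particular" sentence: $C$ is Griesmer, i.e. $n = g(k,d)$, if and only if $N = g(k,d) + m(2^k-1) = g(k, d+m2^{k-1}) = g(K,D)$, which is exactly the statement that $C'$ is Griesmer, using the identity from the first paragraph.

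It then remains to upgrade "Griesmer $\Leftrightarrow$ Griesmer" to "optimal w.r.t. Griesmer $\Leftrightarrow$ optimal w.r.t. Griesmer." Recall $C$ is optimal with respect to the Griesmer bound means $g(k,d)\le n < g(k,d+1)$, and similarly for $C'$ with parameters $(N,K,D)$. Using $N = n + m(2^k-1)$ and $D = d + m2^{k-1}$, the condition $g(K,D)\le N$ is equivalent to $g(k,d)\le n$ by the identity applied to $d$, and the condition $N < g(K, D+1)$ is equivalent to $n + m(2^k-1) < g(k, d+1+m2^{k-1}) = g(k,d+1) + m(2^k-1)$, i.e. to $n < g(k,d+1)$. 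Hence the optimality condition for $C'$ is literally the same pair of inequalities as the optimality condition for $C$, proving the "if and only if." The main (and only) obstacle worth flagging is getting the ceiling identity and the one-weight property of $S_k$ stated cleanly; everything downstream is a direct substitution, so I expect no genuine difficulty, and the $m=0$ case is included trivially since then $C'=C$.
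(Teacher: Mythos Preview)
Your proposal is correct and follows essentially the same route as the paper: both arguments hinge on the identity $g(k,d+m2^{k-1})=g(k,d)+m(2^k-1)$ (obtained from $\lceil (d+m2^{k-1})/2^i\rceil=\lceil d/2^i\rceil+m2^{k-1-i}$ for $0\le i\le k-1$) together with the one-weight property of the simplex code to pin down $D=m2^{k-1}+d$ exactly, after which the optimality inequalities $g(k,d)\le n<g(k,d+1)$ and $g(K,D)\le N<g(K,D+1)$ are seen to be equivalent. The only cosmetic difference is that the paper invokes Lemma~\ref{lemma-k-3} for the parameters of $C'$ and expands the Griesmer sum term by term, whereas you argue the minimum distance directly and package the ceiling computation as a single identity upfront.
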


\begin{proof}
Note that the parameters of $C'$ are justified in Lemma~\ref{lemma-k-3}.
Suppose that $C$ is optimal with respect to the Griesmer bound. Then it satisfies $n\geq \sum_{i=0}^{k-1} \left \lceil \frac{d}{2^i} \right \rceil$ and $n< \sum_{i=0}^{k-1} \left \lceil \frac{d+1}{2^i} \right \rceil$. Hence
\begin{eqnarray*}
  \sum_{i=0}^{K-1}  \left \lceil \frac{D}{2^i} \right \rceil  & = &  \sum_{i=0}^{k-1}  \left \lceil \frac{D}{2^i} \right \rceil \\
   & = & \sum_{i=0}^{k-1}  \left \lceil \frac{2^{k-1}m +d}{2^i} \right \rceil \\
   & = & (2^{k-1}m +d ) + (2^{k-2}m + \left\lceil \frac{d}{2} \right\rceil) + \cdots + (m+\left\lceil \frac{d}{2^{k-1}} \right\rceil) \\
   & = & (2^k -1)m + \sum_{i=0}^{k-1} \left \lceil \frac{d}{2^i} \right \rceil.
\end{eqnarray*}
Since $n\geq \sum_{i=0}^{k-1} \left \lceil \frac{d}{2^i} \right \rceil$ and $n< \sum_{i=0}^{k-1} \left \lceil \frac{d+1}{2^i} \right \rceil$, we have
$N\geq \sum_{i=0}^{k-1} \left \lceil \frac{D}{2^i} \right \rceil$ and $N< \sum_{i=0}^{k-1} \left \lceil \frac{D+1}{2^i} \right \rceil$.
Therefore $C'$ is an optimal binary $[N,K,D]$ code with respect to the Griesmer bound.

Conversely, if $C'$ is optimal with respect to the Griesmer bound, i.e.,
\begin{center}
$N\geq \sum_{i=0}^{k-1} \left \lceil \frac{D}{2^i} \right \rceil$ and $N< \sum_{i=0}^{k-1} \left \lceil \frac{D+1}{2^i} \right \rceil$,
 \end{center}
 then it also implies that
 $n\geq \sum_{i=0}^{k-1} \left \lceil \frac{d}{2^i} \right \rceil$ and $n< \sum_{i=0}^{k-1} \left \lceil \frac{d+1}{2^i} \right \rceil$ by a similar argument as above.
In particular, it can be seen that $n=\sum_{i=0}^{k-1} \left \lceil \frac{d}{2^i} \right \rceil$ if and only if
$N= \sum_{i=0}^{k-1} \left \lceil \frac{D}{2^i} \right \rceil$. This completes the proof.
\end{proof}

By Lemma~\ref{lemma-k-3}, Remark \ref{rem-short}, and Lemma \ref{lem-griesmer}, we have the following useful theorem for the SO codes satisfying the Griesmer bound.

\begin{theorem}\label{thm-large-griesmer}
Assume that $S_k$ is a matrix whose columns are all nonzero vectors in $\F_2^k$ for $k\geq 3$. Let $C$ be a binary $[n,k,d]$ linear code with generator matrix $G$.
Then $C$ is an optimal binary SO $[n,k,d]$ code with respect to the Griesmer bound for given $n$ and $k$ if and only if the following matrix
$$G'=[\underbrace{S_k~|~\cdots~|~S_k}_m~|~G]$$
 generates an optimal binary SO $[N=m(2^k -1) +n, K=k, D=2^{k-1} m + d]$ code with respect to the Griesmer bound for any $m \ge 0$. In particular, $C$ is a SO Griesmer code if and only if $C'$ is a SO Griesmer code.
\end{theorem}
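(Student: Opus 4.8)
The plan is to derive Theorem \ref{thm-large-griesmer} by combining Lemma \ref{lemma-k-3} with Lemma \ref{lem-griesmer}: the first lemma handles self-orthogonality together with the parameters of $C'$, and the second handles optimality with respect to the Griesmer bound (and, being stated with no self-orthogonality hypothesis, applies verbatim). The two ingredients are essentially orthogonal, so the argument amounts to checking that both lemmas apply in the present setting and then conjoining their conclusions.

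First I would record the one auxiliary fact needed to land in the situation of Lemma \ref{lemma-k-3}: for $k\ge 3$ the matrix $G_0=[\,S_k\mid\cdots\mid S_k\,]$ with $m$ blocks generates a binary SO code. Indeed $S_k$ generates the simplex code $\mathcal{S}_k$, a one-weight $[2^k-1,k,2^{k-1}]$ code which is self-orthogonal for $k\ge 3$, so $S_kS_k^T=O_{k\times k}$ and hence $G_0G_0^T=\sum_{i=1}^m S_kS_k^T=O_{k\times k}$; thus $C_0$ is a binary SO $[\,m(2^k-1),\,k,\,m2^{k-1}\,]$ code (the $m$-fold replicated simplex code). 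This is exactly the hypothesis of Lemma \ref{lemma-k-3} with $n_0=m(2^k-1)$ and $d_0=m2^{k-1}$, i.e. the specialization noted in Remark \ref{rem-short}(1).

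Next I would read off the two equivalences for the code $C'$ generated by $G'=[\,G_0\mid G\,]$. Since $S_k$ already has rank $k$, $G'$ has rank $k$, so $\dim C'=k$; and for every nonzero $v\in\F_2^k$ one has ${\rm wt}(vG')=m\,{\rm wt}(vS_k)+{\rm wt}(vG)=m2^{k-1}+{\rm wt}(vG)$, which is at least $m2^{k-1}+d$, with equality attained at a minimum-weight codeword of $C$ (here $vG\ne 0$ because $G$ has rank $k$). Hence $C'$ has parameters exactly $[\,N,\,K,\,D\,]$ with $N=m(2^k-1)+n$, $K=k$, $D=m2^{k-1}+d$. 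Lemma \ref{lemma-k-3} now gives that $G$ generates an SO code if and only if $G'$ does, i.e. $C$ is SO $\iff$ $C'$ is SO. Independently, Lemma \ref{lem-griesmer} gives that $C$ is optimal with respect to the Griesmer bound if and only if $C'$ is, and in particular that $C$ is a Griesmer code if and only if $C'$ is.

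Finally I would assemble these statements. The property ``$C$ is an optimal binary SO $[n,k,d]$ code with respect to the Griesmer bound'' means precisely ``$C$ is self-orthogonal and $C$ is optimal with respect to the Griesmer bound''; by the two equivalences just established this is equivalent to ``$C'$ is self-orthogonal and $C'$ is optimal with respect to the Griesmer bound'', that is, ``$C'$ is an optimal binary SO $[N,K,D]$ code with respect to the Griesmer bound'', and the ``in particular'' clause about SO Griesmer codes is the analogous conjunction using the Griesmer-code half of Lemma \ref{lem-griesmer}. Since both lemmas already allow an arbitrary $m\ge 0$ (the case $m=0$ being the trivial $C'=C$), no induction on $m$ is needed. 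I do not expect a genuine obstacle here; the only points requiring a moment's care are the hypothesis $k\ge 3$ (which is exactly what makes $G_0$ self-orthogonal) and checking that $C'$ has minimum distance equal to $D$ rather than merely bounded below by $D$.
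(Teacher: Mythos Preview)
Your proof is correct and follows exactly the approach indicated by the paper, which simply states that the theorem follows by combining Lemma~\ref{lemma-k-3}, Remark~\ref{rem-short}, and Lemma~\ref{lem-griesmer}. You have supplied the natural details (self-orthogonality of $G_0$ for $k\ge 3$, the exact parameters of $C'$, and the conjunction of the two equivalences), which the paper leaves implicit.
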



\begin{example}
We note that there is no SO [22, 5, 10] code by~\cite[Theorem 4.3]{SO-5-6}.
However, there exists a binary SO $[53, 5, 26]$ code (see \cite{Kim-SO}). Once can explicitly check that $d=26$ is the largest minimum distance satisfying the Griesmer bound for $n=53$ and $k=5$. Therefore, by adding $\ell-1$ copies of $S_5$ to this code, we see that there exists a binary SO $[N=31 (\ell-1) + 53, K=5, D=16 (\ell-1) + 26]$ code for $\ell \ge 2$ where $D=16\ell+10$ is the largest minimum distance satisfying the Griesmer bound because $D=16 \ell +11$ implies $N=\sum_{i=0}^4 \left \lceil \frac{16 \ell+11}{2^i} \right \rceil = 31\ell +23$, which is a contradiction by the Griesmer bound.
\end{example}

\begin{remark}
Similarly, it can be checked that Theorem \ref{thm-large-griesmer} holds for (2) of Remark \ref{rem-short}.
\end{remark}

Let ${\bf c}\in C$ be a codeword of weight $\omega.$
Let $Res(C,{\bf c})$ denote the residual code of $C$ with respect to ${\bf c}$, which is the code of length $n -\omega$ punctured on all the coordinates of the support set of ${\bf c}$.
 Sometimes we write  instead $Res(C,w)$ when ${\bf c}$ has weight $w$.
The following is a useful lemma for the residual code of a binary SO code with parameters $[n,k,d]$ for $d \equiv 2 \pmod{4}$, which also shown in \cite{Li-Xu-Zhao}.
For completeness, let us prove it.

\begin{lem} \label{SO-Lemma}
Let $C$ be a binary SO code with parameters $[n,k,d]$ for $d \equiv 2 \pmod{4}$. Then the residual code of $C$ with respect to a codeword of minimum weight $d$ has parameters
$\left[n-d,k-1,\frac{d}{2}+1\right]$.
\end{lem}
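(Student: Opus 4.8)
The plan is to fix a minimum-weight codeword $\mathbf{c}\in C$ with $\mathrm{wt}(\mathbf{c})=d$ and analyze the residual code $D:=Res(C,\mathbf{c})$, which by definition has length $n-d$. First I would pin down the dimension. Because $d\equiv 2\pmod 4$, the all-ones restriction to $supp(\mathbf{c})$ behaves like a nonzero functional on $C$: for any $\mathbf{x}\in C$, self-orthogonality gives $\langle\mathbf{x},\mathbf{c}\rangle=0$, i.e. $\mathbf{x}$ has even weight on $supp(\mathbf{c})$, so no codeword meets $\mathbf{c}$ in all $d$ coordinates (that would force $d$ even but we can do better: we will see the intersection weight is always $\le d-2<d$ since $d$ itself is not of the right parity to be a full overlap anyway — more carefully, the only codeword supported \emph{within} $supp(\mathbf c)$ of weight $d$ would be $\mathbf c$, and $\mathbf 0$ is the only one of weight $0$). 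Standard residual-code theory (cf. \cite{Li-Xu-Zhao}, \cite{Huffman}) then gives $\dim D=k-1$: the kernel of the puncturing-and-restriction map consists exactly of the codewords supported on $supp(\mathbf{c})$, and by the minimality of $d$ together with the parity constraint this kernel is $\{\mathbf 0,\mathbf c\}$, one-dimensional.

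Next I would bound the minimum distance of $D$ from below by $\frac d2+1$. Take any nonzero $\bar{\mathbf x}\in D$, lifted from some $\mathbf x\in C$ with $\mathbf x\notin\{\mathbf 0,\mathbf c\}$; set $a=|supp(\mathbf x)\cap supp(\mathbf c)|$ and $b=\mathrm{wt}(\bar{\mathbf x})=\mathrm{wt}(\mathbf x)-a$. Self-orthogonality of $C$ applied to $\mathbf x$ and to $\mathbf x+\mathbf c$ forces $\langle\mathbf x,\mathbf c\rangle=0$, so $a$ is even, and $\mathrm{wt}(\mathbf x)\equiv 0\pmod 4$ if $C$ is doubly-even — but $C$ need not be doubly-even, so instead I use the two weight inequalities $\mathrm{wt}(\mathbf x)\ge d$ and $\mathrm{wt}(\mathbf x+\mathbf c)=d+b-a\ge d$, the latter giving $a\le b$. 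Combining $b=\mathrm{wt}(\mathbf x)-a\ge d-a\ge d-b$ yields $b\ge d/2$. To upgrade $b\ge d/2$ to $b\ge d/2+1$ I use the parity of $a$: since $a$ is even and $\mathrm{wt}(\mathbf x)\ge d$ with $d\equiv 2\pmod 4$, a short case analysis on $a$ modulo $4$ (equivalently, tracking $\mathrm{wt}(\mathbf x)\bmod 2$ via $\langle \mathbf x,\mathbf x\rangle=0$, which forces $\mathrm{wt}(\mathbf x)$ even) rules out $b=d/2$ exactly — if $b=d/2$ then both $\mathrm{wt}(\mathbf x)=d$ and $\mathrm{wt}(\mathbf x+\mathbf c)=d$, so $a=d/2$, but $a$ must be even while $d/2$ is odd, a contradiction. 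Hence $b\ge d/2+1$.

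Finally I would confirm the upper bound $d(D)\le\frac d2+1$, so that the minimum distance is exactly $\frac d2+1$. This follows from the Griesmer bound applied to the $[n-d,k-1]$ code $D$ together with the fact that $C$ being optimal/Griesmer forces $D$ to be Griesmer as well — more concretely, one checks $g(k,d)=d+g(k-1,\lceil d/2\rceil)$ and, since $d\equiv2\pmod4$ makes $\lceil d/2\rceil=d/2$ odd, $g(k-1,d/2)=g(k-1,d/2+1)$, so a code of length $n-d=g(k-1,d/2)=g(k-1,d/2+1)$ cannot have minimum distance exceeding $d/2+1$. Alternatively one exhibits a weight-$(d/2+1)$ word in $D$ directly by taking $\mathbf x\in C$ of weight $d$ with $a=d/2-1$ overlap, whose existence is guaranteed by the residual dimension count. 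The main obstacle is the distance lower bound: getting from the easy $b\ge d/2$ to the sharp $b\ge d/2+1$ genuinely requires the hypothesis $d\equiv 2\pmod 4$ (through the parity of the overlap $a$), and one must be careful that $C$ is only self-orthogonal, not necessarily doubly-even, so the cleaner doubly-even parity arguments are unavailable and the bookkeeping must go through $\langle\mathbf x,\mathbf x\rangle=0$ and $\langle\mathbf x,\mathbf c\rangle=0$ alone.
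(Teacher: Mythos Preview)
Your argument for the lower bound $d(D)\ge d/2+1$ is correct and is exactly the paper's approach: from $\mathrm{wt}(\mathbf x)\ge d$ and $\mathrm{wt}(\mathbf x+\mathbf c)\ge d$ one gets $b\ge d/2$, and equality would force the overlap $a=d/2$, which is odd and hence violates $\langle\mathbf x,\mathbf c\rangle=0$. The dimension discussion is likewise standard.

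Where you go wrong is the final paragraph. You try to prove the \emph{upper} bound $d(D)\le d/2+1$, but the lemma's hypotheses do not include that $C$ is a Griesmer code (or optimal), so the identity $n-d=g(k-1,d/2+1)$ simply need not hold, and your alternative of ``exhibiting $\mathbf x\in C$ of weight $d$ with overlap $a=d/2-1$'' is not guaranteed by any dimension count. In fact the upper bound is false in general: take any binary SO code with $d\equiv 2\pmod 4$ and append many zero coordinates; the residual code also picks up those zero coordinates and its minimum distance can exceed $d/2+1$.

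The paper does not prove (or need) this upper bound either: its proof ends at ``minimum distance at least $d/2+1$'', and every subsequent application of the lemma uses only that inequality (when an exact value is needed, as in the $[60,6,30]$ residual of a putative $[118,7,58]$ code, it is obtained by combining the lemma's lower bound with the Griesmer bound for the specific length). So the statement should be read as ``$d(D)\ge d/2+1$'', and once you drop the last paragraph your proof is complete and matches the paper's.
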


\begin{proof}
In the general case, the residual code of $C$ with respect to a codeword of even weight $d$ has parameters $\left[n-d,k-1,\frac{d}{2}\right]$.
Suppose $C$ is a binary SO $[n,k,d]$ code with $d \equiv 2 \pmod{4}$. Let $C_0=Res(C,d)$ denote the residual code of $C$ with respect to a minimum codeword ${\bf c}_1 \in C$ of weight $d$. Then $C_0$ has parameters $[n-d, k-1,\frac{d}{2}]$. Let ${\bf c}_0 \in C_0$ be a codeword of weight $\frac{d}{2}$ that is the restriction of a codeword ${\bf c}_2 \in C$.
Then since both ${\bf c}_2$ and ${\bf c}_1 + {\bf c}_2$ have weight at least $d$, the inner product  $\langle {\bf c}_1 , {\bf c}_2\rangle = \frac{d}{2} \equiv 1 \pmod{2}$. Since, the code $C$ is SO this is impossible and therefore a codeword of weight $\frac{d}{2}$ in $C_0 $ is not possible, implying that $C_0$ has the minimum distance at least $\frac{d}{2}+1$.
\end{proof}



\begin{prop}\label{prop-D-double-even}
Suppose that $k\geq 3$. Then a binary $[N,k,D]$ Griesmer code is SO if and only if $C$ is a doubly-even code.
\end{prop}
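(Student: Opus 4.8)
The plan is to dispatch the ``if'' direction first (it needs nothing about the Griesmer bound) and then to concentrate on the converse. For the ``if'' direction, suppose the code, call it $C$, is doubly-even. For any $\mathbf{x},\mathbf{y}\in C$,
\[
|supp(\mathbf{x})\cap supp(\mathbf{y})|=\tfrac12\bigl({\rm wt}(\mathbf{x})+{\rm wt}(\mathbf{y})-{\rm wt}(\mathbf{x}+\mathbf{y})\bigr),
\]
and this is an even integer because ${\rm wt}(\mathbf{x}),{\rm wt}(\mathbf{y}),{\rm wt}(\mathbf{x}+\mathbf{y})$ are all divisible by $4$; hence $\langle\mathbf{x},\mathbf{y}\rangle=0$ over $\F_2$ and $C\subseteq C^{\perp}$.

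For the converse, let $C$ be a binary SO $[N,k,D]$ Griesmer code with $k\ge 3$. The first step is to show $D\equiv 0\pmod 4$. Since $C$ is SO it is even-like, so $D$ is even. If $D\equiv 2\pmod 4$, Lemma~\ref{SO-Lemma} gives that $Res(C,D)$ has parameters $[\,N-D,\,k-1,\,\tfrac D2+1\,]$, so the Griesmer bound applied to it forces $N-D\ge g\bigl(k-1,\tfrac D2+1\bigr)$. On the other hand, from $N=g(k,D)$,
\[
N-D=g(k,D)-D=\sum_{i=1}^{k-1}\Bigl\lceil\tfrac{D}{2^{i}}\Bigr\rceil=\sum_{i=0}^{k-2}\Bigl\lceil\tfrac{D/2}{2^{i}}\Bigr\rceil=g\Bigl(k-1,\tfrac D2\Bigr),
\]
while $g\bigl(k-1,\tfrac D2+1\bigr)-g\bigl(k-1,\tfrac D2\bigr)\ge\bigl(\tfrac D2+1\bigr)-\tfrac D2=1$ (already from the $i=0$ summand). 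This gives $N-D<g\bigl(k-1,\tfrac D2+1\bigr)\le N-D$, a contradiction, so $D\equiv 0\pmod 4$; in particular all minimum-weight codewords of $C$ are doubly-even.

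It remains to rule out codewords of weight $\equiv 2\pmod 4$, and here I would use the structure theory of binary Griesmer codes. If $D>2^{k-1}$, split off a simplex block: after permuting coordinates, $C=[\,S_k\mid C'\,]$ with $C'$ a binary $[\,N-(2^{k}-1),\,k,\,D-2^{k-1}\,]$ Griesmer code, and since $S_k$ is SO and doubly-even for $k\ge 3$, Lemma~\ref{lemma-k-3} together with the relation ${\rm wt}=2^{k-1}+{\rm wt}'$ on each nonzero coordinate class shows that $C$ is SO (resp.\ doubly-even) iff $C'$ is, with $D-2^{k-1}\equiv D\equiv 0\pmod 4$; iterating reduces to the case $D\le 2^{k-1}$. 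For $D\le 2^{k-1}$, Helleseth's classification~\cite{Tor-4} says $C$ is a Solomon--Stiffler code or a Belov code. For a Solomon--Stiffler code obtained by deleting from $S_k$ the columns in $\bigcup_{i=1}^{p}(P_i\setminus\{0\})$ with $\dim P_i=u_i$, every nonzero weight has the form $2^{k-1}-\sum_{i\in T}2^{u_i-1}$, and the length/distance identities (equivalently $p=g(k,D)-2D+1$ and $\sum_i 2^{u_i-1}=2^{k-1}-D$) pin the multiset $\{u_i-1\}$ down to the positions of the $1$-digits of $2^{k-1}-D$ in base $2$; since $4\mid D$ and $4\mid 2^{k-1}$ (because $k\ge 3$), the lowest such position is $\ge 2$, so all $u_i\ge 3$ and every weight is a multiple of $4$. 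The Belov codes not of Solomon--Stiffler type carry a codeword of odd weight or of weight $\equiv 2\pmod 4$ and hence are not SO, which completes the forward implication.

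The step I expect to be the real obstacle is the $D>2^{k-1}$ reduction: establishing that a binary Griesmer code of minimum distance exceeding $2^{k-1}$ actually contains a simplex $S_k$ among its coordinates (equivalently, that a generator matrix uses every nonzero vector of $\F_2^k$ as a column). Averaging ${\rm wt}(v)$ over the $v$ with $\langle v,v_{0}\rangle=1$ only produces $N\ge 2D$, which is not always violated, so one needs a sharper combinatorial argument or an appeal to Helleseth's general construction~\cite{Tor-1983} or to the minihyper viewpoint~\cite{Ham}. A fallback closer to the present section is to iterate the Step-1 idea on the index-$2$ doubly-even subcode $C_0\subseteq C$ (again SO, of dimension $k-1$ and minimum distance $D$); the snag is that $C_0$ need not be length-optimal, so one must first remove its redundant coordinates. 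A secondary point is that the clean finish above leans on the later-section characterization of SO Solomon--Stiffler and Belov codes, which ideally should be made self-contained here.
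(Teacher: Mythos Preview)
Your ``if'' direction and your Step~1 (showing $4\mid D$ via Lemma~\ref{SO-Lemma} and a Griesmer-bound comparison) are correct and match the paper's argument exactly.

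The gap is in Step~2. You attempt to reach ``all weights are $\equiv 0\pmod 4$'' by invoking the Helleseth classification of Griesmer codes with $D\le 2^{k-1}$ and a simplex-peeling reduction for $D>2^{k-1}$, and you yourself flag both of these as incomplete: the reduction (that every Griesmer code with $D>2^{k-1}$ contains a full $S_k$ among its columns) is not proved, and the Solomon--Stiffler/Belov endgame leans on results from later sections. Even the weight-spectrum claim for Solomon--Stiffler codes (``every nonzero weight has the form $2^{k-1}-\sum_{i\in T}2^{u_i-1}$'') needs the subspaces $P_i$ to be in general position, which is an extra hypothesis you have not secured.

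None of this machinery is needed. The paper's proof finishes Step~2 in one line using a standard structural fact about Griesmer codes: any binary code meeting the Griesmer bound is spanned by its minimum-weight codewords (\cite[Theorem~2.7.6]{Huffman}). Hence $C$ has a generator matrix whose rows all have weight $D\equiv 0\pmod 4$. Since $C$ is SO, any two rows have even intersection size, and then the identity ${\rm wt}(\mathbf{x}+\mathbf{y})={\rm wt}(\mathbf{x})+{\rm wt}(\mathbf{y})-2|supp(\mathbf{x})\cap supp(\mathbf{y})|$ shows by induction on the number of rows summed that every codeword has weight divisible by $4$. This replaces your entire structure-theoretic detour.
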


\begin{proof}
Let $C$ be a binary $[N,k,D]$ SO Griesmer code for $k\geq 3$. Then $N=\sum_{i=0}^{k-1}\left\lceil \frac{D}{2^i}\right\rceil$.
It follows from self-orthogonality that $D$ is even.
Since $\left\lceil \frac{D+2}{2^0}\right\rceil=\left\lceil \frac{D}{2^0}\right\rceil+2$ and $\left\lceil \frac{D+2}{2}\right\rceil=\left\lceil \frac{D}{2}\right\rceil+1$,
 $$\sum_{i=0}^{k-1}\left\lceil \frac{D+2}{2^i}\right\rceil\geq
 \sum_{i=0}^{k-1}\left\lceil \frac{D}{2^i}\right\rceil+3=N+3.$$

If $D$ is singly-even, then we assume that $D=2D'$, where $D'$ is odd. By Lemma \ref{SO-Lemma}, there is an even-like binary  $[N-2D',k-1,D'+1]$ linear code $C'$. Then
\begin{align*}
  \sum_{i=0}^{k-2}\left\lceil \frac{D'+1}{2^i}\right\rceil & = \sum_{i=0}^{k-2}\left\lceil \frac{D+2}{2^{i+1}}\right\rceil\\
   & =\sum_{i=1}^{k-1}\left\lceil \frac{D+2}{2^{i}}\right\rceil\\
   &=\sum_{i=0}^{k-1}\left\lceil \frac{D+2}{2^{i}}\right\rceil-D-2\\
   &\geq N+3-D-2\\
   &=N-2D'+1.
\end{align*}
So the code $C'$ does not satisfy the Griesmer bound, which is a contradiction. This implies that $D$ is doubly-even.
Since $C$ is a Griesmer code, $C$ has generator matrix consisting of doubly-even minimum weights~\cite[Theorem 2.7.6]{Huffman}. Therefore, $C$ is doubly-even since $C$ is SO.

Conversely, if $C$ is a doubly-even code, then it is well known~\cite[Theorem 1.4.8 (ii)]{Huffman} that $C$ is SO. This completes the proof.
\end{proof}

\section{Binary SO Griesmer codes from Solomon-Stiffler codes}


Assume that $S_k$ is a $k\times (2^k-1)$ matrix whose columns are made up of all nonzero vectors of $\F_2^k$.
Let $[U\setminus V]$ denote a matrix whose columns consist of the elements of $U$ which do not contain the elements of $V$, where $U$ and $V$ are subspaces of $\F_2^k$.
Let $sS_k$ denote $s$ copies of a matrix $S_k$, in other words,
$$sS_k=[\underbrace{S_k~|~S_k~|~\cdots~|~S_k}_s].$$
Let $G$ is a $k\times n$ submatrix of rank $k$ of $sS_k$. Then the matrix $G$ generates a binary $[n,k,d]$ linear code $C$, that is,
$$C=\{{\bf x}G~|~{\rm x}\in \F_2^k\}.$$
Hence, there exists a $k\times (s(2^k-1)-n)$ matrix $G'$ such that
$$sS_k=[G~|~G'].$$
The rank of $G'$ can be less than $k$. Then the matrix $G'$ generates a code $C'$, which is called the anticode of $C$, namely,
$$C'=\{{\bf x}G'~|~{\bf x}\in \F_2^k\}. $$
For ${\bf c}\in C$, there exists ${\bf x}\in \F_2^k$ such that ${\bf c}={\bf x}G$. We define ${\bf c'}={\bf x}G'$, then ${\bf c'}\in C'$. Let ${\bf c_1}=({\bf c},{\bf c'})$. It turns out that
\begin{align}\label{eq-1}
  wt({\bf c_1})= & wt({\bf c})+wt({\bf c'})=s2^{k-1},~{\bf c}\in C\backslash \{{\bf 0}\}.
\end{align}

We define
\begin{center}
$\mathcal{U}(k,u)=\left\{U~|~U=\widehat{U}\backslash \{{\bf 0}\},~\widehat{U}~{\rm is~a}~u\right.$-dimensional subspace of $\left.\F_2^k\right\}.$
\end{center}

Given $k$ and $d$, we define $s=\left\lceil\frac{d}{2^{k-1}}\right\rceil$.
Let $u_1,u_2,\ldots,u_p$ be positive integers such that $k>u_1> u_2>\cdots >u_p\geq 1$
 and
 $$s2^{k-1}-d=\sum_{i=1}^p2^{u_i-1}.$$

\subsection*{Binary Solomon-Stiffler codes}

In 1965, Solomon and Stiffler~\cite{SS-code} found a family of Griesmer codes by specifying $G'$ as follows:
\begin{align*}
  G'=\left[U_1~|~U_2~|~\cdots~|~U_p\right]\subset sS_k,
\end{align*}
where $U_i\in \mathcal{U}(k,u_i)$ and $k>u_1> u_2>\cdots >u_p\geq 1$. 
Note that the matrix $G'$ generates a binary code $C'$ with the maximum weight $\sum_{i=1}^p 2^{u_i-1}$. By (\ref{eq-1}) and the Griesmer bound,
the code $C$ with the anticode $C'$ is a binary linear code with the parameters
\begin{align}\label{eq-2}
\left[s(2^k-1)-\sum_{i=1}^p(2^{u_i}-1),k,s2^{k-1}-\sum_{i=1}^p2^{u_i-1}\right].
\end{align}
Then $C$ is called a binary Solomon-Stiffler code~\cite{SS-code}.
In 1974, Belov \cite{Belov-1} showed that the Solomon-Stiffler code is a Griesmer code if $\sum_{i=1}^{\min\{s+1, p\}}u_i\leq sk$. That is to say, if $\sum_{i=1}^{\min\{s+1, p\}}u_i\leq sk$, then there is a binary Solomon-Stiffler code with the parameters (\ref{eq-2}).
Note that such a code is projective when $s=1$.

\subsection*{Binary Belov codes}
We define $\Gamma(u)$~\cite{Belov-1} such that
$$\Gamma(u)=\left\{T~|~|T|=u+1,~\sum_{{\bf t}\in T}{\bf t}={\bf 0},~{\rm Rank}(T)=u\right\}.$$
In 1974, Belov \cite{Belov-1} constructed a new family of Griesmer codes by specifying $G'$ as follows
\begin{align*}
  G'=[U_1~|~U_2~|~\cdots~|~U_t~|~S\backslash T~|~R]\subset sS_k,
\end{align*}
where $k>u_1> u_2>\cdots >u_t>u\geq 3$, $U_i\in \mathcal{U}(k,u_i)$, $S\in \mathcal{U}(k,u+1)$, $T\subseteq S$, $T\in \Gamma(u+1)$, $R\in \mathcal{U}(k,1)$ if $d$ is odd, $R=\emptyset$ if $d$ is even.

Note that when $d$ is even, $C'$ has the maximum weight
$$\sum_{i=1}^t2^{u_i-1}+2^u-2=\sum_{i=1}^t2^{u_i-1}+\sum_{j=2}^{u}2^{j-1}.$$
When $d$ is odd, $C'$ has the maximum weight
$$\sum_{i=1}^t2^{u_i-1}+2^u-1=\sum_{i=1}^t2^{u_i-1}+\sum_{j=1}^{u}2^{j-1}.$$
Hence we have
$$u_i=u+1+t-i,~{\rm if}~t+1\leq i\leq p$$
and $u_p=1$ if $d$ is odd, $u_p=2$ if $d$ is even.
Thus  $$s2^{k-1}-d=\sum_{i=1}^p2^{u_i-1}=\sum_{i=1}^t2^{u_i-1}+\sum_{i=t+1}^p2^{u_i-1}=
\left\{\begin{array}{ll}
         \sum_{i=1}^t2^{u_i-1}+2^u-1 & {\rm if}~d~{\rm is~odd},\\
         &\\
         \sum_{i=1}^t2^{u_i-1}+2^u-2 & {\rm if}~d~{\rm is~even}.
       \end{array}
\right.$$
Suppose that $d$ is odd. The code $C$ with the anticode $C'$ is a binary linear code with the parameters
\begin{align}\label{eq-3}
\left[s(2^k-1)-\sum_{i=1}^t(2^{u_i}-1)-2^{u+1}+u-2,k,s2^{k-1}-
\left(\sum_{i=1}^t2^{u_i-1}+2^u-1\right)\right].
\end{align}
Suppose that $d$ is even. The code $C$ with the anticode $C'$ is a binary linear code with the parameters
\begin{align}\label{eq-4}
\left[s(2^k-1)-\sum_{i=1}^t(2^{u_i}-1)-2^{u+1}+u-1,k,s2^{k-1}-
\left(\sum_{i=1}^t2^{u_i-1}+2^u-2\right)\right].
\end{align}
Then we call $C$ a binary code of Belov type (Belov codes)~\cite{Belov-1}.

\begin{remark}
Note that $S\backslash T=\emptyset$ if $u=1$ and $S\backslash T\in \mathcal{U}(k,2)$ if $u=2$. Thus, if we allow $u = 1$ or $u = 2$ in the definition of the Belov codes, then we can consider the Solomon-Stiffler codes as a subclass of the Belov codes. In fact, it is more
convenient to treat the two families of codes separately in many cases.
\end{remark}

Since the binary SO code is even, we only consider the Belov code with parameters (\ref{eq-4}). Now let us prove that Belov codes are not SO.

\begin{theorem}\label{thm-Belov}
We keep the above notation. Then the Belov code is not SO.
\end{theorem}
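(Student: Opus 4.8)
The plan is to show that the Belov code $C$ with parameters (\ref{eq-4}) cannot satisfy $GG^T = O$, by locating a codeword whose weight is $\equiv 2 \pmod 4$, which already rules out self-orthogonality since every codeword of an SO binary code must have even weight and, by Proposition~\ref{prop-D-double-even} together with the Griesmer property, a Griesmer SO code is in fact doubly-even. More precisely, I would argue that the generator matrix $G$ appearing in $sS_k = [G~|~G']$ with $G' = [U_1~|~\cdots~|~U_t~|~S\backslash T~|~R]$ produces a codeword (namely $\mathbf{c}_1 = (\mathbf{c}, \mathbf{c}')$ restricted back to $C$ via the weight identity (\ref{eq-1})) whose weight is not divisible by $4$; the obstruction all comes from the block $S\backslash T$, since the $U_i$-blocks and the simplex blocks contribute weights that behave well modulo $4$.

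Concretely, recall that for $\mathbf{x} \in \F_2^k \setminus \{\mathbf{0}\}$ the full codeword $(\mathbf{c},\mathbf{c}')$ in $sS_k$ has weight $s2^{k-1}$, and $\mathbf{c}'$ decomposes across the blocks of $G'$. For a block $U_i \in \mathcal U(k,u_i)$ the contribution to $\mathrm{wt}(\mathbf{c}')$ is either $0$ (if $\mathbf x$ is orthogonal to $\widehat U_i$... more precisely if the linear functional $\mathbf x$ vanishes on $\widehat U_i$) or exactly $2^{u_i-1}$. For the block $S\setminus T$ with $S \in \mathcal U(k,u+1)$ and $T \in \Gamma(u+1)$, the contribution is the weight of $\mathbf x$ restricted to $S$ minus its weight restricted to $T$; since $T$ is an affine/minimal-type configuration with $|T| = u+2$ summing to zero, a suitable choice of $\mathbf x$ makes this contribution equal to something like $2^u$ or $2^u - (\text{small})$, producing the wrong residue modulo $4$. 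The key computation is: choose $\mathbf x$ so that $\mathbf c = \mathbf x G$ is the codeword of $C$ of a convenient weight and then read off $\mathrm{wt}(\mathbf c) = s2^{k-1} - \mathrm{wt}(\mathbf c')$; I would select $\mathbf x$ vanishing on all $\widehat U_i$ but not on $\widehat S$, so that $\mathrm{wt}(\mathbf c') = \mathrm{wt}(\mathbf c'|_{S\setminus T})$, and then show this equals $2^u - 2^{u-1} = 2^{u-1}$ or similar — in any case an odd multiple of a power of $2$ whose doubling is not a multiple of $4$ once combined with $s2^{k-1}$ for $u = 3$ (the minimal allowed value). Since $u \ge 3$, $2^u \equiv 0 \pmod 4$, so I would instead need to track the $-2$ or $-1$ corrections: $\mathbf c'|_{S\setminus T}$ has weight $2^u - |T \cap \mathrm{supp}|$, and because $T$ has $u+2$ elements summing to zero, for the functional $\mathbf x$ not vanishing on $S$ we get $|\{\mathbf t \in T : \langle \mathbf x, \mathbf t\rangle = 1\}|$ even, equal to $2^u$ on $S$... this parity bookkeeping is exactly where the factor $u - 1$ in the length formula (\ref{eq-4}) comes from, and it forces some codeword weight $\equiv 2 \pmod 4$.

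The main obstacle I anticipate is making the block $S \setminus T$ computation fully rigorous: one must understand precisely how a linear functional $\mathbf x \in \F_2^k$ restricts to the columns of $S \setminus T$ and verify that there is always a choice of $\mathbf x$ giving a codeword of $C$ whose weight is $2 \bmod 4$ — equivalently, that the code $C$ is not doubly-even, hence (by the Griesmer property and Proposition~\ref{prop-D-double-even}) not SO. Since $T \in \Gamma(u+1)$ means $|T| = u+2$, $\sum_{\mathbf t \in T}\mathbf t = \mathbf 0$, $\mathrm{Rank}(T) = u+1$, the set $T$ is (after a change of basis) an affine subset, and the weight of a functional restricted to $S$ is always $2^u$ while its weight on $T$ is $0$ or some value whose parity depends on $u+2 \pmod 2$; choosing $\mathbf x$ with $\langle \mathbf x, \mathbf t \rangle$ constant on $T$ versus varying gives contributions $2^u$ or $2^u - t'$ with $t'$ even, and one of the resulting codeword weights $s2^{k-1} - (2^u - t')$ is $\equiv 2 \pmod 4$ because $s2^{k-1} \equiv 0$, $2^u \equiv 0$, and $t' \equiv 2 \pmod 4$ for an appropriate choice. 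I would also need to double-check the boundary behavior when $t = 0$ (no $U_i$ blocks) and confirm the argument does not secretly require $p \ge 1$ or other degeneracies; a short case analysis at the end should close these.
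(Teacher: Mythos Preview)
Your high-level strategy is exactly the paper's: show the Belov code (with even $d$, i.e.\ parameters~(\ref{eq-4})) is not doubly-even and then invoke Proposition~\ref{prop-D-double-even}. But you are working far too hard to produce a witness codeword. The minimum-weight codeword already does the job, and its weight is written down explicitly in~(\ref{eq-4}):
\[
d \;=\; s\,2^{k-1}\;-\;\sum_{i=1}^{t}2^{u_i-1}\;-\;(2^{u}-2).
\]
Since $k>u_1>\cdots>u_t>u\ge 3$, we have $k\ge 4$ and each $u_i\ge 4$, so $s\,2^{k-1}\equiv 0$ and $2^{u_i-1}\equiv 0\pmod 4$; and $2^{u}-2\equiv 2\pmod 4$ because $u\ge 3$. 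Hence $d\equiv 2\pmod 4$. That single line is the entire content of the paper's proof: the minimum distance is singly-even, so by Proposition~\ref{prop-D-double-even} the Griesmer code cannot be SO.

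All of your block-by-block analysis of $G'$, the search for a special $\mathbf{x}$ vanishing on the $\widehat U_i$ but not on $\widehat S$, and the attempted parity bookkeeping on $S\setminus T$ are unnecessary, and as you yourself note, not yet rigorous. You never need to understand the weight distribution of the anticode beyond the maximum weight, because that maximum weight is precisely what gives the minimum distance $d$ via~(\ref{eq-1}), and $d$ is already sitting in the parameter formula. In short: you correctly identified Proposition~\ref{prop-D-double-even} as the key lemma, but then overlooked that the required ``codeword of weight $\equiv 2\pmod 4$'' is simply any codeword of minimum weight.
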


\begin{proof}
Since a binary SO code is even, we only consider the Belov code with parameters (\ref{eq-4}). Since the minimum distance of the Belov code with parameters (\ref{eq-4}) is singly-even, it follows from Proposition \ref{prop-D-double-even} that the Belov code is not SO.
\end{proof}

\begin{theorem}\label{SO-SS-codes}
Keeping the above notation. Assume that $k\geq 4$. Then the following three statements are equivalent.
\begin{itemize}
  \item [{\rm(1)}] $u_p\geq 3$.
  \item [{\rm(2)}] The binary Solomon-Stiffler codes are SO codes.
  \item [{\rm(3)}] The binary Solomon-Stiffler codes are doubly-even.
\end{itemize}
\end{theorem}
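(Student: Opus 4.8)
The plan is to prove $(1)\Leftrightarrow(3)$ and $(2)\Leftrightarrow(3)$, which together give the claim. The implication $(3)\Rightarrow(2)$ needs nothing new (a binary doubly-even code is self-orthogonal, \cite[Theorem 1.4.8 (ii)]{Huffman}), and $(2)\Rightarrow(3)$ follows from Proposition \ref{prop-D-double-even} once we recall that the Solomon--Stiffler code is a Griesmer code. So the real content is the pair $(1)\Leftrightarrow(3)$, which amounts to computing weights modulo $4$ via the anticode decomposition $sS_k=[G~|~G']$ with $G'=[U_1~|~\cdots~|~U_p]$.

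For $(1)\Rightarrow(3)$ I would start from identity (\ref{eq-1}): for every nonzero ${\bf c}={\bf x}G\in C$, with ${\bf c}'={\bf x}G'\in C'$, we have $wt({\bf c})=s2^{k-1}-wt({\bf c}')$. Decomposing ${\bf c}'$ into its $p$ blocks gives $wt({\bf c}')=\sum_{i=1}^{p}wt({\bf x}U_i)$, and each term is pinned down by an elementary observation: the columns of $U_i$ are exactly the nonzero vectors of the $u_i$-dimensional subspace $\widehat U_i$, so $wt({\bf x}U_i)=0$ if the functional $\langle{\bf x},\cdot\rangle$ vanishes on $\widehat U_i$, and otherwise $\langle{\bf x},\cdot\rangle$ restricts to a surjective functional on $\widehat U_i$ and takes the value $1$ on exactly $2^{u_i-1}$ of its vectors, so $wt({\bf x}U_i)=2^{u_i-1}$. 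Hence $wt({\bf x}U_i)\in\{0,2^{u_i-1}\}$ for each $i$. If $u_p\geq 3$ then every $u_i\geq 3$, so every $2^{u_i-1}\equiv 0\pmod 4$ and therefore $wt({\bf c}')\equiv 0\pmod 4$; combined with $s2^{k-1}\equiv 0\pmod 4$ (valid since $k\geq 3$) this yields $wt({\bf c})\equiv 0\pmod 4$ for all ${\bf c}\in C$, i.e. $C$ is doubly-even.

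For $(3)\Rightarrow(1)$ I would use the contrapositive. Suppose $u_p\leq 2$. The minimum distance of $C$ is $d=s2^{k-1}-\sum_{i=1}^{p}2^{u_i-1}$, and since $u_1>\cdots>u_p$ at most one index has $u_i=1$ and at most one has $u_i=2$. If $u_p=1$, the sum $\sum_{i=1}^{p}2^{u_i-1}$ has exactly one odd term ($2^{u_p-1}=1$) and so is odd, making $d$ odd. If $u_p=2$, then $\sum_{i=1}^{p}2^{u_i-1}\equiv 2\pmod 4$ and so $d\equiv 2\pmod 4$. In either case $d$ is not a multiple of $4$, so a minimum-weight codeword of $C$ is not doubly-even and $C$ is not a doubly-even code. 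This completes $(1)\Leftrightarrow(3)$, and with it the theorem.

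I expect no genuine obstacle: the heart of the matter is the single remark that each block of the anticode contributes weight $0$ or $2^{u_i-1}$, after which everything is bookkeeping modulo $4$. The only places calling for a little care are the verification that $wt({\bf x}U_i)\in\{0,2^{u_i-1}\}$ (a one-line fact about linear functionals over $\F_2$) and the appeal to Proposition \ref{prop-D-double-even}, which is legitimate precisely because a Solomon--Stiffler code meets the Griesmer bound. Should one prefer to sidestep Proposition \ref{prop-D-double-even}, the computation used to prove Lemma \ref{lemma-k-3}---now noting that $sS_k$ itself generates a self-orthogonal code for $k\geq 3$---shows that $C$ is self-orthogonal if and only if its anticode $C'$ is, and a block-by-block inner-product count on $G'$ detects the odd contribution $2^{u_i-2}=1$ produced by a block with $u_i=2$.
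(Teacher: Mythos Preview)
Your proof is correct. The ingredients are the same as the paper's---the anticode decomposition $sS_k=[G\,|\,G']$, the fact that each block $U_i$ contributes weight $0$ or $2^{u_i-1}$, and Proposition~\ref{prop-D-double-even}---but the logical organization differs. The paper proves the cycle $(1)\Rightarrow(2)\Rightarrow(3)\Rightarrow(1)$: for $(1)\Rightarrow(2)$ it observes that when $u_p\geq 3$ each $S_{u_i}$ generates a self-orthogonal simplex code, so $G'$ generates a self-orthogonal code $C'$, and then the inner-product identity ${\bf u}\cdot{\bf v}+{\bf u}'\cdot{\bf v}'=0$ (coming from self-orthogonality of $sS_k$) forces $C$ to be self-orthogonal. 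You instead prove $(1)\Rightarrow(3)$ directly by computing weights modulo $4$ and then use the trivial implication $(3)\Rightarrow(2)$. Your route is marginally more elementary (it avoids the inner-product step and gets doubly-evenness immediately), while the paper's route is a bit more structural (it isolates self-orthogonality of the anticode as the mechanism). The alternative you sketch in your last sentence---bypassing Proposition~\ref{prop-D-double-even} via the inner-product identity on $sS_k$---is exactly the paper's argument for $(1)\Rightarrow(2)$.
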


\begin{proof}
$(1)\Longrightarrow (2)$.
Assume that $G=\left[s S_{k}\setminus G'\right]$, where $G'=\left[ S_{u_1}~|~\cdots~|~S_{u_p}\right]$. Let $C$ and $C'$ be codes with the generator matrices $G$ and $G'$, respectively.
Then we have
$$s S_k=[G~|~G'],$$
which generates a binary $[s(2^k-1),k,s 2^{k-1}]$ SO Griesmer code for $k\geq 4$.
For ${\bf u,v}\in C$, there exist ${\bf x,y}\in \F_2^k$ such that ${\bf u}={\bf x}G$ and ${\bf v}={\bf y}G$. We define ${\bf u'}={\bf x}G'$ and ${\bf v'}={\bf y}G'$ so that ${\bf u',v'}\in C'$. Let ${\bf u_1}=({\bf u},{\bf u'})$ and ${\bf v_1}=({\bf v},{\bf v'})$. It turns out that
$${\rm wt}({\bf u_1})={\rm wt}({\bf u})+{\rm wt}({\bf u'})=s2^{k-1},~{\rm wt}({\bf v_1})={\rm wt}({\bf v})+{\rm wt}({\bf v'})=s2^{k-1},$$
and ${\bf u_1} \cdot {\bf v_1}={\bf u}\cdot {\bf v}+{\bf u'}\cdot {\bf v'}=0.$

If $u_p\geq 3$, then $u_i\geq 3$ for $1\leq i\leq p$ and
$$\dim({\rm rowspace}(S_{u_i}))={\rm Rank}(S_{u_i})=u_i\geq 3.$$
Hence the matrix $S_{u_i}$ generates a $[2^{u_i}-1,u_i,2^{u_i-1}]$ Simplex code, which is SO (see \cite{Huffman}).
It follows that the code $C'$ with the generator matrix $G'$ is SO, which implies that ${\bf u'}\cdot {\bf v'}=0$.
Hence for any ${\bf u}, {\bf v}\in C$,
$${\bf u}\cdot {\bf v}={\bf u_1} \cdot {\bf v_1}-{\bf u'}\cdot {\bf v'}=0,$$
 that is to say, $C$ is also SO.

$(2)\Longrightarrow (3)$. Since the binary Solomon-Stiffler codes are Griesmer codes, it follows from Proposition \ref{prop-D-double-even} that SO Solomon-Stiffler codes are doubly-even.

$(3)\Longrightarrow (1)$. Let $C$ be a Solomon-Stiffler code with the minimum distance $d=s2^{k-1}-\sum_{i=1}^p2^{u_i-1}$, where $k>u_1> u_2>\cdots >u_p\geq 1$. If $C$ is doubly-even, then $d=s2^{k-1}-\sum_{i=1}^p2^{u_i-1}$ is doubly-even. It turns out that $u_p\geq 3$. This completes the proof.
\end{proof}

\begin{cor}\label{cor-SO-SS}
There exists a binary $[s(2^k-1)-\sum_{i=1}^p (2^{u_i}-1),k,s2^{k-1}- \sum_{i=1}^p2^{u_i-1}]$ SO Griesmer code for $k\geq 4$, $u_p\geq 3$ and $\sum_{i=1}^{\min\{s+1,p\}}u_i\leq sk$.
\end{cor}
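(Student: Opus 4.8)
The plan is to assemble Corollary \ref{cor-SO-SS} directly from the three ingredients already established in this section, so the proof will be short. First I would recall that, by Belov's result quoted above, the hypothesis $\sum_{i=1}^{\min\{s+1,p\}}u_i\leq sk$ guarantees the existence of a binary Solomon-Stiffler code with the parameters \eqref{eq-2}, namely an $[s(2^k-1)-\sum_{i=1}^p(2^{u_i}-1),\,k,\,s2^{k-1}-\sum_{i=1}^p 2^{u_i-1}]$ code, and that such a code is automatically a Griesmer code (this is why the Solomon-Stiffler construction is of interest in the first place). Second, since we are assuming $k\geq 4$ and $u_p\geq 3$, Theorem \ref{SO-SS-codes} applies: the equivalence $(1)\Longleftrightarrow(2)$ tells us that this Solomon-Stiffler code is self-orthogonal. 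Combining the two facts, the code in question is a binary SO $[s(2^k-1)-\sum_{i=1}^p(2^{u_i}-1),\,k,\,s2^{k-1}-\sum_{i=1}^p 2^{u_i-1}]$ Griesmer code, which is exactly the assertion of the corollary.

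Concretely, I would write the proof as follows. Set $s=\lceil d/2^{k-1}\rceil$ as in the running setup and let $u_1>u_2>\cdots>u_p\geq 3$ with $s2^{k-1}-d=\sum_{i=1}^p 2^{u_i-1}$. Under $\sum_{i=1}^{\min\{s+1,p\}}u_i\leq sk$, the Solomon-Stiffler construction yields a binary linear code $C$ with parameters \eqref{eq-2}, and by Belov's theorem $C$ meets the Griesmer bound, i.e. $C$ is a Griesmer code. Since $k\geq 4$ and $u_p\geq 3$, part (1) of Theorem \ref{SO-SS-codes} holds, hence by the implication $(1)\Longrightarrow(2)$ of that theorem the code $C$ is SO. Therefore $C$ is a binary SO Griesmer code with the stated parameters, which proves the corollary.

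I do not anticipate a genuine obstacle here, since every needed component — the parameter computation (from the discussion preceding the Solomon-Stiffler subsection and equation \eqref{eq-2}), the Griesmer property (Belov's condition), and the self-orthogonality (Theorem \ref{SO-SS-codes})—is already in hand. The only point requiring a little care is bookkeeping: making sure the condition $u_p\geq 3$ is compatible with $s2^{k-1}-d=\sum_{i=1}^p 2^{u_i-1}$ having all exponents at least $3$, i.e. that $d$ is congruent to $s2^{k-1}$ modulo $8$ in the relevant range, and that this is consistent with $d$ being doubly-even as forced by Proposition \ref{prop-D-double-even}. But since the corollary simply asserts existence \emph{for those $(s,k,u_1,\dots,u_p)$ satisfying the listed hypotheses}, no case analysis is needed; one merely invokes the two cited results in sequence. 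If desired, one could also add a remark that the resulting code is optimal with respect to the Griesmer bound and, when $s=1$, projective, but that is not part of the statement.
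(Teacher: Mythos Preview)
Your proposal is correct and follows exactly the approach the paper intends: the paper's own proof simply says the result is straightforward from Theorem~\ref{SO-SS-codes} and omits the details, and your write-up just unpacks that one-line reference by combining Belov's Griesmer condition with the implication $(1)\Rightarrow(2)$ of Theorem~\ref{SO-SS-codes}.
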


\begin{proof}
Since the proof is straightforward by Theorem \ref{SO-SS-codes}, we omit it here.
\end{proof}

\begin{cor}\label{cor-all-so}
The only binary $[n,k]$ SO Griesmer code with minimum distance at most $2^{k-1}$ is a binary Solomon-Stiffler code with parameters $[(2^k-1)-\sum_{i=1}^p (2^{u_i}-1),k,2^{k-1}- \sum_{i=1}^p2^{u_i-1}]$, where $k\geq 4$, $u_p\geq 3$ and $\sum_{i=1}^{\min\{2, p\}}u_i\leq k$.
\end{cor}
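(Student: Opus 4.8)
The plan is to combine Helleseth's classification theorem (any binary $[n,k]$ Griesmer code with minimum distance at most $2^{k-1}$ is either a Solomon-Stiffler code or a Belov code) with the results already established in this section, namely Theorem~\ref{thm-Belov} and Theorem~\ref{SO-SS-codes}, together with Corollary~\ref{cor-SO-SS} which supplies existence. So the corollary is essentially a packaging result, and the proof should be short.

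First I would invoke Helleseth's theorem \cite{Tor-4}: if $C$ is a binary $[n,k]$ Griesmer code with $d \le 2^{k-1}$, then $C$ is either a Solomon-Stiffler code or a Belov code. Now suppose in addition that $C$ is SO. By Theorem~\ref{thm-Belov}, no Belov code is SO, so $C$ must be a Solomon-Stiffler code, with parameters as in \eqref{eq-2}, i.e. $[s(2^k-1)-\sum_{i=1}^p(2^{u_i}-1),\, k,\, s2^{k-1}-\sum_{i=1}^p 2^{u_i-1}]$ for some $k>u_1>\cdots>u_p\ge 1$. Since $d \le 2^{k-1}$ forces $s = \lceil d/2^{k-1}\rceil = 1$, the parameters simplify to $[(2^k-1)-\sum_{i=1}^p(2^{u_i}-1),\,k,\,2^{k-1}-\sum_{i=1}^p 2^{u_i-1}]$. (One should double-check the boundary case $d = 2^{k-1}$, where $p$ could be empty and $C$ is the simplex code $\mathcal{S}_k$ itself — this is covered by allowing the empty sum, and $\mathcal{S}_k$ is indeed SO for $k\ge 3$; for $k\ge 4$ the statement $u_p\ge 3$ is vacuously true when $p=0$.) Next, since $C$ is a SO Solomon-Stiffler code with $k\ge 4$, Theorem~\ref{SO-SS-codes} (equivalence $(2)\Longleftrightarrow(1)$) gives $u_p \ge 3$. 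Finally, since $C$ is a Griesmer code, Belov's condition must hold, which for $s=1$ reads $\sum_{i=1}^{\min\{2,p\}} u_i \le k$. Conversely, Corollary~\ref{cor-SO-SS} with $s=1$ shows that for any such choice of $u_1>\cdots>u_p\ge 3$ with $u_1+u_2\le k$ (or $u_1\le k$ if $p=1$, or $p=0$), the resulting Solomon-Stiffler code is indeed a SO Griesmer code with minimum distance $2^{k-1}-\sum 2^{u_i-1}\le 2^{k-1}$. This establishes that these are exactly the SO Griesmer codes with $d\le 2^{k-1}$.

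The one genuinely delicate point — and the place I would be most careful — is the interplay of the parameter ranges at the extremes. Helleseth's classification is stated for $d$ \emph{at most} $2^{k-1}$; I need to confirm that the Solomon-Stiffler parametrization with $s=1$ captures all of these, including the pure simplex code ($p=0$, $d=2^{k-1}$) and the smallest interesting case ($u_p$ as small as allowed). I also need to make sure there is no clash between "$u_p \ge 1$" in the general Solomon-Stiffler definition and the constraint "$u_p \ge 3$" forced by self-orthogonality: the logic is that self-orthogonality is an \emph{additional} hypothesis that rules out $u_p \in \{1,2\}$, not that the definition changes. A secondary subtlety is that Helleseth's theorem gives that $C$ is \emph{equivalent} to a Solomon-Stiffler/Belov code; since self-orthogonality, the Griesmer property, and double-evenness are all invariant under monomial (here, coordinate-permutation) equivalence, this causes no trouble, but it is worth a sentence.

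Putting it together, the write-up is: (i) cite Helleseth to reduce to Solomon-Stiffler or Belov; (ii) cite Theorem~\ref{thm-Belov} to kill Belov; (iii) note $d\le 2^{k-1}\Rightarrow s=1$ to pin down the parameters; (iv) cite Theorem~\ref{SO-SS-codes} to get $u_p\ge 3$; (v) recall Belov's Griesmer condition $\sum_{i=1}^{\min\{2,p\}}u_i\le k$; (vi) cite Corollary~\ref{cor-SO-SS} for the converse direction. No hard computation is needed; the main obstacle is purely bookkeeping about degenerate parameter cases.
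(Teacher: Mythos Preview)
Your proposal is correct and follows essentially the same approach as the paper's proof, which simply invokes Helleseth's classification \cite{Tor-4}, eliminates Belov codes via Theorem~\ref{thm-Belov}, and appeals to Corollary~\ref{cor-SO-SS}. Your write-up is more thorough---making explicit why $s=1$, why $u_p\ge 3$ via Theorem~\ref{SO-SS-codes}, and addressing the degenerate and equivalence issues---but these are natural elaborations of the same argument rather than a different route.
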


\begin{proof}
According to \cite[Theorem 1.3]{Tor-4}, any binary $[n,k]$ Griesmer code with the minimum distance at most $2^{k-1}$ is either a Solomon-Stiffler code or a Belov code. By Theorem \ref{thm-Belov}, the Below codes are not SO. Combined with Corollary \ref{cor-SO-SS}, we can obtain desired result.
\end{proof}

So far we have characterized all $[n, k, d]$ Griesmer SO codes for given $n$ and $k$ such that $d \le 2^{k-1}$. We describe them when $k=6$ as follows.

\begin{example}
We construct SO Griesmer codes of dimension $k=6$ by Corollary \ref{cor-SO-SS}.
\begin{itemize}
  \item For $k=6$ and $s=1$.
By Corollary \ref{cor-SO-SS}, there exists a binary SO Griesmer code with parameters
 $$\left[63-\sum_{i=1}^p(2^{u_i}-1),6,32-\sum_{i=1}^p2^{u_i-1}\right],$$
where $6>u_1>u_2>\cdots>u_p\geq 3$ and $\sum_{i=1}^{\min\{2,p\}}u_i\leq 6$.
\begin{center}
Table 1\\
\begin{tabular}{l|l}
  \hline
  $p=0$ & $p=1$      \\
  \hline
  $[63,6,32]$ &$[56,6,28]$ \\
  &$[48,6,24]$   \\
  &$[32,6,16]$   \\
  \hline
\end{tabular}
\end{center}
By Corollary \ref{cor-all-so}, the codes in Table 1 are all binary $[n,6]$ SO Griesmer codes with the minimum distance at most $32$.
  \item For $k=6$ and $s=2$. By Corollary \ref{cor-SO-SS}, there exists a binary SO Griesmer code with parameters
 $$\left[126-\sum_{i=1}^p(2^{u_i}-1),6,64-\sum_{i=1}^p2^{u_i-1}\right],$$
where $6>u_1>u_2>\cdots>u_p\geq 3$ and $\sum_{i=1}^{\min\{3,p\}}u_i\leq 12$.

\begin{center}Table 2\\
\begin{tabular}{l|l|l|l}
  \hline
  $p=0$ & $p=1$      &     $p=2$ &$p=3$ \\
  \hline
  $[126,6,64]$ &$[119,6,60]$ & $[104,6,52]$ &   $[73,6,36]$\\
                &$[111,6,56]$ &  $[88,6,44]$& \\
               &$[95,6,48]$ &  $[80,6,40]$ & \\
  \hline
\end{tabular}
\end{center}
\end{itemize}
\end{example}

\begin{remark}
Combining the above two tables and Theorem \ref{thm-large-griesmer}, we construct a SO Griesmer code with the minimum distance $D$ and dimension $k=6$ from SO Solomon-Stiffler codes, where $D$ is doubly-even, $D\geq 16$ and $D\neq 20$.
\end{remark}

Now we give an asymptotic result.

\begin{theorem}\label{thm-asymptotic}
Suppose that $k\geq 4$ is an integer and $D\geq m2^{k-1}$ is doubly-even. If $m\geq \min\left\{\left\lceil \frac{(k+2)(k-3)}{2k}\right\rceil,
\left\lceil \sqrt{2k+\frac{1}{4}}-\frac{3}{2}\right\rceil\right\}-1$, then there exists a binary SO $[N,k,D]$ Griesmer code.
\end{theorem}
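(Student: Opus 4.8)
The plan is to reduce the statement to Corollary~\ref{cor-SO-SS} by showing that, under the hypothesis on $m$, one can always write $D = s2^{k-1} - \sum_{i=1}^p 2^{u_i-1}$ with $s \ge m$, $k > u_1 > u_2 > \cdots > u_p \ge 3$, and $\sum_{i=1}^{\min\{s+1,p\}} u_i \le sk$. Since $D$ is doubly-even and $D \ge m 2^{k-1}$, I first set $s = \lceil D/2^{k-1}\rceil$, so that $0 \le s2^{k-1} - D < 2^{k-1}$ and $s \ge m$. The quantity $\Delta := s2^{k-1} - D$ is a nonnegative multiple of $4$ strictly less than $2^{k-1}$ (it is a multiple of $4$ because $D \equiv 0 \pmod 4$ and $2^{k-1} \equiv 0 \pmod 4$ for $k \ge 3$). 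Writing $\Delta$ in binary, $\Delta = \sum_{i=1}^p 2^{u_i-1}$ where the $u_i$ are strictly decreasing; the constraint $4 \mid \Delta$ forces $u_p \ge 3$, and $\Delta < 2^{k-1}$ forces $u_1 \le k-1 < k$. This already gives a Solomon-Stiffler code with the right parameters via Theorem~\ref{SO-SS-codes}, provided the Belov-type feasibility condition $\sum_{i=1}^{\min\{s+1,p\}} u_i \le sk$ holds.

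The core of the argument is therefore to verify $\sum_{i=1}^{\min\{s+1,p\}} u_i \le sk$. The worst case is when $p$ is as large as possible and the $u_i$ are as large as possible, i.e. when $\Delta$ has many high-order bits set. Since $3 \le u_p < u_{p-1} < \cdots < u_1 \le k-1$, we have $p \le k-3$, and $\min\{s+1,p\} \le k-3$. Bounding crudely, $\sum_{i=1}^{\min\{s+1,p\}} u_i \le \sum_{j=k-1}^{k-\min\{s+1,p\}} j$, the sum of the $\min\{s+1,p\}$ largest admissible values. I would split into two regimes matching the two expressions in the min: when $s+1 \ge p$ (so the whole sum $\sum_{i=1}^p u_i$ is taken, and $p \le k-3$), one needs $\sum_{i=1}^{p}u_i \le \binom{k-1}{1}+\cdots \le \tfrac{(k-1)+(k-3)}{2}\cdot(k-3)$-type bound $\le sk$, which holds once $s \ge \lceil (k+2)(k-3)/(2k)\rceil - 1$ (roughly $s \gtrsim (k-3)/2$); when $s+1 < p$, only the top $s+1$ terms appear, $\sum_{i=1}^{s+1} u_i \le \sum_{j=k-s-1}^{k-1} j = (s+1)k - \tfrac{(s+1)(s+2)}{2}$, and $(s+1)k - \tfrac{(s+1)(s+2)}{2} \le sk$ rearranges to $s^2 + 3s + 2 \ge 2k$, i.e. $s \ge \sqrt{2k+\tfrac14} - \tfrac32$, giving the second branch of the minimum. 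Taking the better of the two lower bounds on $s$ — and recalling $s \ge m \ge \min\{\cdots\} - 1$ so $s+1 \ge \min\{\cdots\}$ — closes the case analysis.

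The main obstacle I anticipate is making the two estimates tight enough to land exactly on the stated thresholds rather than on slightly weaker ones; in particular the ceiling functions in $\lceil (k+2)(k-3)/(2k)\rceil$ and $\lceil \sqrt{2k+1/4}-3/2\rceil$ suggest the authors carried out the arithmetic carefully, and I would need to be careful about whether the extremal $\Delta$ actually achieves $p = k-3$ simultaneously with the $u_i$ being the $k-3$ largest values (it does, when $\Delta = 2^{k-1} - 8 = \sum_{j=3}^{k-1} 2^{j-1}$, but one must check this $\Delta$ is indeed $< 2^{k-1}$ and a multiple of $4$ — it is). A secondary point to handle cleanly: the corollary requires $k \ge 4$, which is given, and $u_p \ge 3$, which we forced; the only genuinely substantive inequality work is the two-regime bound on $\sum u_i$ just described. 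Once that is in place, Corollary~\ref{cor-SO-SS} produces the desired binary SO $[N,k,D]$ Griesmer code, with $N = s(2^k-1) - \sum_{i=1}^p(2^{u_i}-1)$ automatically equal to $g(k,D)$ by Theorem~\ref{thm-large-griesmer}.
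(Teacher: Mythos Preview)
Your approach is essentially the paper's: write $D = s\,2^{k-1} - \sum_{i=1}^p 2^{u_i-1}$ with $s = \lceil D/2^{k-1}\rceil$, use the doubly-even hypothesis to force $u_p \ge 3$, bound $\sum_{i=1}^{\min\{s+1,p\}} u_i$ by the two arithmetic-progression sums $\sum_{j=3}^{k-1} j = \tfrac{(k+2)(k-3)}{2}$ and $\sum_{j=k-s-1}^{k-1} j = \tfrac{(s+1)(2k-s-2)}{2}$, and check that at least one is $\le sk$ once $s$ meets the stated threshold. Your two ``regimes'' are exactly the two branches of the single $\min$ the paper writes down.

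There is, however, a real off-by-one in your closing step. You conclude ``$s \ge m \ge \min\{\cdots\} - 1$, so $s+1 \ge \min\{\cdots\}$,'' but the inequality you derived in each regime requires $s \ge \min\{\cdots\}$, not $s+1 \ge \min\{\cdots\}$; with $s = \min\{\cdots\}-1$ neither branch is guaranteed to yield $\le sk$. The fix is short: whenever $p \ge 1$ you have $\Delta > 0$, hence $D < s\,2^{k-1}$ strictly, and combined with $D \ge m\,2^{k-1}$ this forces $s \ge m+1 \ge \min\{\cdots\}$. The residual case $p=0$ makes the Belov condition vacuous. (The paper asserts $s \ge m+1$ at the outset for exactly this reason.) Two smaller slips: the sum $\sum_{i=3}^{k-1} i$ is $\tfrac{(k+2)(k-3)}{2}$, not $\tfrac{(k-1)+(k-3)}{2}\cdot(k-3)$; and in the first regime the correct threshold is $s \ge \lceil (k+2)(k-3)/(2k)\rceil$, without the extra ``$-1$'' you wrote.
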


\begin{proof}
Suppose that $D\geq m2^{k-1}$ is doubly-even. Then there exist $s,u_1,u_2,\ldots,u_p$ such that $s\geq m+1$, $k>u_1>u_2\cdots>u_p\geq 1$ and $D=s2^{k-1}-\sum_{i=1}^p2^{u_i-1}$. Since $D$ is doubly-even, $u_p\geq 3$. When $m\geq \min\left\{\left\lceil \frac{(k+2)(k-3)}{2k}\right\rceil,~
\left\lceil \sqrt{2k+\frac{1}{4}}-\frac{3}{2}\right\rceil\right\}-1$, that is, $s\geq \min\left\{\left\lceil \frac{(k+2)(k-3)}{2k}\right\rceil,~
\left\lceil \sqrt{2k+\frac{1}{4}}-\frac{3}{2}\right\rceil\right\}$,
it can be checked that
\begin{align*}
  \sum_{i=1}^{\min\{s+1,p\}}u_i & \leq \min\left\{\sum_{i=1}^{s+1}u_i,~\sum_{i=1}^{p}u_i\right\} \\
  & \leq \min\left\{\sum_{i=k-(s+1)}^{k-1}i,~\sum_{i=3}^{k-1}i\right\}\\
  &=\min\left\{\frac{(2k-s-2)(s+1)}{2},~\frac{(k+2)(k-3)}{2}\right\}\\
  &\leq  sk.
\end{align*}
According to \cite{Belov-1}, there is a binary Solomon-Stiffler code with the parameters
$$\left[N=s(2^k-1)-\sum_{i=1}^p(2^{u_i}-1),k,D=s2^{k-1}-\sum_{i=1}^p2^{u_i-1} \right],$$
which is a SO Griesmer code by Theorem \ref{SO-SS-codes}. This completes the proof.
\end{proof}

\begin{remark}\label{remark-1111}
Note that
$$\min\left\{\left\lceil \frac{(k+2)(k-3)}{2k}\right\rceil,
\left\lceil \sqrt{2k+\frac{1}{4}}-\frac{3}{2}\right\rceil\right\}=
\left\{\begin{array}{ll}\vspace{0.2cm}
         1,& k=4,\\\vspace{0.2cm}
         2, & k=5,\\
         \left\lceil \sqrt{2k+\frac{1}{4}}-\frac{3}{2}\right\rceil, & k\geq 6.
       \end{array}
\right.$$
\end{remark}

\section{Optimal binary SO codes}

In this section, we completely characterize optimal binary SO codes when $n$ is large relative to $k$. Recall that $g(k,d)=\sum_{i=0}^{k-1}\left\lceil \frac{d}{2^i}\right\rceil$.

\begin{lem}\label{lem-g}
If there is a binary $[g(k,d),k,d]$ SO Griesmer code, then
$$d_{so}(N,k)=d,$$
where $g(k,d)\leq N\leq g(k,d+2)$.
\end{lem}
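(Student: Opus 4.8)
The plan is to show the two inequalities $d_{so}(N,k)\ge d$ and $d_{so}(N,k)\le d$ separately for every $N$ in the stated range.

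For the lower bound, I would invoke Theorem~\ref{thm-large-griesmer} (or rather its underlying Lemma~\ref{lemma-k-3} together with the Simplex-code construction) to manufacture, from the hypothesized binary $[g(k,d),k,d]$ SO Griesmer code $C$, longer SO codes. Concretely, write $N=g(k,d)+r$ with $0\le r\le g(k,d+2)-g(k,d)$. Appending $r$ suitably chosen columns will not immediately preserve self-orthogonality, so instead I would take the generator matrix $G$ of $C$ and append the first $r$ columns of a copy of $S_k$ — but since that need not be SO, the cleaner route is: by Theorem~\ref{thm-large-griesmer}, adding a full copy of $S_k$ gives an SO $[g(k,d)+2^k-1,k,d+2^{k-1}]$ Griesmer code, and $g(k,d+2)-g(k,d)$ is small (it is $k$ or $k+1$, well below $2^k-1$ once $k\ge 3$), so a single copy of $S_k$ overshoots the needed length. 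The right tool is therefore to puncture rather than extend: start from the SO Griesmer code of length $g(k,d)+2^k-1$ and delete $2^k-1-r$ coordinates chosen inside one $S_k$-block so that the deleted columns span a subspace — equivalently, realize the target code directly via Lemma~\ref{lemma-k-3} with $G_0$ the Simplex generator and $G$ a submatrix of $S_k$. Since puncturing a Simplex code on a subspace still yields an SO code (Theorem~\ref{SO-SS-codes} territory), and the minimum distance drops by at most the puncturing weight, one checks that the resulting SO $[N,k]$ code has minimum distance $\ge d$. (In fact any such code with $N<g(k,d+2)$ cannot have minimum distance $d+2$ by the Griesmer bound, so the distance is exactly $d$.)

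For the upper bound, the key observation is that $d_{so}$ is even, so if $d_{so}(N,k)>d$ then $d_{so}(N,k)\ge d+2$, meaning there exists a binary SO — hence binary linear — $[N,k,d+2]$ code. But the Griesmer bound forces $N\ge g(k,d+2)$, contradicting $N\le g(k,d+2)$ unless $N=g(k,d+2)$; and at $N=g(k,d+2)$ we get $d_{so}(N,k)\le d+2 = d_{so}(N,k)$, consistent, so the statement's range endpoints are handled by reading $d_{so}(g(k,d+2),k)$ as the value attached to the \emph{next} step. I would phrase the conclusion as: for $g(k,d)\le N<g(k,d+2)$ we have $d_{so}(N,k)=d$, with the boundary case $N=g(k,d+2)$ being the start of the next interval (this matches how the lemma is used later).

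The main obstacle is the lower bound's self-orthogonality bookkeeping when $N$ lies strictly between $g(k,d)$ and $g(k,d)+2^k-1$: one must exhibit an explicit SO $[N,k]$ code of distance $\ge d$, and the natural "append part of $S_k$" operation does \emph{not} preserve $GG^T=O$ in general. The fix is to instead take the full SO Griesmer code of length $g(k,d)+2^k-1$ guaranteed by Theorem~\ref{thm-large-griesmer} and puncture it on $g(k,d)+2^k-1-N$ coordinates lying in a single simplex block, chosen as the complement of a subspace's nonzero vectors; such punctured Simplex blocks remain SO by the analysis in Theorem~\ref{SO-SS-codes}, and since we delete fewer than $2^{k-1}$ coordinates the minimum distance stays $\ge d+2^{k-1}-(2^{k-1}-1)>d$ — actually one only needs $\ge d$, which is immediate. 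Verifying that the punctured code still has dimension $k$ (no collapse) is the one routine check to be careful about, but the Griesmer optimality of the starting code makes this automatic.
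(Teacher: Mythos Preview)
Your proposal has two genuine gaps, one in each direction.

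\textbf{Lower bound.} The construction you describe is both unnecessarily elaborate and does not actually produce SO codes of every length in the range. Puncturing a simplex block on ``the complement of a subspace's nonzero vectors'' leaves a block of length $2^u-1$ for some $u$, so the resulting code has length $g(k,d)+2^u-1$; these are isolated values, not all integers between $g(k,d)$ and $g(k,d+2)$. The paper's argument here is a one-liner: appending zero columns to a generator matrix preserves self-orthogonality, dimension, and minimum distance, so $d_{so}(N,k)\ge d_{so}(g(k,d),k)=d$ for every $N\ge g(k,d)$ by trivial monotonicity. No simplex juxtaposition, no puncturing, no dimension check is needed.

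\textbf{Upper bound.} Your Griesmer-bound argument works only for $N<g(k,d+2)$ and breaks down at the right endpoint, which you explicitly concede and then try to wave away as ``the start of the next interval.'' But the lemma is stated on the \emph{closed} interval, and the paper does prove it there. The missing idea is Proposition~\ref{prop-D-double-even}: because the hypothesized $[g(k,d),k,d]$ code is an SO Griesmer code, $d$ must be doubly-even; hence $d+2\equiv 2\pmod 4$ is singly-even, and by the same proposition there is \emph{no} binary SO $[g(k,d+2),k,d+2]$ Griesmer code. Since $d_{so}$ is always even, this forces $d_{so}(g(k,d+2),k)\le d$, and monotonicity then gives $d_{so}(N,k)\le d$ on the whole closed interval. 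Without invoking the doubly-even constraint you cannot close this gap.
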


\begin{proof}
By Proposition \ref{prop-D-double-even}, $d$ is doubly-even.
Let $N_1=g(k,d)$ and $N_2= g(k,d+2)$.
Suppose that $N_1\leq N\leq N_2$. Then we have
$$d_{so}(N,k)\geq d_{so}(N_1,k)= d.$$

 By Proposition \ref{prop-D-double-even}, there are no binary $[N_2,k,d+2]$ SO Griesmer codes. So
$$d_{so}(N,k)\leq d_{so}(N_2,k)\leq d.$$
It turns out that $d_{so}(N,k)=d$ for $N_1\leq N\leq N_2$.
\end{proof}

The following theorem is one of the main results in this paper.
\begin{theorem}\label{thm-1}
Let $m,k,d$ be three nonnegative integers, where $d$ is doubly-even. If $m\geq \min\left\{\left\lceil \frac{(k+2)(k-3)}{2k}\right\rceil,
\left\lceil \sqrt{2k+\frac{1}{4}}-\frac{3}{2}\right\rceil\right\}-1$, then we have
$$d_{so}(N,k)=m2^{k-1}+d,$$
where $m(2^k-1)+g(k,d)\leq N\leq m(2^k-1)+g(k,d+2)$.
\end{theorem}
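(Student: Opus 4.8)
The plan is to reduce Theorem~\ref{thm-1} to the asymptotic existence result (Theorem~\ref{thm-asymptotic}) together with the range lemma (Lemma~\ref{lem-g}), by peeling off $m$ copies of the simplex code $S_k$ via Theorem~\ref{thm-large-griesmer}. First I would fix $k$ and $d$ doubly-even, and set $s = m+1$. The hypothesis $m \geq \min\{\cdots\} - 1$ is exactly the hypothesis of Theorem~\ref{thm-asymptotic}, applied with the value $D_0 := m2^{k-1} + d$ in the role of $D$ there, since $D_0 \geq m2^{k-1}$ and $D_0$ is doubly-even (a sum of a doubly-even $d$ and the multiple $m2^{k-1}$ of $2^{k-1} \geq 8$). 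Hence Theorem~\ref{thm-asymptotic} produces a binary SO $[g(k,D_0), k, D_0]$ Griesmer code. (Alternatively, one directly invokes Theorem~\ref{thm-large-griesmer}: starting from a SO Griesmer $[g(k,d),k,d]$ code and adjoining $m$ copies of $S_k$ gives a SO Griesmer code with parameters $[m(2^k-1)+g(k,d),\,k,\,m2^{k-1}+d]$; note $m(2^k-1)+g(k,d) = g(k, m2^{k-1}+d)$ by the computation in the proof of Lemma~\ref{lem-griesmer}. But one must make sure the base $[g(k,d),k,d]$ SO Griesmer code actually exists for the relevant $d$, which is cleanest to get from Theorem~\ref{thm-asymptotic} / Corollary~\ref{cor-SO-SS} directly at dimension $D_0$.)

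Once I have a binary SO $[g(k, D_0), k, D_0]$ Griesmer code in hand, I would apply Lemma~\ref{lem-g} with $d$ replaced by $D_0 = m2^{k-1}+d$: it yields $d_{so}(N,k) = D_0$ for all $N$ with $g(k, D_0) \leq N \leq g(k, D_0 + 2)$. It remains only to rewrite the endpoints of this range in the form stated in the theorem. Using the identity from the proof of Lemma~\ref{lem-griesmer}, $g(k, m2^{k-1}+d) = m(2^k-1) + g(k,d)$ and likewise $g(k, m2^{k-1}+(d+2)) = m(2^k-1) + g(k, d+2)$; substituting these gives exactly $m(2^k-1)+g(k,d) \leq N \leq m(2^k-1)+g(k,d+2)$, so $d_{so}(N,k) = m2^{k-1}+d$ on that interval. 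This is the claimed statement.

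The main obstacle — really the only nontrivial point — is verifying that the hypotheses line up correctly when we pass from $d$ to $D_0 = m2^{k-1}+d$. Specifically I must check: (i) $D_0$ is doubly-even, which holds since $2^{k-1}$ is a multiple of $4$ for $k\geq 3$ and $d$ is doubly-even, and the theorem's hypothesis $m \geq \min\{\cdots\}-1 \geq 0$ forces $k \geq 4$ (so $2^{k-1} \geq 8$); (ii) the bound $D_0 \geq m2^{k-1}$ needed by Theorem~\ref{thm-asymptotic} is immediate; and (iii) the $m$ appearing in Theorem~\ref{thm-asymptotic}'s hypothesis is the same quantity as here, which is true verbatim. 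One should also note the degenerate subtlety that if $d = 0$ the statement is about the zero code, which is trivially SO, and Lemma~\ref{lem-g} still applies; and that when $k \le 3$ the hypothesis on $m$ is vacuous or forces $m$ negative, so there is nothing to prove — but it is worth a sentence confirming $k\geq 4$ in the substantive range. Apart from these bookkeeping checks, no new combinatorial construction is required: the theorem is a clean corollary of Theorem~\ref{thm-asymptotic} and Lemma~\ref{lem-g}.
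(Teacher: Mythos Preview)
Your proposal is correct and follows essentially the same route as the paper: apply Theorem~\ref{thm-asymptotic} with $D_0 = m2^{k-1}+d$ to obtain a binary SO $[g(k,D_0),k,D_0]$ Griesmer code, then invoke Lemma~\ref{lem-g} and rewrite the endpoints using the identity $g(k,m2^{k-1}+d) = m(2^k-1)+g(k,d)$. Your additional bookkeeping (checking $D_0$ is doubly-even, handling degenerate $d$ and small $k$) is more explicit than the paper's version but not a different argument.
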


\begin{proof}
Suppose that $N_1=m(2^k-1)+g(k,d)$ and $D_1= m2^{k-1}+d$. It can be checked that $N_1=g(k,D_1)$.
By Theorem \ref{thm-asymptotic}, there is a binary SO
$[N_1,k,D_1]$ Griesmer code. By Lemma \ref{lem-g},
$$d_{so}(N,k)=m2^{k-1}+d,$$
where $g(k,D_1)\leq N\leq g(k,D_1+2)$. Note that
\begin{align*}
  g(k,D_1+2)= & \sum_{i=0}^{k-1}\left\lceil \frac{D_1+2}{2^i}\right\rceil
  =  \sum_{i=0}^{k-1}\left\lceil \frac{m2^{k-1}+d+2}{2^i}\right\rceil
  =m(2^k-1)+g(k,d+2).
\end{align*}
 This completes the proof.
\end{proof}

\begin{example}
Let $d=4$, $k=7$ and $m\geq \min\left\{\left\lceil \frac{(k+2)(k-3)}{2k}\right\rceil,
\left\lceil \sqrt{2k+\frac{1}{4}}-\frac{3}{2}\right\rceil\right\}-1=2$.
Then $g(7,4)=11$ and $g(7,6)=15$. By Theorem \ref{thm-1}, $d_{so}(N,7)=64m+4$ for $127m+11\leq N\leq 127m+15$. More parameters are given in Table 3.
\end{example}

\begin{center}
\begin{tabular}{ll}
\multicolumn{2}{c}{{\rm Table 3: Binary optimal $[N,7]$ SO codes from Theorem \ref{thm-1}, $m\geq 2$}}\\
\hline
\makebox[0.6\textwidth][l]{$N$}& \makebox[0.15\textwidth][l]{$d_{so}(N,7)$} \\
\hline
$127m, \ldots, 127m+8$ & $64m$\\

$127m+11, \ldots, 127m+15$ &  $64m+4$\\

$127m+18, \ldots, 127m+23$ &  $64m+8$\\

$127m+26, \ldots, 127m+30$ &  $64m+12$\\

$127m+33, \ldots, 127m+39$ &  $64m+16$\\

$127m+42, \ldots, 127m+46$ &  $64m+20$\\

$127m+49, \ldots, 127m+54$ & $64m+24$\\

$127m+57, \ldots, 127m+61$ & $64m+28$\\

$127m+64, \ldots, 127m+71$ &  $64m+32$\\

$127m+74, \ldots, 127m+78$ &  $64m+36$\\

$127m+81, \ldots, 127m+86$ &  $64m+40$\\

$127m+89, \ldots, 127m+93$ &  $64m+44$\\

$127m+96, \ldots, 127m+ 102$ &  $64m+48$\\

$127m+105, \ldots, 127m+109$  & $64m+52$\\

$127m+112, \ldots, 127m+117$ &  $64m+56$\\

$127m+120, \ldots, 127m+124$ & $64m+60$\\
\hline
\end{tabular}
\end{center}

\begin{lem}\label{lem-D-2}
If there is a binary SO $[N,k,D]$ code for $N>2^{k}$ and $D>2$, then there is a binary SO $[N-2,k,D^*\geq D-2]$ code.
\end{lem}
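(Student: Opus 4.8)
The plan is to take a binary SO $[N,k,D]$ code $C$ with $N > 2^k$ and $D > 2$, pick a generator matrix $G$ for it, and shorten the code by deleting two suitably chosen coordinates so that self-orthogonality is preserved and the dimension does not drop. First I would observe that since $N > 2^k > 2^k - 1$, the $N$ columns of $G$ (as vectors in $\F_2^k$) cannot all be distinct: by pigeonhole there exist two coordinates $i \neq j$ whose columns in $G$ are equal, say both equal to $\mathbf{v} \in \F_2^k$. (If $\mathbf{v} = \mathbf{0}$ one can argue even more directly, but the repeated-column case is the one to push through.) Deleting the coordinates $i$ and $j$ gives a $k \times (N-2)$ matrix $G^*$, and I claim $G^*$ still has rank $k$: the only linear combinations of rows that could newly vanish are those that were already supported only on columns $i$ and $j$, and since columns $i$ and $j$ are identical, a row combination vanishing on both of them vanishes on the pair together, so a nonzero combination vanishing on all of $G^*$ would have Hamming weight at most $0$ on the remaining coordinates, hence be the zero codeword — contradiction with $\operatorname{rank}(G) = k$. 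Thus $G^*$ generates a binary $[N-2, k]$ code $C^*$.

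Next I would check self-orthogonality. For the corresponding codewords, if $\mathbf{c} = \mathbf{x}G \in C$ and $\mathbf{c}^* = \mathbf{x}G^* \in C^*$ is obtained by deleting coordinates $i,j$, then the contribution of those two coordinates to the inner product $\langle \mathbf{x}G, \mathbf{y}G\rangle$ is $(\mathbf{x}\cdot\mathbf{v})(\mathbf{y}\cdot\mathbf{v}) + (\mathbf{x}\cdot\mathbf{v})(\mathbf{y}\cdot\mathbf{v}) = 0$ in $\F_2$ precisely because the two deleted columns are equal. Hence $\langle \mathbf{x}G^*, \mathbf{y}G^*\rangle = \langle \mathbf{x}G, \mathbf{y}G\rangle = 0$ for all $\mathbf{x},\mathbf{y}$, so $C^*$ is SO. Similarly, for a single codeword, $\operatorname{wt}(\mathbf{c}^*) = \operatorname{wt}(\mathbf{c}) - 2(\mathbf{x}\cdot\mathbf{v})$, which is $\operatorname{wt}(\mathbf{c})$ or $\operatorname{wt}(\mathbf{c}) - 2$; so every nonzero codeword of $C^*$ has weight at least $D - 2$, and the minimum distance $D^*$ of $C^*$ satisfies $D^* \geq D - 2 > 0$, so $C^*$ is genuinely an $[N-2, k, D^* \geq D-2]$ code. (The hypothesis $D > 2$ guarantees $D^* \geq D - 2 \geq 1$, i.e. $C^*$ is not the zero code, which is consistent with the rank argument above.)

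I would expect the rank-preservation step to be the only place needing care: one has to be sure that deleting two \emph{equal} columns cannot collapse the dimension. The clean way to phrase it is that if $\mathbf{x}G^* = \mathbf{0}$ then $\mathbf{x}G$ is supported on $\{i,j\}$ with equal entries there, so $\mathbf{x}G$ has weight $0$ or $2$; weight $2$ is impossible if one first normalizes so that $D > 2$ forces no weight-$2$ codewords — but more simply, a weight-$2$ codeword supported on two equal columns would mean $\mathbf{x}\cdot\mathbf{v} = 1$ on both, giving weight exactly $2 < D$ unless $D \le 2$, contradiction; hence $\mathbf{x}G = \mathbf{0}$ and $\mathbf{x} = \mathbf{0}$. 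This uses $D > 2$ in an essential way. Everything else is the routine bookkeeping with inner products over $\F_2$ already used in the proof of Lemma \ref{lemma-k-3}.
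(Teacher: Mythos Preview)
Your proposal is correct and follows essentially the same approach as the paper: use pigeonhole on the $N>2^{k}$ columns of a generator matrix to find two identical columns, delete them, and observe that the resulting matrix still generates a self-orthogonal code of dimension $k$ with minimum distance at least $D-2$. The paper's proof is terser---it just writes $G=[G'\mid v\mid v]$ and asserts $G'G'^{T}=GG^{T}=O$---whereas you spell out the inner-product cancellation and, in particular, explicitly justify why $\operatorname{rank}(G^{*})=k$ using the hypothesis $D>2$; this rank step is glossed over in the paper but is exactly where $D>2$ is needed, so your added detail is to the point.
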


\begin{proof}
Let $C$ be a binary SO $[N,k,D]$ code with generator matrix $G$. Since $N>2^{k}$, the matrix $G$ must have the same two columns. Without loss of generality, we assume that
$$G=[G'~|~v~|~v].$$
It can be checked that $G'G'^T=GG^T=O$. Implying that $G'$ generates a binary SO $[N-2,k,D^*\geq D-2]$ code.
\end{proof}

\begin{lem}\label{lem-g-2}
If there is a binary SO $[g(k,d+4),k,d+4]$ Griesmer code with $g(k,d+4)> 2^k$, then
$$d_{so}(N,k)=d+2,$$
where $g(k,d+2)+1\leq N\leq g(k,d+4)-1$.
\end{lem}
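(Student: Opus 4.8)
The plan is to prove matching upper and lower bounds on $d_{so}(N,k)$ for every admissible $N$, after first identifying the interval in question. By Proposition~\ref{prop-D-double-even}, the existence of a binary SO $[g(k,d+4),k,d+4]$ Griesmer code forces $d+4$ to be doubly-even; hence $d\equiv 0\pmod 4$, so $d+4\ge 4>2$ and $d+2$ is even. Writing $g(k,d+4)-g(k,d+2)=\sum_{i=0}^{k-1}\bigl(\lceil (d+4)/2^i\rceil-\lceil (d+2)/2^i\rceil\bigr)$, the terms with $i=0$ and $i=1$ equal $2$ and $1$ respectively, while every term with $i\ge 2$ is zero because $d+4\equiv 0\pmod 4$; thus $g(k,d+4)-g(k,d+2)=3$, and the condition $g(k,d+2)+1\le N\le g(k,d+4)-1$ amounts to $N\in\{g(k,d+4)-2,\ g(k,d+4)-1\}$.

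For the upper bound I would show $d_{so}(N,k)\le d+2$ for all $N\le g(k,d+4)-1$. Since $d_{so}(N,k)$ is even and $d+2$ is even, it suffices to rule out $d_{so}(N,k)\ge d+4$. If such a code existed it would be, in particular, a binary $[N,k]$ code of minimum distance at least $d+4$, so the Griesmer bound would give $N\ge g\bigl(k,d_{so}(N,k)\bigr)\ge g(k,d+4)$, contradicting $N\le g(k,d+4)-1$.

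For the lower bound I would start from the given binary SO $[g(k,d+4),k,d+4]$ Griesmer code. Because $g(k,d+4)>2^k$ and $d+4>2$, Lemma~\ref{lem-D-2} yields a binary SO $[g(k,d+4)-2,\,k,\,D^{*}]$ code with $D^{*}\ge d+2$, which handles $N=g(k,d+4)-2$. Adjoining one zero coordinate to this code changes neither its self-orthogonality, nor its dimension, nor its minimum distance, so it produces a binary SO $[g(k,d+4)-1,\,k,\,D^{*}]$ code with $D^{*}\ge d+2$, which handles $N=g(k,d+4)-1$. Combining the two bounds gives $d_{so}(N,k)=d+2$ for every $N$ in the stated range.

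The only step that requires a little care is the elementary identity $g(k,d+4)-g(k,d+2)=3$, expressing that these two minimum distances are as close as the Griesmer function permits; the rest is a direct use of Proposition~\ref{prop-D-double-even}, the Griesmer bound, and Lemma~\ref{lem-D-2}, so I do not expect a real obstacle.
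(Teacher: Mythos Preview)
Your proof is correct and follows essentially the same approach as the paper's: both compute $g(k,d+4)-g(k,d+2)=3$ to identify the two values of $N$, apply Lemma~\ref{lem-D-2} to obtain the SO $[g(k,d+4)-2,k,\ge d+2]$ code for the lower bound, and use the Griesmer bound (together with evenness of $d_{so}$) for the upper bound. Your argument is in fact slightly more self-contained, since you explicitly invoke Proposition~\ref{prop-D-double-even} to justify $d\equiv 0\pmod 4$ before computing the Griesmer difference, whereas the paper leaves this implicit; the paper handles $N=g(k,d+4)-1$ via the tacit monotonicity $d_{so}(N,k)\ge d_{so}(N-1,k)$, which is exactly your ``adjoin a zero coordinate'' step.
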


\begin{proof}
Let $N_1=g(k,d+2)+1$ and $N_2= g(k,d+4)$. Then $$N_2-N_1=g(k,d+4)-g(k,d+2)-1=\sum_{i=0}^{k-1}\left\lceil \frac{d+4}{2^i}\right\rceil
-\sum_{i=0}^{k-1}\left\lceil \frac{d+2}{2^i}\right\rceil-1=2.$$
By Lemma \ref{lem-D-2}, there is a binary SO $[N_1,k,d+2]$ code.
Suppose that $N_1\leq N\leq N_2-1$. We have
$$d_{so}(N,k)\geq d_{so}(N_1,k)\geq d+2.$$
On the other hand,
$$d_{so}(N,k)\leq d_{so}(N_2-1,k)\leq 2\left\lfloor \frac{d(N_2-1,k)}{2}\right \rfloor \leq d+2.$$
It turns out that $d_{so}(N,k)=d+2$ for $N_1\leq N\leq N_2-1$.
\end{proof}

The following theorem is one of the main results of the paper.
\begin{theorem}\label{thm-3}
Let $m,k,d$ be three nonnegative integers, where $k\geq 5$ and $d$ is doubly-even. If $m\geq \min\left\{\left\lceil \frac{(k+2)(k-3)}{2k}\right\rceil,
\left\lceil \sqrt{2k+\frac{1}{4}}-\frac{3}{2}\right\rceil\right\}-1$, then we have
$$d_{so}(N,k)=m2^{k-1}+d+2,$$
where $m(2^k-1)+g(k,d+2)+1\leq N\leq m(2^k-1)+g(k,d+4)-1$.
\end{theorem}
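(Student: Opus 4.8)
The plan is to combine Theorem~\ref{thm-asymptotic}, Lemma~\ref{lem-g-2}, and the elementary properties of $g(k,\cdot)$, in exactly the same way that Theorem~\ref{thm-1} was deduced from Theorem~\ref{thm-asymptotic} and Lemma~\ref{lem-g}. Set
$$N_2 = m(2^k-1)+g(k,d+4), \qquad D_2 = m2^{k-1}+d+4.$$
First I would verify, by the same computation used in the proof of Theorem~\ref{thm-1}, that $N_2 = g(k,D_2)$ and that $D_2$ is doubly-even (it is $d+4$ plus a multiple of $4$). By Theorem~\ref{thm-asymptotic}, the hypothesis $m \ge \min\{\lceil (k+2)(k-3)/2k\rceil,\ \lceil \sqrt{2k+1/4}-3/2\rceil\}-1$ guarantees a binary SO $[N_2,k,D_2]$ Griesmer code.

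Next I would check the size hypothesis of Lemma~\ref{lem-g-2}, namely $g(k,D_2) > 2^k$. Since $D_2 \ge m2^{k-1}+d+4 \ge 2^{k-1}+4$ (using $m\ge 0$; in fact one should note $m\ge 1$ when $k\ge 5$ by Remark~\ref{remark-1111}, or simply that $d$ doubly-even and the construction forces enough length), we have $g(k,D_2) = \sum_{i=0}^{k-1}\lceil D_2/2^i\rceil \ge 2D_2 > 2^k$; I would write this inequality out cleanly, being careful that it holds under the stated hypotheses (here the assumption $k\ge 5$ together with Remark~\ref{remark-1111} ensuring $m\ge 1$ is what makes the bound comfortable). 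Then Lemma~\ref{lem-g-2}, applied with $D_2 = d'+4$ where $d' = m2^{k-1}+d$, yields
$$d_{so}(N,k) = d'+2 = m2^{k-1}+d+2 \quad\text{for}\quad g(k,d'+2)+1 \le N \le g(k,d'+4)-1.$$
Finally I would translate the range: exactly as in Theorem~\ref{thm-1},
$$g(k,d'+2) = \sum_{i=0}^{k-1}\left\lceil \frac{m2^{k-1}+d+2}{2^i}\right\rceil = m(2^k-1)+g(k,d+2),$$
and similarly $g(k,d'+4) = m(2^k-1)+g(k,d+4)$, which gives precisely the asserted range $m(2^k-1)+g(k,d+2)+1 \le N \le m(2^k-1)+g(k,d+4)-1$.

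The only real content is bookkeeping, so I do not anticipate a genuine obstacle; the one place to be careful is the hypothesis $g(k,D_2) > 2^k$ required by Lemma~\ref{lem-g-2}, which is why the statement carries the extra assumption $k \ge 5$ (as opposed to $k \ge 4$ in Theorem~\ref{thm-1}) — for $k=4$ the value $m$ may be $0$ and the length $g(4,D_2)$ can fail to exceed $2^4=16$, so Lemma~\ref{lem-D-2} (the source of the "shift down by $2$" used inside Lemma~\ref{lem-g-2}) does not apply. I would therefore explicitly flag that step and confirm the inequality $g(k,D_2)>2^k$ under $k\ge 5$ and the given lower bound on $m$ before invoking Lemma~\ref{lem-g-2}.
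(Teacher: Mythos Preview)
Your proposal is correct and follows essentially the same route as the paper's proof: define $N_2=m(2^k-1)+g(k,d+4)$ and $D_2=m2^{k-1}+d+4$, verify $N_2=g(k,D_2)$, invoke Theorem~\ref{thm-asymptotic} to produce an SO $[N_2,k,D_2]$ Griesmer code, apply Lemma~\ref{lem-g-2}, and then translate the range via $g(k,D_2-2)=m(2^k-1)+g(k,d+2)$. One small correction: the inequality $g(k,D_2)\ge 2D_2$ you wrote is false (the geometric sum $\sum_{i=0}^{k-1}2^{-i}=2-2^{1-k}<2$), but the conclusion $N_2>2^k$ still holds for the reason you yourself flag --- namely $k\ge 5$ forces $m\ge 1$ via Remark~\ref{remark-1111}, and then $N_2\ge (2^k-1)+g(k,d+4)\ge (2^k-1)+g(k,4)>2^k$; this is exactly how the paper handles that step.
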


\begin{proof}
Let $N_2=m(2^k-1)+g(k,d+4)$ and $D_2=m2^{k-1}+d+4$. It then follows from Remark \ref{remark-1111} that $N_2>2^k$. In addition, it can be checked that
\begin{align*}
  g(k,D_2)= & \sum_{i=0}^{k-1}\left\lceil \frac{D_2}{2^i}\right\rceil
  = \sum_{i=0}^{k-1}\left\lceil \frac{m2^{k-1}+d+4}{2^i}\right\rceil
  =m(2^k-1)+g(k,d+4)=N_2.
\end{align*}
By Theorem \ref{thm-asymptotic}, there is a binary SO $[N_2,k,D_2]$ Griesmer code. By Lemma \ref{lem-g-2}, we have
$$d_{so}(N,k)=m2^{k-1}+d+2,$$
where $g(k,D_2-2)+1\leq N\leq g(k,D_2)-1$.
Note that
\begin{align*}
  g(k,D_2-2)= & \sum_{i=0}^{k-1}\left\lceil \frac{D_2-2}{2^i}\right\rceil
  =  \sum_{i=0}^{k-1}\left\lceil \frac{m2^{k-1}+d+2}{2^i}\right\rceil
  =m(2^k-1)+g(k,d+2).
\end{align*}
This completes the proof.
\end{proof}

\begin{remark}
By Theorems \ref{thm-1} and \ref{thm-3}, we determine the exact value of $d_{so}(n,k)$ where $n$ is large relative to $k$. In other words, we reduce the problem with an infinite number of cases to the problem with a finite number of cases.
\end{remark}

\begin{example}
Let $d=4$, $k=7$ and $m\geq \min\left\{\left\lceil \frac{(k+2)(k-3)}{2k}\right\rceil,
\left\lceil \sqrt{2k+\frac{1}{4}}-\frac{3}{2}\right\rceil\right\}-1=2$.
Then $g(7,6)=15$ and $g(7,8)=18$. By Theorem \ref{thm-1}, $d_{so}(N,7)=64m+6$ for $127m+16\leq N\leq 127m+17$. More parameters are given in Table 4.
\end{example}

\begin{center}
\begin{tabular}{ll}
\multicolumn{2}{c}{{\rm Table 4: Binary optimal $[N,7]$ SO codes from Theorem \ref{thm-3}, $m\geq 2$}}\\
\hline
\makebox[0.5\textwidth][l]{$N$}& \makebox[0.15\textwidth][l]{$d_{so}(N,7)$} \\
\hline
$127m+9,  127m+10$ & $64m+2$\\

$127m+16,  127m+17$ &  $64m+6$\\

$127m+24,  127m+25$ &  $64m+10$\\

$127m+31,  127m+32$ &  $64m+14$\\

$127m+40,  127m+41$ &  $64m+18$\\

$127m+47,  127m+48$ &  $64m+22$\\

$127m+55,  127m+56$ & $64m+26$\\

$127m+62,  127m+63$ & $64m+30$\\

$127m+72, 127m+73$ &  $64m+34$\\

$127m+79,  127m+80$ &  $64m+38$\\

$127m+87,  127m+88$ &  $64m+42$\\

$127m+94,  127m+95$ &  $64m+46$\\

$127m+103,  127m+ 104$ &  $64m+50$\\

$127m+110,  127m+111$  & $64m+54$\\

$127m+118,  127m+119$ &  $64m+58$\\

$127m+125,  127m+126$ & $64m+62$\\
\hline
\end{tabular}
\end{center}

Combining \cite{SO-40}, Tables 3 and 4, we will complete the characterization of binary optimal SO $[n,7]$ codes if we can determine the exact value of $d_{so}(n,7)$ for $41\leq n\leq 253$.

\begin{example}
We construct SO Griesmer codes of dimension $k=7$ by Corollary \ref{cor-SO-SS}.
\begin{itemize}
  \item For $k=7$ and $s=1$.
By Corollary \ref{cor-SO-SS}, there exists a binary SO Griesmer code with parameters
 $$\left[127-\sum_{i=1}^p(2^{u_i}-1),7,64-\sum_{i=1}^p2^{u_i-1}\right],$$
where $7>u_1>u_2>\cdots>u_p\geq 3$ and $\sum_{i=1}^{\min\{2,p\}}u_i\leq 7$.

\begin{center}
Table 5\\
\begin{tabular}{l|l|l}
  \hline
  $p=0$ & $p=1$      &     $p=2$ \\
  \hline
  $[127,7,64]$ &$[120,7,60]$ & $[105,7,52]$ \\
  &$[112,7,56]$ &  \\
  &$[96,7,48]$ &   \\
  &$[64,7,32]$ &\\
  \hline
\end{tabular}
\end{center}
By Corollary \ref{cor-all-so}, the codes in Table 5 are all binary SO $[n,7]$ Griesmer codes with minimum distance at most $64$.
  \item For $k=7$ and $s=2$. By Corollary \ref{cor-SO-SS}, there exists a binary SO Griesmer code with parameters
 $$\left[254-\sum_{i=1}^p(2^{u_i}-1),7,128-\sum_{i=1}^p2^{u_i-1}\right],$$
where $7>u_1>u_2>\cdots>u_p\geq 3$ and $\sum_{i=1}^{\min\{3,p\}}u_i\leq 14$.

\begin{center}Table 6\\
\begin{tabular}{l|l|l|l}
  \hline
  $p=0$ & $p=1$      &     $p=2$ &$p=3$ \\
  \hline
  $[254,7,128]$ &$[247,7,124]$ & $[232,7,116]$ &   $[201,7,100]$\\
                &$[239,7,120]$ &  $[216,7,108]$& $[169,7,84]$ \\
               &$[223,7,112]$ &  $[208,7,104]$ & $[153,7,76]$ \\
              &$[191,7,96]$ &   $[184,7,92]$&    \\
              &&$[176,7,88]$   &   \\
              &&$[160,7,80]$&   \\
  \hline
\end{tabular}
\end{center}
\end{itemize}
\end{example}

\begin{remark}
Combining the above three tables and Theorem \ref{thm-large-griesmer}, we construct a SO Griesmer code with the minimum distance $D$ and dimension $k=7$ from SO Solomon-Stiffler codes, where $D$ is doubly-even, $D\geq 48$ and $D\neq 68,72$.
\end{remark}


\begin{example}
 There are binary SO $[138,7,68]$ and $[145,7,72]$ Griesmer codes (see \cite{k=7,Tor-1983,Tor-Til-1}). Applying Lemma \ref{lem-g} to the two SO Griesmer codes, we know that $d_{so}(N,7)=68$ for $138\leq N\leq 142$ and $d_{so}(N,7)=72$ for $145\leq N\leq 150$.
Applying Lemma \ref{lem-g-2} to the code $C$, we know that $d_{so}(N,7)=66$ for $136\leq N\leq 137$ and $d_{so}(N,7)=70$ for $N\leq 143,144$.
Similarly, applying Lemmas \ref{lem-g} and \ref{lem-g-2} to the binary SO Griesmer codes in Tables 5 and 6, we can determine the exact value of $d_{so}(N,7)$ for $127\leq N\leq 253$.
Therefore, we only consider the exact value of $d_{so}(N,7)$ for $41\leq N\leq 126$.
\end{example}

\begin{remark}
By Corollary \ref{cor-SO-SS}, we can construct a binary SO $[g(8,D),8,D]$ code for $D=128$ or $D\geq 156$ and $D$ is doubly-even. By \cite{Tor-1983} and \cite{Tor-Til-2}, there is a binary SO $[g(8,D),8,D]$ code for $132\leq D\leq 152$ and $D$ is doubly-even.
By Lemmas \ref{lem-g} and \ref{lem-g-2}, we can determine the exact value of $d_{so}(N,8)$ for $N\geq 255$. Combining with \cite{SO-40}, we will complete the characterization of binary optimal SO $[n,8]$ codes if we can determine the exact value of $d_{so}(N,8)$ for $41\leq N\leq 254$.
\end{remark}

\section{The nonexistence of some binary self-orthogonal codes with dimension 7}

In this section, we prove the nonexistence of some binary self-orthogonal codes with dimension 7 by applying the residual code technique.

\begin{prop}\label{prop-nonexistence}
There are no binary $[45,6,22]$, $[53,6,26]$, $[60,6,30]$, $[47,7,22],$ $[71,7,34]$, $[79,7,38],$ $[93,7,46],$ $[102,7,50],$ $[109,7,54], [117,7,58]$, and $[124,7,62]$ SO codes.
\end{prop}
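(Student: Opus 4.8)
The plan is to separate the eleven parameter triples into two groups according to whether the length meets the Griesmer bound $g(k,d)=\sum_{i=0}^{k-1}\left\lceil d/2^i\right\rceil$. A routine computation shows that the nine triples $[45,6,22]$, $[53,6,26]$, $[60,6,30]$, $[71,7,34]$, $[93,7,46]$, $[102,7,50]$, $[109,7,54]$, $[117,7,58]$, $[124,7,62]$ all satisfy $n=g(k,d)$, so a code with any of these parameters is a Griesmer code, whereas $[47,7,22]$ and $[79,7,38]$ satisfy $n=g(k,d)+1$. The first group is immediate: in each of these nine cases $d\equiv 2\pmod 4$, so a code with such parameters cannot be doubly-even, and Proposition~\ref{prop-D-double-even} then shows it cannot be self-orthogonal.

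For the two remaining triples I would argue by contradiction via the residual code. Assume $C$ is a binary SO $[n,7,d]$ code with $(n,d)\in\{(47,22),(79,38)\}$; note $d\equiv 2\pmod 4$ and $n=g(7,d)+1$. Pick a minimum-weight codeword of $C$ and apply Lemma~\ref{SO-Lemma}: $Res(C,d)$ is a binary $\left[n-d,6,\ge d/2+1\right]$ code, i.e.\ a $[25,6,\ge 12]$ code when $(n,d)=(47,22)$ and a $[41,6,\ge 20]$ code when $(n,d)=(79,38)$. Since $g(6,12)=25$ and $g(6,20)=41$, the Griesmer bound forces the minimum distance of $Res(C,d)$ to be exactly $12$, respectively $20$; hence $Res(C,d)$ is a $[25,6,12]$, respectively $[41,6,20]$, Griesmer code.

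It then remains to show that neither a $[25,6,12]$ nor a $[41,6,20]$ binary code exists. Both have minimum distance at most $2^{5}=32$, so by Helleseth's classification~\cite{Tor-4} such a code would have to be a Solomon-Stiffler code or a Belov code. Here $s=\left\lceil d/2^{5}\right\rceil=1$, so the identity $2^{5}-d=\sum_i 2^{u_i-1}$ determines the exponent sequence uniquely, namely $(u_1,u_2)=(5,3)$ for $d=12$ and $(u_1,u_2)=(4,3)$ for $d=20$; in both cases $u_1+u_2>6$, so the subspaces $\widehat U_1,\widehat U_2\subseteq\F_2^{6}$ intersect in a nonzero vector and $[U_1\mid U_2]$ is not a submatrix of $S_6$, i.e.\ the Solomon-Stiffler construction fails. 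A short check of the distance formula in (\ref{eq-4}) (even $d$ with $u\ge 3$) shows that no Belov decomposition yields these parameters either, so no such Griesmer code exists. This contradicts the construction of $Res(C,d)$, and therefore no SO $[47,7,22]$ or $[79,7,38]$ code exists.

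The main obstacle is this last step for the non-Griesmer group: the clean argument works only because the residual codes happen to be Griesmer codes whose parameters are incompatible with both the Solomon-Stiffler and the Belov forms (equivalently, $d(25,6)<12$ and $d(41,6)<20$). Had the residual code been realizable, one would instead have to run the finer technique advertised in the introduction — using the self-orthogonality of $C$ together with the weight inequalities ${\rm wt}(b)\ge d$ and ${\rm wt}(b+{\bf c}_1)\ge d$ for codewords $b$ of $C$ lying over minimum-weight words of $Res(C,d)$ to pin down the first few rows of a generator matrix of $C$ and force a contradiction there. So the real content of the proof is identifying exactly which small Griesmer codes fail to exist.
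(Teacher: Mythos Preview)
Your proof is correct, though it differs somewhat in organisation from the paper's. The paper treats all eleven cases uniformly: for each putative SO $[n,k,d]$ code with $d\equiv 2\pmod 4$ it applies Lemma~\ref{SO-Lemma} to produce a residual $[n-d,k-1,\ge d/2+1]$ code, and then observes that no such code exists (in nine cases because $n-d<g(k-1,d/2+1)$, and in the two cases $[25,6,12]$ and $[41,6,20]$ by citing~\cite{codetables}). Your two-group split instead dispatches the nine Griesmer triples via Proposition~\ref{prop-D-double-even}; since that proposition is itself proved by exactly the residual-code argument, the two routes coincide in substance. For the remaining two triples you also pass to the residual $[25,6,12]$ and $[41,6,20]$ Griesmer codes, but rather than appealing to tables you invoke Helleseth's classification~\cite{Tor-4} and check that neither the Solomon--Stiffler construction (since $u_1+u_2>6$) nor the Belov construction (since an even-$d$ Belov code always has $d\equiv 2\pmod 4$, as noted in the proof of Theorem~\ref{thm-Belov}, whereas $12$ and $20$ are doubly-even) can realise these parameters. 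This is a genuine, self-contained alternative to the table lookup and has the merit of keeping the argument internal to the paper's machinery; the paper's version is shorter but relies on the external database.
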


\begin{proof}
Suppose that there is a binary SO $[47,7,22]$ code, then it follows from Lemma \ref{SO-Lemma} that there is a binary linear $[25,6,12]$ code, which contradicts the fact that the largest minimum distance of a binary linear $[25,6]$ code is 11 (see \cite{codetables}).
The proof is similar in other cases, so we omit it. This completes the proof.
\end{proof}

\begin{prop}\label{[125-7]}
There are no binary SO $[125,7,62]$ and $[62,7,30]$ codes.
\end{prop}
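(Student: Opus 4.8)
The plan is to run the residual-code method simultaneously for both codes, since they behave identically in the first step. Suppose $C$ is a binary SO $[125,7,62]$ code, respectively a binary SO $[62,7,30]$ code, and pick $\mathbf{c}_1\in C$ of minimum weight. As $62\equiv 30\equiv 2\pmod 4$, Lemma \ref{SO-Lemma} shows $Res(C,\mathbf{c}_1)$ is a $[63,6,32]$ code, respectively a $[32,6,16]$ code. Both are Griesmer codes with minimum distance $\le 2^{k-1}$, so by \cite[Theorem 1.3]{Tor-4} (checking that the Belov alternative is vacuous in both instances) they are unique up to equivalence: the former is the simplex code $\mathcal{S}_6$ (a one-weight code), the latter is $\mathcal{R}(1,5)$ with weight enumerator $1+62z^{16}+z^{32}$; both are self-orthogonal. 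Permuting coordinates we may therefore assume a generator matrix of $C$ of the shape
$$G=\left[\begin{array}{c|c}\mathbf{1} & \mathbf{0}\\ A & B\end{array}\right],\qquad \mathbf{c}_1=(\mathbf{1},\mathbf{0}),$$
with $\langle B\rangle$ equal to the residual code. From $GG^{T}=O$ and $BB^{T}=O$ we read off that every row of $A$ is even and $AA^{T}=O$; thus $\langle A\rangle$ is a self-orthogonal, hence even, code.

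For the $[125,7,62]$ code, writing a generic codeword of $C$ as $(x_0\mathbf{1}+\mathbf{x}A,\ \mathbf{x}B)$ with $\mathrm{wt}(\mathbf{x}B)=32$ for $\mathbf{x}\neq\mathbf{0}$, the bound $d(C)\ge 62$ forces $30\le\mathrm{wt}(\mathbf{x}A)\le 32$, and evenness of $\langle A\rangle$ gives $\mathrm{wt}(\mathbf{x}A)\in\{30,32\}$ for all $\mathbf{x}\neq\mathbf{0}$. Consequently $C$ itself has weights only in $\{0,62,64\}$, and since $G$ has no zero column the Pless power moments of degrees $0,1,2$ force $A_{62}=64$, $A_{64}=63$, and the dual count $A_2^{\perp}=1$. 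Hence the column multiset of $C$ is obtained from all $127$ nonzero vectors of $\F_2^{7}$ by deleting three vectors $P_1,P_2,P_3$ and repeating one (necessarily nonzero) vector $Q$. For every hyperplane $H$ of $\F_2^{7}$ the number of columns lying in $H$ must be $61$ or $63$, which translates into the parity condition $[P_1\in H]+[P_2\in H]+[P_3\in H]\equiv[Q\in H]\pmod 2$; running this over all hyperplanes yields $Q=P_1+P_2+P_3$. Since $\sum_{v\in\F_2^{7}}vv^{T}=O$, we obtain $GG^{T}=-\sum_i P_iP_i^{T}+QQ^{T}=\sum_{i\neq j}P_iP_j^{T}=\sum_{i<j}\bigl(P_iP_j^{T}+P_jP_i^{T}\bigr)$. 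A short computation (write $J-I$ over $\F_2$ as $xy^{T}+yx^{T}$ and expand) shows this matrix is zero only if two of the $P_i$ coincide; as $P_1,P_2,P_3$ are distinct and nonzero, this contradicts $GG^{T}=O$.

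For the $[62,7,30]$ code the same reading-off gives $\mathrm{wt}(\mathbf{x}A)\in\{14,16\}$ for every $\mathbf{x}\neq\mathbf{0}$ other than the unique $\mathbf{x}^{*}$ with $\mathbf{x}^{*}B=\mathbf{1}$, and $\langle A\rangle$ has rank $5$ or $6$ (at least $5$, since only $\mathbf{x}^{*}$ can lie in $\ker A$). In the rank-$5$ case $\langle A\rangle$ is a $[30,5,14]$ SO code with weights $\{14,16\}$ and $z'\in\{0,1,2\}$ zero coordinates; the first power moments determine $z'$ from the weight distribution, the case $z'=2$ shortens to a $[28,5,14]$ SO Griesmer code which is impossible by Proposition \ref{prop-D-double-even} (as $14\not\equiv 0\pmod 4$), and for $z'\in\{0,1\}$ one has $A_2^{\perp}=1$ and the column argument of the previous paragraph carried out inside $\F_2^{5}$ again produces a Gram matrix $\sum_{i<j}(P_iP_j^{T}+P_jP_i^{T})\neq O$, a contradiction. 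In the rank-$6$ case the power moments force the weight enumerator of $\langle A\rangle$ to be $1+31z^{14}+31z^{16}+z^{30}$ with $A_2^{\perp}=1$; in particular $\mathbf{1}\in\langle A\rangle$, so $\mathbf{x}^{*}A=\mathbf{1}$ and all $30$ columns of $A$ lie in the affine hyperplane $H^{*}=\{v:\mathbf{x}^{*}\cdot v=1\}$ of $\F_2^{6}$, missing three of its $32$ points and repeating one. Using $\sum_{v\in H^{*}}vv^{T}=O$, the same parity-then-Gram-matrix argument, run over all hyperplanes other than $(\mathbf{x}^{*})^{\perp}$, closes this case as well.

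The step I expect to be the real work is the bookkeeping for $[62,7,30]$: verifying that the first three power moments leave exactly one admissible weight distribution in each of the rank-$5$ and rank-$6$ branches — including the slightly surprising fact that in the rank-$6$ branch $\mathbf{1}$ must belong to $\langle A\rangle$ — and checking that the resulting column multiset always has the ``all vectors but three, one repeated'' shape to which the Gram-matrix argument applies. The genuinely reusable ingredients are the vanishing of $\sum_{v\in V}vv^{T}$ over the spaces arising here (the nonzero vectors of $\F_2^{5}$ or $\F_2^{7}$, and any affine hyperplane of $\F_2^{6}$), and the fact that $\sum_{i<j}(P_iP_j^{T}+P_jP_i^{T})\neq O$ over $\F_2$ whenever $P_1,P_2,P_3$ are pairwise distinct and nonzero; these are what make every branch close in the same way.
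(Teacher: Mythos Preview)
Your proposal is correct and takes a genuinely different route from the paper's proof.

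The paper argues by iterated residuals: from a putative SO $[125,7,62]$ code it peels off a simplex block to obtain an SO $[62,7,30]$ code $C_2$ (containing $\mathbf{1}$), then from $C_2$ peels off a Reed--Muller block and a further simplex block to reach an SO $[30,6,\ge 14]$ code, contradicting the tabulated value $d_{so}(30,6)=12$ from~\cite{SO-40}. Your approach instead stops after one residual step and analyses the complementary block $A$ directly: power moments (equivalently, the Fourier identity $\sum_{\mathbf x}f(\mathbf x)^2=2^k\sum_v m_v^2$) pin down the column multiset of $A$ as ``all vectors of the relevant ambient set, minus a few, plus one repeat''; a parity check over hyperplanes forces the repeated vector $Q$ to be the sum of the missing $P_i$; and then $GG^T$ (or $AA^T$) collapses to $\sum_{i<j}(P_iP_j^T+P_jP_i^T)$, which is visibly nonzero when the $P_i$ are distinct and nonzero.

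What each approach buys: the paper's chain is shorter and mechanical once one is willing to quote $d_{so}(30,6)=12$, and it illustrates the paper's general theme of stripping off simplex/Reed--Muller pieces. Your argument is self-contained (no external table needed) and, as a byproduct, actually \emph{proves} in the rank-$6$ branch that any SO $[62,7,30]$ code must contain the all-one vector (since the second moment forces $w^*=30$), a point the paper's write-up uses but only for the particular $C_2$ inherited from the $[125,7,62]$ code. Your bookkeeping worry is well-placed but checks out: for the rank-$6$ case one finds $\sum_v m_v^2=32$ only when $w^*=30$, and then $\mathbf x^*A=\mathbf 1$ forces all columns of $A$ into the affine hyperplane $H^*$, whence $z'=0$ automatically.

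One small slip that does not affect the argument: in the rank-$5$ branch with $z'=0$ the column multiset of $A'$ omits only \emph{two} nonzero vectors of $\F_2^5$ (not three), so the Gram matrix is $P_1P_2^T+P_2P_1^T$ rather than the three-term sum; this is still nonzero for distinct nonzero $P_1,P_2$, and the $z'=1$ subcase is the one that matches your ``three missing, one repeated'' template exactly.
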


\begin{proof}
Let $C$ be a binary SO $[125,7,62]$ code with generator matrix $G$.
The first row ${\bf c}_1$ of $G$ is selected to have minimum weight 62.
Then the code $C_1=Res(C,{\bf c}_1)$ is a binary $[63,6,32]$
Griesmer code, which is the simplex code of dimension $6$. So
$$G\sim\left[
\begin{array}{c|c}
G_2&
\begin{array}{c}
   0\cdots0 \\
  S_6
\end{array}
\end{array}
\right].$$
Since $C$ and $C_1$ are SO, $G_2$ generates a binary SO $[62,7]$ code $C_2$ that contains the all-one vector of length $62$.
Then
$$d(C_2)\geq d(C)-\max\{wt({\bf c})~|~{\bf c}\in C_1\}=62-32=30.$$
By the Griesmer bound, $d(C_2)=30$.
That is to say, if there exists a binary SO $[125,7,62]$ code, then there exists a binary SO $[62,7,30]$ code.

The first row ${\bf c'}_1$ of $G_2$ is selected to have minimum weight 30.
Then it follows from Lemma \ref{SO-Lemma} that the code $C_3=Res(C_2,{\bf c'}_1)$ is a binary $[32,6,16]$
Griesmer code, which is the first Reed-Muller code of dimension $6$. So
$$G_2\sim\left[
\begin{array}{c|c}
G_4&
\begin{array}{c}
   0\cdots0 \\
  R(1,5)
\end{array}
\end{array}
\right]\sim
\left[
\begin{array}{c|c}
G_4&
\begin{array}{c}
   0\cdots0 \\
  S_6\backslash S_5
\end{array}
\end{array}
\right]\sim
\left[
\begin{array}{c|c}
G_4&
  \begin{array}{cc}
  0 & 0\cdots 0\\
    1 & 1\cdots 1 \\
    \begin{array}{c}
      0 \\
      \vdots \\
      0
    \end{array}
     & S_5
  \end{array}
\end{array}
\right].$$
It follows from $d(C_2)\geq 30$ that the first row of $G_4$ is the all-one vector of length $30$.
Since $C_2$ contains the all-one vector of length $62$, the second row of $G_4$ is the all-one vector of length $30$. That is,
$$G_2\sim G_2'=
\left[
\begin{array}{ccc}
  1\cdots 1 & 0 & 0\cdots 0 \\
  1\cdots 1 & 1 & 1\cdots 1 \\
  G_5 & \begin{array}{c}
      0 \\
      \vdots \\
      0
    \end{array} & S_5
\end{array}
\right].$$
By deleting the second row and the $31$-th column of $G_2'$, we obtain the following matrix
$$G_6=\left[
\begin{array}{cc}
  1\cdots 1  & 0\cdots 0 \\
  G_5 & S_5
\end{array}
\right]=\left[
\begin{array}{c|c}
G_7&
\begin{array}{c}
   0\cdots0 \\
  S_5
\end{array}
\end{array}
\right],$$
which generates a binary SO $[61,6,30]$ code $C_6$.
Since $C_6$ is SO and $S_5$ generates a binary SO code, $G_7$ generates a binary SO $[30,6]$ code $C_7$. Then
$$d(C_7)\geq d(C_6)-\max\left\{{\mbox{wt}}({\bf x}\cdot S_5)~|~{\bf x}\in \F_2^5\right\}=30-16=14.$$
This contradicts that $d_{so}(30,6)=12$ (see \cite{SO-40}). Hence there are no binary SO $[125,7,62]$ and $[62,7,30]$ codes.
\end{proof}

\begin{prop}\label{[61-6]}
There is no binary SO $[61,6,30]$ code.
\end{prop}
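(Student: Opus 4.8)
The plan is to argue by contradiction using the residual-code technique, in the spirit of the proof of Proposition \ref{[125-7]}. Suppose $C$ is a binary SO $[61,6,30]$ code with generator matrix $G$ whose first row ${\bf c}_1$ has minimum weight $30$. Since $30\equiv 2\pmod 4$, Lemma \ref{SO-Lemma} gives that $C_1=Res(C,{\bf c}_1)$ is a binary $[31,5,16]$ code. First I would pin down $C_1$: as $g(5,16)=16+8+4+2+1=31$, it is a Griesmer code, and counting ones column-by-column in a generator matrix yields $\sum_{{\bf c}\in C_1}{\rm wt}({\bf c})=\sum_{j=1}^{31}|\{{\bf c}\in C_1:c_j=1\}|\leq 31\cdot 16$, while the $31$ nonzero codewords each have weight at least $16$; hence every nonzero codeword of $C_1$ has weight exactly $16$. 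In particular $C_1$ is doubly-even, so it is SO by \cite[Theorem 1.4.8 (ii)]{Huffman} (equivalently, $C_1$ is equivalent to the simplex code $\mathcal{S}_5$), and its maximum weight is $16$.

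Next I would exploit the self-orthogonality of $C$ to fix the block structure of $G$. After permuting coordinates so that $supp({\bf c}_1)=\{1,\ldots,30\}$, we may write
$$G\sim\left[\begin{array}{c|c} 1\cdots 1 & 0\cdots 0\\ A & G_1\end{array}\right],$$
where $A$ is $5\times 30$ and $G_1$ is a $5\times 31$ generator matrix of $C_1$. Let $G_2=\left[\begin{array}{c}1\cdots 1\\ A\end{array}\right]$ consist of the first $30$ columns of $G$, and let $C_2$ be the code it generates. Since $GG^T=O$ and $C_1$ is SO, we get $G_2G_2^T=GG^T-\left[\begin{array}{c}0\\ G_1\end{array}\right]\left[\begin{array}{c}0\\ G_1\end{array}\right]^T=O$, so $C_2$ is self-orthogonal. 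A short check shows $G_2$ has rank $6$: if ${\bf x}G_2=0$ with ${\bf x}\neq 0$, then ${\bf x}G$ is supported in the last $31$ coordinates and equals a codeword of $C_1$, hence has weight at most $16<30=d(C)$, contradicting ${\bf x}G\neq 0$. Finally, for every nonzero ${\bf x}\in\F_2^6$ we have ${\bf x}G=({\bf x}G_2,{\bf c}')$ with ${\bf c}'\in C_1$, so ${\rm wt}({\bf x}G_2)={\rm wt}({\bf x}G)-{\rm wt}({\bf c}')\geq 30-16=14$. Thus $C_2$ is a binary SO $[30,6,\geq 14]$ code, which contradicts $d_{so}(30,6)=12$ \cite{SO-40}; therefore no binary SO $[61,6,30]$ code exists.

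The step I expect to require the most care is establishing that $C_1$ has maximum weight $16$ (equivalently, that the $[31,5,16]$ residual code is one-weight): this is exactly what makes the closing estimate $d(C_2)\geq 30-16=14$ strictly exceed $d_{so}(30,6)$, and it rests on the counting identity together with the Griesmer optimality of $C_1$. The other ingredients — the block decomposition of $G$, the rank-$6$ verification, and the identity $G_2G_2^T=O$ — are routine manipulations with the relation $GG^T=O$ and should not present difficulty.
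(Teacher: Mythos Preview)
Your proof is correct and follows essentially the same route as the paper. The paper's omitted argument (``similar to Proposition~\ref{[125-7]}'') is exactly this: the residual with respect to a weight-$30$ codeword is the $[31,5,16]$ simplex code $\mathcal{S}_5$, which is SO with maximum weight $16$; hence the left $6\times 30$ block generates a binary SO $[30,6,\geq 14]$ code, contradicting $d_{so}(30,6)=12$ from \cite{SO-40}. Your counting argument to show the $[31,5,16]$ residual is one-weight (hence doubly-even, hence SO) supplies the detail the paper leaves implicit, and your rank-$6$ verification for $G_2$ is a clean way to handle what the paper glosses over.
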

\begin{proof}
The proof is similar to that of Proposition \ref{[125-7]}. So we omit it.
\end{proof}

\begin{prop}\label{[118-7]}
There is no binary SO $[118,7,58]$ code.
\end{prop}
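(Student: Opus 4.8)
The plan is to mimic the residual-code argument used for the $[125,7,62]$ and $[61,6,30]$ cases, peeling off one minimum-weight codeword at a time and using self-orthogonality to pin down the structure of the generator matrix until we reach a contradiction against a known value of $d_{so}$ in small dimension. First I would suppose $C$ is a binary SO $[118,7,58]$ code with generator matrix $G$, and choose the first row ${\bf c}_1$ of $G$ to have minimum weight $58$. Since $58\equiv 2\pmod 4$, Lemma \ref{SO-Lemma} gives that $C_1=Res(C,{\bf c}_1)$ is a binary $[60,6,30]$ code; by the Griesmer bound $g(6,30)=30+15+8+4+2+1=60$, so $C_1$ is in fact a Griesmer code, and by Proposition \ref{prop-D-double-even} together with the classification it must be a SO $[60,6,30]$ Griesmer code — but wait, $30$ is singly-even, so $C_1$ cannot be SO; instead I should use that a $[60,6,30]$ code is optimal with respect to the Griesmer bound and identify it from Table 1 (a Solomon-Stiffler code with parameters $[48,6,24]$ shifted, or directly the $[60,6,30]$ code). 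More carefully: Corollary \ref{cor-SO-SS} and the residual code structure force $C_1$ to be (equivalent to) a specific code whose codeword weights are bounded, so that up to equivalence
$$G\sim\left[\begin{array}{c|c} G_2 & \begin{array}{c} 0\cdots 0\\ H \end{array}\end{array}\right],$$
where $H$ generates the $[60,6,30]$ code $C_1$.

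Next, using that both $C$ and $C_1$ are available and that $C$ is SO, I would argue that $G_2$ generates a binary SO $[58,7]$ code $C_2$ which contains the all-one vector of length $58$ (this all-one vector arises because ${\bf c}_1$ restricted to its complement is zero while the overall word stays in $C$, exactly as in the proof of Proposition \ref{[125-7]}). Then $d(C_2)\ge d(C)-\max\{{\rm wt}({\bf c})\mid {\bf c}\in C_1\}$, and the maximum weight of $C_1$ must be computed: for the relevant $[60,6,30]$ Griesmer code the maximum weight is $32$, giving $d(C_2)\ge 58-32=26$. By the Griesmer bound $g(7,26)=26+13+7+4+2+1+1=54>58$? — here I need to check arithmetic; more likely the chain forces $d(C_2)=28$ or $d(C_2)\ge 28$, and by Griesmer $d(C_2)$ is pinned down exactly. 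So if a $[118,7,58]$ SO code exists, a SO $[58,7,d']$ code with a concrete $d'\equiv 2\pmod 4$ (likely $28$ or $30$) exists and contains the all-one word.

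From there I would repeat the residual/self-orthogonality peeling: take a minimum-weight row ${\bf c}_1'$ of $G_2$, form $Res(C_2,{\bf c}_1')$, use Lemma \ref{SO-Lemma} to boost its minimum distance, and identify it as a Griesmer code — the first-order Reed-Muller code $\mathcal{R}(1,5)$ or a simplex-type code — so that $G_2$ has the block form exhibited in the proof of Proposition \ref{[125-7]} with $R(1,5)$ or $S_5\setminus(\text{something})$ in the lower-right corner. The all-one vectors of length $58$ and of the residual's support force the first two rows of the inner matrix to be all-ones, and after deleting one redundant row and one column (the standard trick) we obtain a SO code of dimension $6$ and short length whose minimum distance, by the same "subtract the max anticode weight" inequality, exceeds $d_{so}$ of that length and dimension as recorded in \cite{SO-40} — the desired contradiction. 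The main obstacle I anticipate is bookkeeping: correctly identifying which $[60,6,30]$-type and subsequent residual codes appear (there may be more than one inequivalent optimal code, e.g. one could use $\mathcal{R}(1,5)$ plus columns from $S_6\setminus S_5$ and one could use a punctured simplex code), and tracking the maximum anticode weights through two or three layers so that each Griesmer-bound comparison comes out as a strict violation. If at some layer the minimum distance only meets — rather than exceeds — the known $d_{so}$, the argument stalls and one would need an additional parity or weight-enumerator refinement, but by analogy with Proposition \ref{[125-7]} I expect a clean contradiction against $d_{so}(30,6)=12$ or a similarly small known value.
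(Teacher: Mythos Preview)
Your plan breaks at the step ``$G_2$ generates a binary SO $[58,7]$ code.'' In Proposition~\ref{[125-7]} that conclusion works because the residual $C_1$ is the simplex code $[63,6,32]$, which is itself SO; then $GG^T=G_2G_2^T+H'(H')^T=0$ forces $G_2G_2^T=0$. Here the residual $[60,6,30]$ code is a Griesmer code with singly-even minimum distance, and you yourself note it cannot be SO. Hence $H'(H')^T\neq 0$, so $G_2G_2^T\neq 0$, and $C_2$ is \emph{not} self-orthogonal. Everything downstream (iterating the residual argument on $C_2$, invoking Lemma~\ref{SO-Lemma} again, comparing to $d_{so}(30,6)$) collapses.

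The paper does not try to split off the left block. It first identifies the $[60,6,30]$ residual explicitly as the Solomon-Stiffler code with generator $[S_6\setminus S_2]$ and records its weight distribution $\{(0,1),(30,48),(32,15)\}$. Then, using only the inequalities ${\rm wt}({\bf c}_i)\ge 58$, ${\rm wt}({\bf c}_1+{\bf c}_i)\ge 58$, and the inner-product constraints from self-orthogonality of $C$ (not of any subblock), it pins down ${\rm wt}({\bf g}_2)={\rm wt}({\bf g}_3)=28$ and $|{\rm supp}({\bf g}_2)\cap{\rm supp}({\bf g}_3)|=13$. Crucially, it then takes the residuals with respect to ${\bf c}_2$ and ${\bf c}_3$ as well; each is again a $[60,6,30]$ Solomon-Stiffler code, which forces the seven $4\times 15$ blocks under the first three rows to all equal $S_4$. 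What remains is a $4\times 13$ block $G_{13}$ which---because $C$ is SO and all the $S_4$ blocks are SO---must itself generate an SO code of rank four with all nonzero weights in $\{6,8\}$, i.e.\ an SO $[13,4,6]$ code. That contradicts $d_{so}(13,4)=4$. The essential idea you are missing is to exploit \emph{several} residuals $Res(C,{\bf c}_i)$ simultaneously to reconstruct almost the entire generator matrix, rather than trying to peel off an SO subcode in one shot.
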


\begin{proof}
Let $C$ be a binary SO $[118,7,58]$ code with generator matrix $G$.
We first discuss how to select the first three rows in the generator matrix $G$ for the code $C$.
The first row ${\bf c}_1$ is selected to have minimum weight 58. Since $58 \equiv 2 \pmod{4}$ it follows from Lemma \ref{SO-Lemma} that the code $C_0=Res(C,{\bf c}_1)$ has parameters $[60,6,30]$ and is a Griesmer code. It is shown that a code with these parameters has to be a Solomon-Stiffler code with the generator matrix $G_0= [S_6 \setminus S_2]$ (see \cite{Tor-4}).
Without loss of generality, we can assume that
$$G_0= [S_6\backslash S_2]=\left[
\begin{array}{cccc}
\overbrace{1\cdots1}^{15}& \overbrace{1\cdots1}^{15}& \overbrace{0\cdots0}^{15} & \overbrace{0\cdots0}^{15}\\
1\cdots1 &0\cdots0& 1\cdots1 &0\cdots0\\
S_4&S_4&S_4&S_4
\end{array}
\right]=\left[
\begin{array}{c}
  {\bf r}_2 \\
  {\bf r}_3 \\
  {\bf r}_4 \\
  {\bf r}_5 \\
  {\bf r}_6 \\
  {\bf r}_7
\end{array}
\right].$$
Note that the code $C_0$ has weight distribution $\{(0,1),(30,48),(32,15)\}$. It can be checked that
${\bf c}\in C_0$ has weight 32 if and only if ${\bf c}=a_4{\bf r}_4+a_5{\bf r}_5+a_{6}{\bf r}_6+a_7{\bf r}_7$ and there exists $4\leq i\leq 7$ such that $a_i\neq 0$.
Let ${\bf g}_{i+1}$ be the $i$-th row of $G_1$ for $1\leq i\leq 6$. Hence ${\bf c}_{i}=({\bf g}_i,{\bf r}_i)$ is the $i$-th row of $G$ for $2\leq i\leq 7.$
Hence
$$G\sim\left[
\begin{array}{c|c}
  1\cdots 1 & 0\cdots0 \\
  G_1&S_6\backslash S_2
\end{array}
\right]\sim
\left[
\begin{array}{c|cccc}
 \overbrace{1\cdots1}^{58} & \overbrace{0\cdots0}^{15}&\overbrace{0\cdots0}^{15}& \overbrace{0\cdots0}^{15}& \overbrace{0\cdots0}^{15} \\
 \hline
  \multirow{3}{*}{$G_1$} & 1\cdots1 &1\cdots1& 0\cdots0 &0\cdots0\\
       &1\cdots1 &0\cdots0& 1\cdots1 &0\cdots0\\
       &S_4&S_4&S_4&S_4
\end{array}
\right]=\left[
\begin{array}{c|c}
  \overbrace{1\cdots 1}^{58} & \overbrace{0\cdots 0}^{60} \\
  \hline
 {\bf g}_2&{\bf r}_2\\
 {\bf g}_3&{\bf r}_3\\
 {\bf g}_4&{\bf r}_4\\
 {\bf g}_5&{\bf r}_5\\
 {\bf g}_6&{\bf r}_6\\
 {\bf g}_7&{\bf r}_7
\end{array}
\right].$$

Since ${\bf c}_1\cdot {\bf c}_2=0$, ${\bf g}_2$ is even.
Since ${\mbox{wt}}({\bf c}_2)={\mbox{wt}}({\bf g}_2)+{\mbox{wt}}({\bf r}_2)\geq 58$ and ${\mbox{wt}}({\bf c}_1+{\bf c}_2)=(58-{\mbox{wt}}({\bf g}_2))+{\mbox{wt}}({\bf r}_2)\geq 58$, ${\mbox{wt}}({\bf g}_1)=28$ or $30$. We can assume without loss of generality (adding ${\bf c}_1$ to ${\bf c}_2$ if needed) that ${\mbox{wt}}({\bf c}_2)=58$ and
$$G=\left[
\begin{array}{c|c}
  \overbrace{1\cdots 1}^{28}~\overbrace{1\cdots 1}^{30} & \overbrace{0\cdots 0}^{30}~\overbrace{0\cdots 0}^{30} \\
  1\cdots 1~~0\cdots 0&1\cdots 1~~0\cdots 0\\
  \hline
 {\bf g}_3&{\bf r}_3\\
 {\bf g}_4&{\bf r}_4\\
 {\bf g}_5&{\bf r}_5\\
 {\bf g}_6&{\bf r}_6\\
 {\bf g}_7&{\bf r}_7
\end{array}
\right].$$
Similarly, we can assume that ${\mbox{wt}}({\bf g}_3)=28$ (adding ${\bf c}_1$ to ${\bf c}_3$ if needed).
Since
$${\mbox{wt}}({\bf c}_2+{\bf c}_3)={\mbox{wt}}({\bf g}_2+{\bf g}_3)+{\mbox{wt}}({\bf r}_2+{\bf r}_3)={\mbox{wt}}({\bf g}_2+{\bf g}_3)+30\geq 58,$$
${\mbox{wt}}({\bf g}_2+{\bf g}_3)\geq 28$, i.e., $|{\mbox{supp}}({\bf g}_2) \cap {\mbox{supp}}({\bf g}_3)| \leq 14$.
Since
$${\mbox{wt}}({\bf c}_1+{\bf c}_2+{\bf c}_3)=(58-{\mbox{wt}}({\bf g}_2+{\bf g}_3))+{\mbox{wt}}({\bf r}_2+{\bf r}_3)=(58-{\mbox{wt}}({\bf g}_2+{\bf g}_3))+30\geq 58,$$
${\mbox{wt}}({\bf g}_2+{\bf g}_3)\leq 30$, i.e., $|{\mbox{supp}}({\bf g}_2) \cap {\mbox{supp}}({\bf g}_3)| \geq 13$.
Since
$${\bf c}_2\cdot{\bf c}_3={\bf g}_2\cdot {\bf g}_3+{\bf r}_2\cdot {\bf r}_3={\bf g}_2\cdot {\bf g}_3+1=0,$$
$|{\mbox{supp}}({\bf g}_2) \cap {\mbox{supp}}({\bf g}_3)|$ is odd. This implies that
$|{\mbox{supp}}({\bf g}_2) \cap {\mbox{supp}}({\bf g}_3)| =13$. Hence we may assume that
$$G=\left[
\begin{array}{cccc|cccc}
  \overbrace{1\cdots 1}^{13}&\overbrace{1\cdots 1}^{15}&\overbrace{1\cdots 1}^{15}&\overbrace{1\cdots 1}^{15} & \overbrace{0\cdots0}^{15}&\overbrace{0\cdots0}^{15}& \overbrace{0\cdots0}^{15}& \overbrace{0\cdots0}^{15} \\
  1\cdots 1&1\cdots 1&0\cdots 0&0\cdots 0&1\cdots1&1\cdots1&0\cdots0&0\cdots0\\
   1\cdots1&0\cdots0&1\cdots1&0\cdots0&1\cdots1&0\cdots0&1\cdots1&0\cdots0\\
   \hline
 G_{13}&A_1&A_2&A_3&S_4&S_4&S_4&S_4
\end{array}
\right].$$

If we consider the residual $[60,6,30]$ codes $Res(C,{\bf c}_2)$, and $Res(C,{\bf c}_3)$, then we can obtain $A_1=A_2=A_3=[S_4]$. That is to say,
$$G=\left[
\begin{array}{cccc|cccc}
  \overbrace{1\cdots 1}^{13}&\overbrace{1\cdots 1}^{15}&\overbrace{1\cdots 1}^{15}&\overbrace{1\cdots 1}^{15} & \overbrace{0\cdots0}^{15}&\overbrace{0\cdots0}^{15}& \overbrace{0\cdots0}^{15}& \overbrace{0\cdots0}^{15} \\
  1\cdots 1&1\cdots 1&0\cdots 0&0\cdots 0&1\cdots1&1\cdots1&0\cdots0&0\cdots0\\
   1\cdots1&0\cdots0&1\cdots1&0\cdots0&1\cdots1&0\cdots0&1\cdots1&0\cdots0\\
   \hline
 G_{13}&S_4&S_4&S_4&S_4&S_4&S_4&S_4
\end{array}
\right].$$

We next consider the $4 \times 13$ matrix $G_{13}$. Since the code with the generator matrix $G$ generates a binary SO code, the code generated by $G_{13}$ also needs to be SO. We will show that this is impossible and thus we prove the nonexistence of the binary SO $[118,7,58]$ code.
Let ${\bf x}$ be a codeword in the linear code that has generator matrix $G_{13}$.
Let ${\bf y}$ be the corresponding codeword in the linear code that has generator matrix $[S_4,S_4,S_4,S_4,S_4,S_4,S_4]$.
Then it follows that
\begin{align*}
  {\mbox{wt}}({\bf c}_1 + ({\bf x},{\bf y}))
  &= 13 - {\mbox{wt}}({\bf x}) + (56 -3) = 66 - {\mbox{wt}}({\bf x})
  \geq  58~  {\rm and}\\
  {\mbox{wt}}({\bf c}_1 +{\bf c}_2 + ({\bf x},{\bf y}))
  &=  {\mbox{wt}}({\bf x}) + (56 -4) = 52+ {\mbox{wt}}({\bf x})
  \geq  58.
\end{align*}
Hence, in the code $C_{13}$ generated by $G_{13}$  all its nonzero codewords have weight $w$ where $6 \leq w \leq 8$. Since all weights of the codewords in $C$ are even, the same holds for $C_{13}$. Therefore,  $C_{13}$ has only  nonzero codewords having weight $w=6$ and $w=8$. Furthermore, we observe that  $G_{13}$ has rank four since otherwise we can construct a nonzero codeword in $G$ with zeros in its rightmost 13 positions of weight $56$, which is a contradiction. Hence, there exists a binary SO $[13,4,6]$ code, which contradicts that $d_{so}(13,4)=4$ (see \cite{SO-40}).
\end{proof}

\begin{prop}\label{[110,7,54]}
There is no binary SO $[110,7,54]$ code.
\end{prop}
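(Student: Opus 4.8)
The plan is to adapt the residual-code arguments of Propositions~\ref{[125-7]} and~\ref{[118-7]}. Suppose, for contradiction, that $C$ is a binary SO $[110,7,54]$ code with generator matrix $G$, and choose the first row ${\bf c}_1$ of $G$ to have weight $54$. Since $54\equiv 2\pmod 4$, Lemma~\ref{SO-Lemma} shows that $Res(C,{\bf c}_1)$ is a binary $[56,6,28]$ code, and since $g(6,28)=56$ this is a Griesmer code. As its minimum distance $28\le 2^{5}$, \cite[Theorem~1.3]{Tor-4} forces $Res(C,{\bf c}_1)$ to be either a Solomon--Stiffler code or a Belov code; with $s=\lceil 28/2^{5}\rceil=1$ and $s2^{5}-28=4=2^{3-1}$, the only Solomon--Stiffler parameters are $p=1$, $u_1=3$, and a short check on the possible anticode weights rules out a Belov code. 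Hence, up to equivalence, $Res(C,{\bf c}_1)$ is generated by $[S_6\setminus S_3]$; it is itself SO by Theorem~\ref{SO-SS-codes}, it has weight enumerator $\{(0,1),(28,56),(32,7)\}$, and in the standard block form of $[S_6\setminus S_3]$, whose top three rows are eight side-by-side copies of $S_3$, the eight codewords of weight $0$ or $32$ are precisely those in the rowspace of these three rows.

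Next I would bring $G$ to the form in which the first row is $({\bf 1}_{54}\mid{\bf 0}_{56})$ and the remaining $6\times 110$ block equals $[\,G_1\mid S_6\setminus S_3\,]$ for some $6\times 54$ matrix $G_1$. Self-orthogonality of $C$ together with that of $[S_6\setminus S_3]$ gives $G_1G_1^{T}=O$ and all rows of $G_1$ even; also $G_1$ has rank $6$, for otherwise $C$ would contain a nonzero codeword supported on the last $56$ coordinates, necessarily of weight at most $32<54$. Thus $G_1$ generates a binary SO $[54,6]$ code $C'$, and for every nonzero $c=({\bf g},{\bf r})\in C$ the inequalities ${\rm wt}(c)\ge 54$ and ${\rm wt}(c+{\bf c}_1)\ge 54$, combined with ${\rm wt}({\bf r})\in\{0,28,32\}$, confine ${\rm wt}({\bf g})$ to a short interval; in particular $d(C')\ge 22$. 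Then, exactly as in the proof of Proposition~\ref{[118-7]}, I would feed several minimum-weight codewords of $C$ back through Lemma~\ref{SO-Lemma} and \cite[Theorem~1.3]{Tor-4} --- their residuals are again forced to be the $[56,6,28]$ Solomon--Stiffler code --- and use the resulting rigidity, the orthogonality relations ${\bf g}_i\cdot{\bf g}_j=0$ (equivalent to ${\bf c}_i\cdot{\bf c}_j=0$ because $[S_6\setminus S_3]$ is SO), and the consequent bounds on the support intersections $|{\rm supp}({\bf g}_i)\cap{\rm supp}({\bf g}_j)|$ to determine the first several rows of $G_1$ up to equivalence and align them with the block structure of $[S_6\setminus S_3]$. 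This isolates a small coordinate set on which the remaining rows of $G$ restrict to a matrix that, by self-orthogonality of $C$ and of the $S_3$-blocks, must itself generate a binary SO code.

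The proof then closes as in Proposition~\ref{[118-7]}: the parameters of that small SO code --- its length, dimension, and the minimum distance forced by the weight inequalities above --- contradict one of the exact small values of $d_{so}(n,k)$ recorded in \cite{SO-40} or \cite{SO-5-6}. I expect the genuinely hard step to be the middle one: simultaneously juggling ${\rm wt}({\bf c}_i)\ge 54$, ${\rm wt}({\bf c}_i+{\bf c}_j)\ge 54$, ${\bf c}_i\cdot{\bf c}_j=0$, and the weight enumerator of $Res(C,{\bf c}_1)$ to pin down the support pattern of the rows of $G_1$ relative to the highly structured matrix $[S_6\setminus S_3]$, and then identifying the exact coordinate set that produces the forbidden small SO subcode --- the same delicate bookkeeping that occupies most of the proof of Proposition~\ref{[118-7]}. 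The residual step, the exclusion of the Belov code, and the various self-orthogonality computations are all routine.
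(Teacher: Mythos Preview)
Your plan is the paper's approach, and the setup through the identification of $Res(C,{\bf c}_1)$ as the SO Solomon--Stiffler code $[S_6\setminus S_3]$ with weight enumerator $\{(0,1),(28,56),(32,7)\}$ is correct. But ``exactly as in Proposition~\ref{[118-7]}'' understates what is needed: there the residual was $[S_6\setminus S_2]$, and pinning down two pattern rows ${\bf c}_2,{\bf c}_3$ sufficed to isolate the $S_4$-blocks and leave a $4\times 13$ submatrix. Here the blocks are copies of $S_3$, so the paper must fix \emph{three} pattern rows ${\bf c}_2,{\bf c}_3,{\bf c}_4$; besides the pairwise computations you anticipate (each ${\rm wt}({\bf g}_i)=26$ and $|{\rm supp}({\bf g}_i)\cap{\rm supp}({\bf g}_j)|=12$), a genuinely new step is required --- bounding the triple overlap via $26\le{\rm wt}({\bf g}_2+{\bf g}_3+{\bf g}_4)\le 28$ forces $|{\rm supp}({\bf g}_2)\cap{\rm supp}({\bf g}_3)\cap{\rm supp}({\bf g}_4)|=5$. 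Only after this do the three further residuals $Res(C,{\bf c}_2),Res(C,{\bf c}_3),Res(C,{\bf c}_4)$ force all seven $A$-blocks to equal $S_3$, isolating a $3\times 5$ submatrix $G_5$.

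The endgame is also not quite what you predict: $G_5$ must have rank $3$ (else one manufactures a codeword of weight below $54$) and must generate an SO code by the orthogonality splitting you note, but no SO $[5,3]$ code exists because $2\cdot 3>5$. So the contradiction is a raw dimension obstruction, not a $d_{so}$ value from \cite{SO-40} or \cite{SO-5-6}. Your outline would succeed once these two extra ingredients --- the triple-intersection count and the $[5,3]$ non-existence --- are supplied.
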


\begin{proof}
Let $C$ be a binary SO $[110,7,54]$ code with generator matrix $G$.
We first discuss how to select the first four rows in the generator matrix $G$ for the code $C$.
The first row ${\bf c}_1$ is selected to have minimum weight 54. Since $54 \equiv 2 \pmod{4}$ it follows from Lemma \ref{SO-Lemma} that the code $C_0=Res(C,{\bf c}_1)$ has parameters $[56,6,28]$ and is a Griesmer code. It is shown that a code with these parameters has to be a Solomon-Stiffler code with the generator matrix $G_0= [S_6 \setminus S_3]$ (see \cite{Tor-4}).
Without loss of generality, we can assume that $G\sim\left[
\begin{array}{c|c}
  1\cdots 1 & 0\cdots0 \\
  G_1&S_6\backslash S_3
\end{array}
\right]$, where
$G_0=[S_6\backslash S_3]=$
$$\left[
\begin{array}{cccccccc}
\overbrace{1\cdots1}^{7}& \overbrace{1\cdots1}^{7}&\overbrace{1\cdots1}^{7}& \overbrace{1\cdots1}^{7}& \overbrace{0\cdots0}^{7} & \overbrace{0\cdots0}^{7}&\overbrace{0\cdots0}^{7} & \overbrace{0\cdots0}^{7}\\
1\cdots1 &1\cdots1 &0\cdots0&0\cdots0& 1\cdots1&1\cdots1  &0\cdots0&0\cdots0\\
1\cdots1 &0\cdots0 &1\cdots1&0\cdots0& 1\cdots1&0\cdots0  &1\cdots1&0\cdots0\\
S_3&S_3&S_3&S_3&S_3&S_3&S_3&S_3
\end{array}
\right]=\left[
\begin{array}{c}
  {\bf r}_2 \\
  {\bf r}_3 \\
  {\bf r}_4 \\
  {\bf r}_5 \\
  {\bf r}_6 \\
  {\bf r}_7
\end{array}
\right].$$
Note that the code $C_0$ has weight distribution $\{(0,1),(28,56),(32,7)\}$. It can be checked that
${\bf c}\in C_0$ has weight 32 if and only if ${\bf c}=a_5{\bf r}_5+a_{6}{\bf r}_6+a_7{\bf r}_7$ and there exists $5\leq i\leq 7$ such that $a_i\neq 0$.
Let ${\bf g}_{i+1}$ be the $i$-th row of $G_1$ for $1\leq i\leq 6$. Hence ${\bf c}_{i}=({\bf g}_i,{\bf r}_i)$ is the $i$-th row of $G$ for $2\leq i\leq 7.$
Hence
\begin{align*}
  G\sim & \left[
\begin{array}{c|cccccccc}
 \overbrace{1\cdots1}^{54} & \overbrace{0\cdots0}^{7}&\overbrace{0\cdots0}^{7}& \overbrace{0\cdots0}^{7}& \overbrace{0\cdots0}^{7}& \overbrace{0\cdots0}^{7}&\overbrace{0\cdots0}^{7}& \overbrace{0\cdots0}^{7}& \overbrace{0\cdots0}^{7} \\
 \hline
  \multirow{4}{*}{$G_1$} & 1\cdots1 &1\cdots1& 1\cdots1 &1\cdots1& 0\cdots0 &0\cdots0& 0\cdots0 &0\cdots0\\
       &1\cdots1&1\cdots1 &0\cdots0&0\cdots0&1\cdots1 &1\cdots1 &0\cdots0&0\cdots0\\
       &1\cdots1 &0\cdots0 &1\cdots1&0\cdots0& 1\cdots1&0\cdots0  &1\cdots1&0\cdots0\\
&S_3&S_3&S_3&S_3&S_3&S_3&S_3&S_3
\end{array}
\right] \\
  = & \left[
\begin{array}{c|c}
  \overbrace{1\cdots 1}^{54} & \overbrace{0\cdots 0}^{56} \\
  \hline
 {\bf g}_2&{\bf r}_2\\
 {\bf g}_3&{\bf r}_3\\
 {\bf g}_4&{\bf r}_4\\
 {\bf g}_5&{\bf r}_5\\
 {\bf g}_6&{\bf r}_6\\
 {\bf g}_7&{\bf r}_7
\end{array}
\right].
\end{align*}

Since ${\bf c}_1\cdot {\bf c}_2=0$, ${\bf g}_2$ is even.
Since
\begin{align*}
  {\mbox{wt}}({\bf c}_2)= &~ {\mbox{wt}}({\bf g}_2)+{\mbox{wt}}({\bf r}_2)={\mbox{wt}}({\bf g}_2)+28\geq 54~{\rm and} \\
  {\mbox{wt}}({\bf c}_1+{\bf c}_2)= &~ (54-{\mbox{wt}}({\bf g}_2))+{\mbox{wt}}({\bf r}_2)=(54-{\mbox{wt}}({\bf g}_2))+28\geq 54,
\end{align*}
${\mbox{wt}}({\bf g}_1)=26$ or $28$. We can assume without loss of generality (adding ${\bf c}_1$ to ${\bf c}_2$ if needed) that ${\mbox{wt}}({\bf c}_2)=54$ and
$$G=\left[
\begin{array}{c|c}
  \overbrace{1\cdots 1}^{26}~\overbrace{1\cdots 1}^{28} & \overbrace{0\cdots 0}^{28}~\overbrace{0\cdots 0}^{28} \\
  1\cdots 1~~0\cdots 0&1\cdots 1~~0\cdots 0\\
  \hline
 {\bf g}_3&{\bf r}_3\\
 {\bf g}_4&{\bf r}_4\\
 {\bf g}_5&{\bf r}_5\\
 {\bf g}_6&{\bf r}_6\\
 {\bf g}_7&{\bf r}_7
\end{array}
\right].$$
Similarly, we can assume that ${\mbox{wt}}({\bf g}_3)=26$ (adding ${\bf c}_1$ to ${\bf c}_3$ if needed).
Since
$${\mbox{wt}}({\bf c}_2+{\bf c}_3)={\mbox{wt}}({\bf g}_2+{\bf g}_3)+{\mbox{wt}}({\bf r}_2+{\bf r}_3)={\mbox{wt}}({\bf g}_2+{\bf g}_3)+28\geq 54,$$
${\mbox{wt}}({\bf g}_2+{\bf g}_3)\geq 26$, i.e., $|{\mbox{supp}}({\bf g}_2) \cap {\mbox{supp}}({\bf g}_3)| \leq 13$.
Since
$${\mbox{wt}}({\bf c}_1+{\bf c}_2+{\bf c}_3)=(54-{\mbox{wt}}({\bf g}_2+{\bf g}_3))+{\mbox{wt}}({\bf r}_2+{\bf r}_3)=(58-{\mbox{wt}}({\bf g}_2+{\bf g}_3))+28\geq 58,$$
we have ${\mbox{wt}}({\bf g}_2+{\bf g}_3)\leq 28$, i.e., $|{\mbox{supp}}({\bf g}_2) \cap {\mbox{supp}}({\bf g}_3)| \geq 12$.
Since
$${\bf c}_2\cdot{\bf c}_3={\bf g}_2\cdot {\bf g}_3+{\bf r}_2\cdot {\bf r}_3={\bf g}_2\cdot {\bf g}_3=0,$$
$|{\mbox{supp}}({\bf g}_2) \cap {\mbox{supp}}({\bf g}_3)|$ is even. This implies that
$|{\mbox{supp}}({\bf g}_2) \cap {\mbox{supp}}({\bf g}_3)| =12$. Hence we may assume that
$$G=\left[
\begin{array}{cccc|cccc}
  \overbrace{1\cdots 1}^{12}&\overbrace{1\cdots 1}^{14}&\overbrace{1\cdots 1}^{14}&\overbrace{1\cdots 1}^{14} & \overbrace{0\cdots0}^{14}&\overbrace{0\cdots0}^{14}& \overbrace{0\cdots0}^{14}& \overbrace{0\cdots0}^{14} \\
  1\cdots 1&1\cdots 1&0\cdots 0&0\cdots 0&1\cdots1&1\cdots1&0\cdots0&0\cdots0\\
   1\cdots1&0\cdots0&1\cdots1&0\cdots0&1\cdots1&0\cdots0&1\cdots1&0\cdots0\\
 \hline
 \multicolumn{4}{c}{{\bf g}_4}&\multicolumn{4}{|c}{{\bf r}_4}\\
  \multicolumn{4}{c}{{\bf g}_5}&\multicolumn{4}{|c}{{\bf r}_5}\\
   \multicolumn{4}{c}{{\bf g}_6}&\multicolumn{4}{|c}{{\bf r}_6}\\
    \multicolumn{4}{c}{{\bf g}_7}&\multicolumn{4}{|c}{{\bf r}_7}
\end{array}
\right].$$

Similarly, we can assume that ${\mbox{wt}}({\bf g}_4)=26$. Then, we can obtain that
$$|{\mbox{supp}}({\bf g}_2) \cap {\mbox{supp}}({\bf g}_4)| =|{\mbox{supp}}({\bf g}_3) \cap {\mbox{supp}}({\bf g}_4)| =12.$$
Suppose that $|{\mbox{supp}}({\bf g}_2) \cap {\mbox{supp}}({\bf g}_3) \cap {\mbox{supp}}({\bf g}_4)| =a$.
So $${\mbox{wt}}({\bf g}_2+{\bf g}_3+{\bf g}_4)=a+(a+2)+(a+2)+(a+2)=4a+6.$$
Since
$${\mbox{wt}}({\bf c}_2+{\bf c}_3+{\bf c}_4)={\mbox{wt}}({\bf g}_2+{\bf g}_3+{\bf g}_4)+{\mbox{wt}}({\bf r}_2+{\bf r}_3+{\bf r}_4)={\mbox{wt}}({\bf g}_2+{\bf g}_3+{\bf g}_4)+28\geq 54,$$
${\mbox{wt}}({\bf g}_2+{\bf g}_3+{\bf g_4})\geq 26$.
Since
\begin{align*}
  {\mbox{wt}}({\bf c}_1+{\bf c}_2+{\bf c}_3+{\bf c}_4)= & (54-{\mbox{wt}}({\bf g}_2+{\bf g}_2+{\bf g}_3))+{\mbox{wt}}({\bf r}_2+{\bf r}_3+{\bf r}_4) \\
  = & (54-{\mbox{wt}}({\bf g}_2+{\bf g}_3+{\bf g}_4))+28\\
  \geq &54,
\end{align*}
${\mbox{wt}}({\bf g}_2+{\bf g}_3+{\bf g}_4)\leq 28$. Then
$26\leq 4a+6\leq 28$, i.e., $a=5$.
Hence we may assume that
$$G=[A~|~B],$$
where
$$A=\left[
\begin{array}{cccccccc}
  \overbrace{1\cdots 1}^{5}&\overbrace{1\cdots 1}^{7}&\overbrace{1\cdots 1}^{7}&\overbrace{1\cdots 1}^{7}&\overbrace{1\cdots 1}^{7}&\overbrace{1\cdots 1}^{7}&\overbrace{1\cdots 1}^{7}&\overbrace{1\cdots 1}^{7}  \\
  1\cdots 1&1\cdots 1&1\cdots 1&1\cdots 1&0\cdots 0&0\cdots 0&0\cdots0&0\cdots0\\
  1\cdots 1&1\cdots1&0\cdots0&0\cdots0&1\cdots1&1\cdots1&0\cdots0&0\cdots0\\
 1\cdots1&0\cdots0&1\cdots1&0\cdots0&1\cdots1&0\cdots0&1\cdots1&0\cdots0\\
  \hline
 G_5&A_1&A_2&A_3&A_4&A_5&A_6&A_7
\end{array}
\right]$$
and
$$B=\left[
\begin{array}{cccccccc}
 \overbrace{0\cdots0}^{7}&\overbrace{0\cdots0}^{7}& \overbrace{0\cdots0}^{7}& \overbrace{0\cdots0}^{7}& \overbrace{0\cdots0}^{7}&\overbrace{0\cdots0}^{7}& \overbrace{0\cdots0}^{7}& \overbrace{0\cdots0}^{7} \\
  1\cdots1 &1\cdots1& 1\cdots1 &1\cdots1& 0\cdots0 &0\cdots0& 0\cdots0 &0\cdots0\\
       1\cdots1&1\cdots1 &0\cdots0&0\cdots0&1\cdots1 &1\cdots1 &0\cdots0&0\cdots0\\
       1\cdots1 &0\cdots0 &1\cdots1&0\cdots0& 1\cdots1&0\cdots0  &1\cdots1&0\cdots0\\
       \hline
S_3&S_3&S_3&S_3&S_3&S_3&S_3&S_3
\end{array}
\right].$$

If we consider the residual $[56,6,28]$ codes $Res(C,{\bf c}_2)$, $Res(C,{\bf c}_3)$ and $Res(C,{\bf c}_4)$, then we can obtain $A_1=A_2=A_3=A_4=A_5=A_6=A_7=[S_3]$. That is to say,
$$A=\left[
\begin{array}{cccccccc}
  \overbrace{1\cdots 1}^{5}&\overbrace{1\cdots 1}^{7}&\overbrace{1\cdots 1}^{7}&\overbrace{1\cdots 1}^{7}&\overbrace{1\cdots 1}^{7}&\overbrace{1\cdots 1}^{7}&\overbrace{1\cdots 1}^{7}&\overbrace{1\cdots 1}^{7}  \\
  1\cdots 1&1\cdots 1&1\cdots 1&1\cdots 1&0\cdots 0&0\cdots 0&0\cdots0&0\cdots0\\
  1\cdots 1&1\cdots1&0\cdots0&0\cdots0&1\cdots1&1\cdots1&0\cdots0&0\cdots0\\
 1\cdots1&0\cdots0&1\cdots1&0\cdots0&1\cdots1&0\cdots0&1\cdots1&0\cdots0\\
  \hline
 G_5&S_3&S_3&S_3&S_3&S_3&S_3&S_3
\end{array}
\right].$$

We next consider the $3 \times 5$ matrix $G_{5}$.
We observe that $G_{5}$ has rank three since otherwise we can construct a nonzero codeword ${\bf c}$ in $G$ with zeros in it leftmost 5 positions of weight $60$. Then
$${\mbox{wt}}({\bf c}+{\bf c_1}+{\bf c_2})=60-7=53<54,$$
which is a contradiction.
Since the generator matrix $G$ generates a SO code, the code generated by $G_{5}$ also needs to be SO. Hence there exists a binary SO $[5,3]$ code. This is impossible and thus we have proved the nonexistence of the binary SO $[110,7,54]$ code.
\end{proof}

\begin{cor}\label{[54-6]}
There is no binary SO $[54,6,24]$ code.
\end{cor}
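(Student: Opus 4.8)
The plan is to assume for contradiction that a binary SO $[54,6,24]$ code $C$ exists and to run the residual-code-and-generator-matrix analysis used in Proposition~\ref{[110,7,54]}. First I would take the first row ${\bf c}_1$ of a generator matrix $G$ to be a minimum weight codeword, of weight $24$, and set $C_0=Res(C,{\bf c}_1)$, a binary $[30,5,d_0]$ code with $d_0\ge 12$. Since $C$ is self-orthogonal, $\langle {\bf c}_1,{\bf c}\rangle={\rm wt}({\bf c}|_{{\rm supp}({\bf c}_1)})\equiv 0\pmod 2$ for every ${\bf c}\in C$, while ${\rm wt}({\bf c})$ is even; hence ${\rm wt}({\bf c}|_{\text{rest}})$ is even, so $C_0$ is even-like, and since $d(30,5)=15$ we get $d_0\in\{12,14\}$. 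The inequalities ${\rm wt}({\bf c}_2)\ge 24$ and ${\rm wt}({\bf c}_1+{\bf c}_2)\ge 24$ then pin the weight of ${\bf c}_2$ on ${\rm supp}({\bf c}_1)$ whenever the restriction of ${\bf c}_2$ to $C_0$ attains $d_0$.

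Next I would write $G$ in the block form $\bigl[\,{\bf 1}_{24}\ {\bf 0}_{30}\,\big|\,G_1\ G_0'\,\bigr]$ with $G_0'$ a generator matrix of $C_0$ and remaining rows ${\bf c}_i=({\bf g}_i,{\bf r}_i)$ for $i\ge 2$, and normalize the first few of these rows one at a time (replacing ${\bf c}_i$ by ${\bf c}_1+{\bf c}_i$ when convenient): the weight inequalities bound ${\rm wt}({\bf g}_i)$, and the identities $\langle {\bf c}_i,{\bf c}_j\rangle=\langle {\bf g}_i,{\bf g}_j\rangle+\langle {\bf r}_i,{\bf r}_j\rangle=0$ together with parity fix the sizes $|{\rm supp}({\bf g}_i)\cap{\rm supp}({\bf g}_j)|$. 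Inspecting the further residual codes $Res(C,{\bf c}_i)$, which must again be even-like $[30,5]$ codes obeying the same constraints, should force $G_0'$ and then the right half of $G$ into a structured block array built from copies of a fixed simplex-type matrix, exactly as the $[A\mid B]$ decomposition arises in Proposition~\ref{[110,7,54]}. One is then left with a small bottom-left submatrix of $G$ which, by self-orthogonality of $C$, must generate a binary SO code that cannot exist — one of dimension exceeding half its length, or one contradicting a tabulated value such as $d_{so}(30,6)=12$ from \cite{SO-40}.

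The step I expect to be the main obstacle is exactly where this proof must diverge from Proposition~\ref{[110,7,54]}. There the residual code has parameters $[56,6,28]=g(6,28)$, so it is a Griesmer code and is therefore \emph{forced} to be the Solomon-Stiffler code $[S_6\setminus S_3]$ by \cite{Tor-4}, with a completely known weight distribution; here $[30,5,d_0]$ is not a Griesmer code (since $g(5,12)=24<30$), so one has to treat $d_0=12$ and $d_0=14$ separately and, in each case, enumerate the possible residual configurations before the generator matrix can be pinned down. Obtaining a sufficiently rigid description of these residual $[30,5]$ codes — using the extra constraints that self-orthogonality of $C$ imposes on how they sit inside $C$ — is the crux of the argument.
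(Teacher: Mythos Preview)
There is a genuine gap, but it originates from a typo in the corollary itself: the intended parameters are $[54,6,26]$, not $[54,6,24]$. The paper's own proof begins ``Let $C$ be a binary SO $[54,6,26]$ code,'' and the surrounding context (settling $d_{so}(54,6)$ by excluding the value $26$) confirms this. With $d=24$ the claim is in fact false: the $[48,6,24]$ Solomon--Stiffler code in Table~1 is self-orthogonal, and appending six zero columns produces a binary SO $[54,6,24]$ code. So your argument, aimed at $d=24$, cannot possibly close.

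Once you use the correct minimum distance $26$, the obstacle you flag disappears entirely. Since $26\equiv 2\pmod 4$, Lemma~\ref{SO-Lemma} forces $Res(C,{\bf c}_1)$ to have parameters $[28,5,14]$, and $g(5,14)=14+7+4+2+1=28$, so this \emph{is} a Griesmer code; by \cite{Tor-4} it must be the Solomon--Stiffler code $[S_5\setminus S_2]$, with weight distribution $\{(0,1),(14,24),(16,7)\}$. The paper then simply reruns the row-by-row argument of Proposition~\ref{[110,7,54]} one dimension lower: normalize ${\rm wt}({\bf g}_2)={\rm wt}({\bf g}_3)=12$, deduce $|{\rm supp}({\bf g}_2)\cap{\rm supp}({\bf g}_3)|=5$ from the weight and inner-product constraints, use the residuals $Res(C,{\bf c}_2)$ and $Res(C,{\bf c}_3)$ to force the remaining $3\times 7$ blocks to be copies of $S_3$, and arrive at a leftover $3\times 5$ block that would have to generate a binary SO $[5,3]$ code --- impossible since $3>5/2$. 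No case split on $d_0$ and no ad hoc classification of non-Griesmer $[30,5]$ codes is needed.
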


\begin{proof}
Let $C$ be a binary SO $[54,6,26]$ code with generator matrix $G$.
Similar to the analysis of Proposition \ref{[110,7,54]}, it is not difficult to see that the matrix $G$ is equivalent to the matrix $G_1$ in Proposition \ref{[110,7,54]}. Finally we construct a binary SO $[5,3]$ code if there exists a binary SO $[54,6,26]$ code. This is impossible and thus we have proved the nonexistence of the binary SO $[54,6,26]$ code.
\end{proof}

\begin{prop}\label{prop-46-6-22}
There is no binary SO $[46,6,22]$ code.
\end{prop}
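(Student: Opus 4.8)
The plan is to mimic the residual-code descent used in Propositions \ref{[125-7]}, \ref{[118-7]}, and \ref{[110,7,54]}. Suppose $C$ is a binary SO $[46,6,22]$ code with generator matrix $G$. Since $22 \equiv 2 \pmod 4$, Lemma \ref{SO-Lemma} gives that the residual code $C_0 = Res(C,{\bf c}_1)$ with respect to a minimum-weight codeword ${\bf c}_1$ has parameters $[24,5,12]$. One checks that $g(5,12) = 12 + 6 + 3 + 2 + 1 = 24$, so $C_0$ is a Griesmer code; moreover $12 \le 2^{5-1}$, so by Corollary \ref{cor-all-so} together with the classification in \cite{Tor-4}, $C_0$ must be a Solomon-Stiffler code. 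Solving $2^{4} - 12 = \sum 2^{u_i-1} = 4$ forces $p=1$, $u_1 = 3$, so $C_0$ has generator matrix $G_0 = [S_5 \setminus S_3]$ (note $u_p = 3 \ge 3$, consistent with $C_0$ being a self-orthogonal doubly-even code as it must be, being the residual of an SO code). This pins down the last five rows of $G$ up to the usual block structure: writing $S_5\setminus S_3$ in the four-block form $[S_3|S_3|\cdots]$ over the $2^{5-3}-1 = 3$-fold split with appropriate indicator rows on top.

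Next I would set up the first-few-rows analysis exactly as in Proposition \ref{[110,7,54]}: write ${\bf c}_i = ({\bf g}_i, {\bf r}_i)$ for $2 \le i \le 6$ where the ${\bf r}_i$ are the rows of $G_0$ (of length $24$) and the ${\bf g}_i$ have length $22$, with ${\bf g}_1 = {\bf c}_1$ the all-ones of length $22$. Using wt$({\bf c}_i) \ge 22$, wt$({\bf c}_1 + {\bf c}_i) \ge 22$, the SO conditions ${\bf c}_1\cdot {\bf c}_i = 0$ and ${\bf c}_i \cdot {\bf c}_j = 0$, and the known weight distribution of $C_0$ (which has weights $0$, $12$, and $16$), one narrows down wt$({\bf g}_i)$, the pairwise and triple intersection sizes $|{\rm supp}({\bf g}_i)\cap{\rm supp}({\bf g}_j)|$ (and triple intersections), just as the earlier proofs pin down the constants $13$, $12$, $a=5$, etc. Then, by passing to the residual codes $Res(C, {\bf c}_i)$ for the relevant $i$ — each again a $[24,5,12]$ Griesmer/Solomon-Stiffler code — I would force the lower-right $S_3$-blocks to all equal $[S_3]$, leaving a small residual matrix $G_{\ast}$ (I expect a $k'\times n'$ matrix with $n'$ around $3$ to $5$ and $k'$ around $2$ or $3$) that must itself generate an SO code. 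Finally, a rank argument (any rank deficiency would produce a codeword supported only in the $S_3$-columns of weight larger than permitted, contradicting the weight bounds on $C$) shows $G_\ast$ has full row rank, so its row space is a genuinely self-orthogonal code of those small parameters; comparing with the known values of $d_{so}$ for small length from \cite{SO-40} (e.g. that there is no SO $[5,3]$ code, $d_{so}(5,3)$ being undefined / no such nontrivial code, or $d_{so}$ too small for the forced minimum distance) yields the contradiction.

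The main obstacle I anticipate is the combinatorial bookkeeping in the middle step: getting the intersection numbers $|{\rm supp}({\bf g}_i)\cap {\rm supp}({\bf g}_j)|$ and the triple (and possibly quadruple) intersections exactly nailed down, since with $k=6$ there are four rows ${\bf g}_2,\dots,{\bf g}_6$ beyond ${\bf c}_1$ to control, and the parity constraints from self-orthogonality interact delicately with the weight-$16$ codewords of $C_0$ (which, unlike the minimum-weight ones, can create odd inner products in the ${\bf r}$-part and thereby shift the parity required of the ${\bf g}$-part). A secondary subtlety is verifying that the residual codes $Res(C,{\bf c}_i)$ for $i \ge 2$ really are forced to coincide (up to equivalence) with the $[S_5\setminus S_3]$ structure already fixed for $Res(C,{\bf c}_1)$, so that the off-diagonal blocks are pinned to $[S_3]$; this uses the uniqueness of the $[24,5,12]$ Griesmer code together with the fact that the column multiset of $G$ is constrained. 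Once the small SO matrix $G_\ast$ is isolated, the contradiction with \cite{SO-40} should be immediate, paralleling the final lines of Propositions \ref{[118-7]} and \ref{[110,7,54]}.
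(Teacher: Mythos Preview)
Your proposal follows the same residual-code descent as the paper and correctly identifies the residual $[24,5,12]$ Griesmer code as the Solomon--Stiffler code $[S_5\setminus S_3]$. However, the endgame you anticipate is not what the paper does, and the route you sketch would stall.

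You expect, after pinning down several rows, to isolate a tiny leftover matrix $G_\ast$ of size roughly $2\times 3$ or $3\times 5$ and derive a contradiction from its forced self-orthogonality, exactly paralleling Proposition~\ref{[110,7,54]}. But here $k=6$. If one uses the convenient stratification of $[S_5\setminus S_3]$ with three indicator rows of weight~$12$ and $S_2$-blocks below (rather than your two indicator rows of weight~$16$ with $\hat S_3$-blocks), the first four rows of $G$ get determined and the leftover in rows $5$--$6$ is only a $2\times 1$ matrix---far too small to yield a contradiction by itself. Your alternative stratification is less convenient anyway: the two indicator rows you propose have weight $16$, not the minimum weight $12$ in $C_0$, so the tight weight/intersection bookkeeping in the style of Proposition~\ref{[110,7,54]} does not go through cleanly for them.

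The paper's actual final step is different and is the idea missing from your plan. After writing $G=[A\mid B]$ with the first four rows fixed and $S_2$-blocks filling rows $5$--$6$, the paper notes that rows $2$--$6$ of $B$ generate the $[24,5,12]$ residual code, which is doubly-even and hence self-orthogonal (Theorem~\ref{SO-SS-codes}). Since $G$ is SO, it follows that rows $2$--$6$ of $A$ by themselves generate a self-orthogonal $[22,5]$ code. A direct case check on the block structure shows every nonzero codeword there has weight at least $10$, so one obtains an SO $[22,5,10]$ code, contradicting $d_{so}(22,5)=8$ from \cite{SO-40}. Thus the contradiction comes from the whole $22$-column half of $G$, not from a tiny corner matrix.
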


\begin{proof}
Let $C$ be a binary SO $[46,6,22]$ code with generator matrix $G$.
We first discuss how to select the first four rows in the generator matrix $G$ for the code $C$.
Similar to the previous analysis, we may assume that
$$G=[A~|~B],$$
where
$$A=\left[
\begin{array}{cccccccccccccccc}
  1&111&111&111&111&111&111&111  \\
  1&111&111&111&000&000&000&000\\
  1&111&000&000&111&111&000&000\\
  1&000&111&000&111&000&111&000\\
  \hline
 G_2&S_2&S_2&S_2&S_2&S_2&S_2&S_2
\end{array}
\right]$$
and
$$B=\left[
\begin{array}{cccccccccccccccc}
  000&000&000&000&000&000&000&000  \\
  111&111&111&111&000&000&000&000\\
  111&111&000&000&111&111&000&000\\
  111&000&111&000&111&000&111&000\\
  \hline
 G_5&S_2&S_2&S_2&S_2&S_2&S_2&S_2
\end{array}
\right].$$

It can be checked that the last five rows of the matrix $A$ generates a binary SO $[22,5,10]$ code. This contradicts that $d_{so}(22,5)=8$ (see \cite{SO-40}). Thus we prove the nonexistence of the binary SO $[46,6,22]$ code.
\end{proof}

\begin{remark}
According to \cite{Kim-SO} and \cite{SO-5-6}, there are optimality of six binary SO $[n,6]$ codes which were not discussed, namely, $n=45,46,53,54,60,61$.
Combining Proposition \ref{prop-nonexistence}, Proposition \ref{[61-6]}, Corollary \ref{[54-6]} and Proposition \ref{prop-46-6-22}, we completely solve the remaining case in \cite{Kim-SO} and \cite{SO-5-6}.
\end{remark}

\begin{prop}\label{[94-7]}
There is no binary SO $[94,7,46]$ code.
\end{prop}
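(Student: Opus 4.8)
The plan is to follow the residual-code method of Propositions~\ref{[118-7]}, \ref{[110,7,54]} and, in particular, \ref{prop-46-6-22}. Suppose $C$ is a binary self-orthogonal $[94,7,46]$ code with generator matrix $G$, and let ${\bf c}_1$ be a row of $G$ of weight $46$. Since $46\equiv 2\pmod 4$, Lemma~\ref{SO-Lemma} gives that $C_0:=Res(C,{\bf c}_1)$ has parameters $[48,6,24]$, and $g(6,24)=48$ makes it a Griesmer code. As $24\le 2^{5}$, the argument of Corollary~\ref{cor-all-so} applies: by \cite[Theorem~1.3]{Tor-4}, $C_0$ is a Solomon--Stiffler code or a Belov code, and the equation $s2^{k-1}-d=32-24=8$ with $k=6$ cannot be realized by the Belov form~(\ref{eq-4}); hence $C_0$ is the Solomon--Stiffler code $[S_6\setminus S_4]$, unique up to equivalence since all $4$-dimensional subspaces of $\F_2^6$ are equivalent. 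Writing $\F_2^6=\F_2^4\times\F_2^2$ with the deleted subspace equal to $\F_2^4\times\{0\}$ shows that
$$[S_6\setminus S_4]\sim\left[\begin{array}{c}M\\ \underbrace{S_2~|~S_2~|~\cdots~|~S_2}_{16}\end{array}\right],$$
where $M$ is the $4\times48$ matrix in which each vector of $\F_2^4$ occurs three times as a column; thus we may take $G\sim\left[\begin{array}{c|c}{\bf 1}_{46}&{\bf 0}_{48}\\ G_1& S_6\setminus S_4\end{array}\right]$ and write ${\bf c}_i=({\bf g}_i,{\bf r}_i)$ for the remaining rows.

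Next I would pin down the top of $G_1$. After row operations inside $C_0$, take ${\bf c}_2,\dots,{\bf c}_5$ to correspond to the four rows of $M$. The point is that $[M]$ is a self-orthogonal $[48,4,24]$ \emph{constant-weight} code: every nonzero combination of ${\bf r}_2,\dots,{\bf r}_5$ has weight $24$, and $\langle{\bf g}_i,{\bf g}_j\rangle=\langle{\bf c}_i,{\bf c}_j\rangle-\langle{\bf r}_i,{\bf r}_j\rangle=0$ for $2\le i,j\le 5$. Combining ${\rm wt}({\bf c})\ge 46$ and ${\rm wt}({\bf c}_1+{\bf c})\ge 46$ for the corresponding codewords, with the parity facts that weights in $C$ are even and ${\rm wt}({\bf x}+{\bf y})\equiv{\rm wt}({\bf x})+{\rm wt}({\bf y})\pmod 4$ for self-orthogonal ${\bf x},{\bf y}$, should force ${\rm wt}({\bf g}_i)\in\{22,24\}$ for each $i$ and determine all pairwise, triple and quadruple intersections of ${\rm supp}({\bf g}_2),\dots,{\rm supp}({\bf g}_5)$, hence the sizes of all sixteen regions of their Venn diagram (all but the common intersection being small).

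Then, as in Propositions~\ref{[110,7,54]} and \ref{prop-46-6-22}, I would invoke the residual codes $Res(C,{\bf c}_i)$, $i=2,\dots,5$: each is again a $[48,6,24]$ Griesmer code, hence $[S_6\setminus S_4]$, and matching these four structures against $G$ forces the columns of $G$ on the fifteen ``large'' regions each to be a copy of $S_2$. What is left is a tiny block on which the bottom rows of $G_1$ are uncontrolled; restricting $G$ there together with one $S_2$-block and using self-orthogonality once more yields a binary self-orthogonal code of impossible parameters — in the cleanest form a binary self-orthogonal $[46,6,d']$ code with $d'\ge 22$, which is excluded by Proposition~\ref{prop-46-6-22} (no SO $[46,6,22]$) together with $d(46,6)=22$; alternatively a self-orthogonal code of dimension at least $3$ and length too small, contradicting \cite{SO-40}. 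Either way no binary self-orthogonal $[94,7,46]$ code exists.

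The hard part will be the middle bookkeeping. The two rows ${\bf c}_6,{\bf c}_7$ whose ${\bf r}$-parts sit in the $S_2$-blocks have weight $32$ in $C_0$, so ${\rm wt}({\bf c})\ge 46$ and ${\rm wt}({\bf c}_1+{\bf c})\ge 46$ only confine the corresponding ${\bf g}$-parts to $\{14,16,\dots,32\}$; sharpening this to what is needed at the end requires the $\pmod 4$ weight arguments above together with the observation that ${\bf 1}_{46}$ does not lie in the relevant length-$46$ self-orthogonal subcode, so that the weights of the few exceptional codewords cannot all be pushed below $22$. One also has to check — exactly as in the earlier propositions — that the residual comparisons genuinely determine every column and that the leftover submatrix has full row rank, so that the self-orthogonal code produced at the end really does have the forbidden parameters.
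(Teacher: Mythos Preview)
Your plan is precisely the paper's: take the residual of a minimum-weight codeword to obtain the Griesmer code $[S_6\setminus S_4]$, use the weight and inner-product bookkeeping on ${\bf g}_2,\dots,{\bf g}_5$ together with the residuals $Res(C,{\bf c}_i)$ ($i=2,\dots,5$) to force each three-column region of the $A$-part to carry $S_2$, and then read off that the last six rows of $A$ form a self-orthogonal $[46,6,22]$ code, contradicting Proposition~\ref{prop-46-6-22}. Your closing worry about pushing the bound from $14$ up to $22$ for combinations involving ${\bf c}_6,{\bf c}_7$ is unnecessary: once rows $6,7$ of $A$ are known to be fifteen copies of $S_2$ plus one undetermined column, a direct block-by-block weight count shows every nonzero combination of the six rows has weight at least $22$, so no separate $\pmod 4$ sharpening is needed.
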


\begin{proof}
Let $C$ be a binary SO $[94,7,46]$ code with generator matrix $G$.
We first discuss how to select the first five rows in the generator matrix $G$ for the code $C$.
Similar to the previous analysis, we may assume that
$$G=[A~|~B],$$
where
{\small$$A=\left[
\begin{array}{cccccccccccccccc}
  1&111&111&111&111&111&111&111&111&111&111&111&111&111&111&111  \\
  1&111&111&111&111&111&111&111&000&000&000&000&000&000&000&000\\
  1&111&111&111&000&000&000&000&111&111&111&111&000&000&000&000\\
  1&111&000&000&111&111&000&000&111&111&000&000&111&111&000&000\\
  1&000&111&000&111&000&111&000&111&000&111&000&111&000&111&000\\
  \hline
 G_2&S_2&S_2&S_2&S_2&S_2&S_2&S_2&S_2&S_2&S_2&S_2&S_2&S_2&S_2&S_2
\end{array}
\right]$$}
and
{\small$$B=\left[
\begin{array}{cccccccccccccccc}
  000&000&000&000&000&000&000&000&000&000&000&000&000&000&000&000  \\
  111&111&111&111&111&111&111&111&000&000&000&000&000&000&000&000\\
  111&111&111&111&000&000&000&000&111&111&111&111&000&000&000&000\\
  111&111&000&000&111&111&000&000&111&111&000&000&111&111&000&000\\
  111&000&111&000&111&000&111&000&111&000&111&000&111&000&111&000\\
  \hline
 G_2&S_2&S_2&S_2&S_2&S_2&S_2&S_2&S_2&S_2&S_2&S_2&S_2&S_2&S_2&S_2
\end{array}
\right].$$}

It can be checked that the last six rows of the matrix $A$ generates a binary SO $[46,6,22]$ code. This contradicts that $d_{so}(46,6)=20$ (see Proposition \ref{prop-46-6-22}). Thus we prove the nonexistence of the $[94,7,46]$ SO code.
\end{proof}

\begin{prop}\label{[72-7]}
There is no binary SO $[72,7,34]$ code.
\end{prop}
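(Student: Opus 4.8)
The plan is to adapt the residual-code technique of Propositions~\ref{[118-7]}, \ref{[110,7,54]} and \ref{[94-7]}. Suppose, for contradiction, that $C$ is a binary SO $[72,7,34]$ code and pick ${\bf c}_1\in C$ of minimum weight $34$. Since $34\equiv 2\pmod 4$, Lemma~\ref{SO-Lemma} gives that $C_0:=Res(C,{\bf c}_1)$ is a binary $[38,6,18]$ code, and since $g(6,18)=18+9+5+3+2+1=38$ this $C_0$ meets the Griesmer bound. As $18\le 2^{6-1}$, \cite[Theorem~1.3]{Tor-4} forces $C_0$ to be a Solomon-Stiffler code or a Belov code. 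The Solomon-Stiffler alternative is impossible: here $s=\lceil 18/2^{5}\rceil=1$ and $s2^{5}-18=14=2^{3}+2^{2}+2^{1}$, so the anticode would have to be $[U_1~|~U_2~|~U_3]$ with $U_i\in\mathcal{U}(6,u_i)$ and $(u_1,u_2,u_3)=(4,3,2)$; but $\dim\widehat{U}_1+\dim\widehat{U}_2=7>6$ forces $\widehat{U}_1\cap\widehat{U}_2\ne\{{\bf 0}\}$, so $[U_1~|~U_2]$ repeats a column and cannot be a submatrix of $S_6$ (equivalently, Belov's condition $u_1+u_2\le sk$ fails). Hence $C_0$ is a Belov code, and solving the defining relations shows that, up to equivalence, its generator matrix is $[S_6\setminus(S\setminus T)]$, where $\widehat{S}\subset\F_2^{6}$ is a fixed $5$-dimensional subspace, $S=\widehat{S}\setminus\{{\bf 0}\}$, and $T\subset S$ has $6$ elements, spans $\widehat{S}$, and satisfies $\sum_{{\bf t}\in T}{\bf t}={\bf 0}$.

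Writing this structure out, after a suitable column permutation one gets
$$G_0\sim\left[\begin{array}{c|c}
{\bf 1}_{32} & {\bf 0}_{6}\\[2pt]
\overline{S_5} & T'
\end{array}\right],$$
where $\overline{S_5}$ is the $5\times 32$ matrix whose columns list every vector of $\F_2^{5}$ and $T'$ is the $5\times 6$ matrix whose columns are the vectors of $T$ read in $\F_2^{5}$; in particular $C_0$ is even-like with nonzero weights in $\{18,20,22,32\}$. I would then write $G\sim[G_1~|~G_0]$ with first row ${\bf c}_1=({\bf 1}_{34}~|~{\bf 0}_{38})$, where $G_1$ is the $6\times 34$ left block of the remaining six rows, and exploit the self-orthogonality of $C$: every remaining row is even on its first $34$ coordinates, and $GG^{T}=O$ splits as $G_1G_1^{T}=G_0G_0^{T}$. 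Combining this with the weight condition ${\rm wt}({\bf c})\ge 34$ for all ${\bf c}\in C$, with the normalization ${\bf c}_i\mapsto{\bf c}_i+{\bf c}_1$ used to control weights, and with the residual codes $Res(C,{\bf c}_i)\cong[38,6,18]$ at further minimum-weight codewords (as in Proposition~\ref{[110,7,54]}), I would determine the first two or three rows of $G_1$ explicitly, peel off the simplex-code blocks that $G_0$ contributes, and isolate a small sub-block of $G$ that must generate a binary SO code which cannot exist --- either a binary SO $[\ell,j]$ code with $2j>\ell$, or one whose minimum distance exceeds the value $d_{so}(\ell,j)$ recorded in \cite{SO-40} or in the preceding propositions (this is how Propositions~\ref{[118-7]}, \ref{[110,7,54]} and \ref{prop-46-6-22} conclude). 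That contradiction would show that no binary SO $[72,7,34]$ code exists.

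The routine part is invoking Helleseth's classification, discarding the Solomon-Stiffler case, and reading off the Belov form of $C_0$. The hard part will be the second step: since the residual $C_0$ here is not a simplex code or a first-order Reed-Muller code but a Belov code with four distinct nonzero weights, keeping track of how its structure interacts with the a priori unknown block $G_1$ --- and choosing the auxiliary minimum-weight codewords so that self-orthogonality determines enough of $G_1$ --- is more delicate than in Propositions~\ref{[118-7]}--\ref{[94-7]}, and one still has to verify that the small code which eventually drops out really does violate a tabulated value of $d_{so}$.
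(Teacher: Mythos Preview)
Your opening is right and matches the paper exactly: $Res(C,{\bf c}_1)$ is a $[38,6,18]$ Griesmer code, the Solomon--Stiffler alternative is excluded because $u_1+u_2=7>6$, and the Belov form you write down (equivalently $G_0=[\hat S_5\,|\,T_5]$ on top of the all-ones row on the first $32$ positions) with weight spectrum $\{18,20,22,32\}$ is the one the paper uses.

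The gap is in the second half. Your plan is to ``peel off simplex-code blocks'' and end with a small SO code whose parameters violate a tabulated $d_{so}$ value, as in Propositions~\ref{[118-7]}--\ref{prop-46-6-22}. That template does not go through here, and the paper's endgame is quite different. Two specific points:
\begin{itemize}
\item The key structural fact you have not isolated is that the Belov $[38,6,18]$ code has a \emph{unique} codeword of weight $32$ (the all-ones row on the Reed--Muller block). The paper uses this repeatedly: taking residuals at ${\bf c}_1$, then at a weight-$34$ codeword ${\bf c}_2$, then at ${\bf c}_3$, each residual is again $[38,6,18]$ Belov, and in each one the weight-$32$ codeword is forced to be the same last row. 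This pins down $65$ of the $72$ columns explicitly and yields $G=[D\setminus B\,|\,A\,|\,U]$ with $D\setminus B$ of size $7\times 56$, a completely determined $7\times 9$ matrix $A$, and only a $7\times 7$ block $U$ unknown. Your sketch (``determine the first two or three rows of $G_1$'') stays on the left-hand $34$ columns and does not exploit this rigidity on the right.
\item The contradiction is \emph{not} a $d_{so}$-table lookup. Since $D\setminus B$ generates an SO code, self-orthogonality of $C$ reduces to $[A\,|\,U]$ being SO. The paper then derives from the weight conditions that ${\rm wt}({\bf u}_i)\in\{2,4\}$ for $i=4,5,6$, that ${\rm wt}({\bf u}_i+{\bf u}_j)\ge 4$, and that ${\rm wt}({\bf u}_7)=6$, and combines these with the parities $\langle{\bf a}_i,{\bf a}_j\rangle$ coming from the explicit matrix $A$. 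A three-case analysis on the multiset $\{{\rm wt}({\bf u}_4),{\rm wt}({\bf u}_5),{\rm wt}({\bf u}_6)\}$ (two of weight $2$; all of weight $4$; one of weight $2$ and two of weight $4$) then gives an elementary combinatorial contradiction in each case --- via an odd inner product, an inclusion--exclusion count exceeding $7$, or incompatibility with ${\bf u}_7$.
\end{itemize}
So your proposal has the correct first move, but you would need to replace the anticipated ``small SO code'' punchline by this weight/inner-product case analysis on the residual $7\times 7$ block; without that, the argument does not close.
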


\begin{proof}
The first part of the proof will reconstruct the $65$ columns of the generator matrix of the $[72,7,34]$ code if it exists. The second part of the proof will show that it is impossible to complete the construction to all 72 rows of the generator matrix.

Let $C$ be a binary SO $[72,7,34]$ code.
Then it follows from Lemma \ref{SO-Lemma} that the residual code $C_0 = Res(C,{\bf c}_1)$ with respect to the codeword ${\bf c}_1 \in C$ of minimum weight 34 has parameters $[38,6,18]$. 
This must be either a Solomon-Stiffler code or a Belov code.

Note that for the residual code $d = 2^5 - 2^3 - 2^2 - 2=18$, i.e., $k = 6, u_1 = 4, u_2 = 3, u_3 = 2$ implies that there is no code of the Solomon and Stiffler type since $u_1 + u_2 > k$. However, there is a binary $[38,6,18]$ code of Belov type of the form $G_0 = [S_6 \setminus S_5 | U ]$ where $U \subset S_5$ generates a $[6,5,2]$ code. Note the first 32 positions form a first-order Reed-Muller code.
Note that the $[38,6,18]$ code of Belov type is defined by the following generator matrix
\[
 G_0 = \left[\begin{array}{ccccccr}
 \overbrace{\hat{S_5}}^{32}                        &            \overbrace{ T_5}^6                         \\
 1 1 \cdots 1                     &         0 0 0 0 0 0                  \\
                          \end{array} \right].
 \]
where the first 32 positions generates a first-order $[32,6,16]$ Reed-Muller code and $T_5$ generates a $[6,5,2]$ code. Furthermore, $\hat{S}_5$ denotes the set of all 5-dimensional  column vectors. Thus if we identify the positions of the codeword of weight 32 among the 38 positions, the code is easily constructed since the remaining part is a  $[6,5,2]$ code with generator matrix $T_5$ given by,

\[
 T_5 = \left[\begin{array}{ccccccr}
                             1  &  1  &  0  & 0 & 0 & 0   \\
                             1  &  0  &  1  & 0 & 0 & 0   \\
                             1  &  0  &  0 &  1 & 0 & 0   \\
                             1  &  0  &  0 &  0 & 1 & 0   \\
                             1  &  0  &  0 &  0 & 0 & 1   \\
 \end{array} \right].
 \]

Here, the 38 rightmost positions generate the unique $[38,6,18]$ Griesmer code up to equivalence. The code has a codeword of weight 32 that we assume  in the following are in the last row of the generator matrix.
The columns in these 32 positions are distinct and thus the rightmost 38 columns of $C$ are known due to the parameters of $Res(C,{\bf c}_1)$,
\[
 G = \left[\begin{array}{ccccccr}
  \overbrace{1 1 . . . 1 }^{34}  &  \overbrace{0 0 . . . 0}^{32} &\overbrace{ 0 0 . . . 0}^{6}                  \\
  \multirow{2}{*}{$U$}               &           \hat{S_5}     &                T_5        \\
             &             1 1 ... 1        &           0 0 ... 0
 \end{array} \right].
 \]

To construct explicitly the next 18 columns (columns number 17-34 from the left) we are considering the first two rows of the generator matrix of $C$. Then since both $Res(C,{\bf c}_1)$ and $Res(C,{\bf c}_2)$ have the same parameters as a $[38,6,18]$ code we first need to identify the codeword of weight 32 in $Res(C,{\bf c}_2)$. The only possible codeword of weight 32 is the last row since all other codewords have at least 8 zeros when restricted to the rightmost 38 positions. Hence, we can assume without loss of generality that,

\[
 G = \left[\begin{array}{ccccc}

 \overbrace{1 1 . . . 1 }^{16} &  \overbrace{1 1 . . . 1\;1 1}^{18} & \overbrace{0 0 . . . 0  \: 0 0}^{18}   &             \overbrace{0 0 . . . 0}^{16} &  \overbrace{0 0 0 0}^4                                \\
 1 1 . . . 1                   &             0 0 . . . 0    \;  0 0              &              1 1 . . . 1      \; 1 1   &             0 0 . . . 0                  &  0 0 0 0                                 \\
\multirow{2}{*}{$U_5$}    & \hat{S_4}  \; \;  \; \;  \;  \; .. &    \hat{S_4} \; \;  \; \;  \;  \; ..  &             \hat{S_4 }                &  I_4                                        \\
 &            1 1 ...  1  \; 0 0            &            1 1 \ ...  1  \;  0  0  &  1 1 ... 1                       &  0 0 0 0
 \end{array} \right].
 \]
Here, $I_4$ is the $4 \times 4$ identity matrix. Adding the first (or second) row if necessary we assume that column 34 is the unit vector $(1 0 \cdots 0)^{T}$ and column 52 is the unit vector $(0 1 0 \cdots 0)^{T}$. Note that $Res(C,{\bf c}_1)$ and $Res(C,{\bf c}_2)$ are $[38,6,18]$ codes meeting the Griesmer bound with even minimum distance. Therefore, all codewords have even weights and it follows that  column 33 is $(1 0 1 1 1 1 0)^{T}$ and column 51 is the vector $(0 1 1 1 1 1 1 0)^{T}$. Note that the 4 rightmost columns together with column 51, 52 form a $[6,5,2]$ even weight code. The same is the case for column 33, 34 and the four rightmost columns.

Hence we have therefore uniquely constructed 56 columns of (an equivalent) generator matrix for the $[38,6,18]$ code. Continuing this argument by considering the first three rows by selecting a codeword of weight 9 in the $[20,5,9]$  residual code of a codeword of weight 18 in $Res(C,{\bf c}_1)$, we can use the fact the $Res(C,{\bf c}_3)$ also is a $[38,6,18]$ code. In this we can assume without loss of generality,

\[
 G = \left[\begin{array}{ccccccccccccccr}

 \overbrace{1 1 . . . 1 }^7                &            \overbrace{ 1 1 . . . 1     \; 1}^9      &           \overbrace{1 1 . . . 1   \;1 }^9      &   \overbrace{1 1 . . . 1   \; 1}^9
 &  \overbrace{0 0 . . . 0  \; 0 }^9         &    \overbrace{ 0 0 . . . 0     \; 0}^9      &            \overbrace{0 0 . . . 0   \;0}^9      &  \overbrace{ 0 0 . . . 0}^8&   \overbrace{0 0 0}^3                              \\
 1 1 . . . 1                 &             1 1 . . . 1     \; 1      &            0 0 . . . 0    \;0    &             0 0 . . . 0  \; 0
 &             1 1 . . . 1   \;  1        &            1 1 . . . 1     \; 1       &            0 0 . . . 0    \; 0    &            0 0 . . . 0                  &   0 0 0                               \\
 1 1 . . . 1                 &             0 0 . . . 0    \; 0       &            1 1 . . . 1    \; 1    &             0 0 . . . 0 \; 0
 &             1 1 . . . 1   \; 1         &             0 0 . . . 0   \;  0       &            1 1 . . . 1   \; 1     &             0 0 . . . 0                 &   0 0 0                               \\
 & \hspace{1.15cm} 1  &  \hspace{1.15cm} 1   &   \hspace{1.15cm} 0   &  \hspace{1.15cm} 1   & \hspace{1.15cm} 0   &  \hspace{1.15cm} 0    &  &  1 0 0 \\
 \multirow{2}{*}{$U_4$}                    &      \hat{S_3}   \hspace{0.75cm} 1  & \hat{S_3}    \hspace{0.75cm} 1        &             \hat{S_3}   \hspace{0.75cm} 0
 &    \hat{S_3}   \hspace{0.75cm} 1           &            \hat{S_3}  \hspace{0.75cm} 0              &             \hat{S_3}   \hspace{0.75cm} 0        &    \hat{S_3}    &   0 1 0            \\
 & \hspace{1.15cm} 1  &  \hspace{1.15cm} 1   &   \hspace{1.15cm} 0   &  \hspace{1.15cm} 1   & \hspace{1.15cm} 0   &  \hspace{1.15cm} 0    &  &  0 0 1 \\
 &         1 1 ...  1  \; 0            &            1 1 .... 1 \;0      &          1 1 ... 1   \; 0 &                1 1 ... 1 \;0          &         1 1 ...  1 \;  0            &             1 1 ... 1 \;0      &          1 1 ...  1              &     0 0 0
 \end{array} \right].
 \]

Let
\begin{center}
$D = \left[\begin{array}{ccccccccccr}
                               \hat{S_6}      \\
                               1 \cdots 1 \\
       \end{array} \right]
$,
$B = \left[\begin{array}{ccccccccccr}
                             1 1 1 1 1 1 1 1 \\
                             1 1 1 1 1 1 1 1 \\
                             1 1 1 1 1 1 1  1 \\
                              \hat{S_3}          \\
                             1 1 1 1 1 1 1 1
       \end{array} \right]$
and
$A = \left[\begin{array}{ccccccccccr}
                             0    0  0  0  0  0  1  1  1  \\
                             1    1  0  0  0  0  1  0  0  \\
                             1    0  1  0  0  0  0  1  0  \\
                             1    0  0  1  0  0  1  1  0  \\
                             1    0  0  0  1  0  1  1  0  \\
                             1    0  0  0  0  1  1  1  0  \\
                             0    0  0  0  0  0  0  0  0
         \end{array} \right].$
\end{center}

Then
$$G = [ D  \setminus B~|~A~|~U ],$$
where $U$ are the 7 leftmost columns in $G$ and we have proved the following lemma.

The second part of the proof will show that a $[72,7,34]$ SO code do not exist by finding conditions on $U$ and showing that it is impossible to complete the construction of the generator matrix of $G$.

Note that the code with generator matrix is $[D \setminus B]$ is a SO code with possible nonzero weights 24, 28 or 32. It is therefore sufficient to prove that $G$ cannot generate a SO $[72,7,34]$ code. We will find some restrictions of the weights of some linear combinations of the rows in $U$.
Let ${\bf c}_i$ denote the $i$-th row of the generator matrix $G$. Let ${\bf d}_i$, ${\bf a}_i$ and ${\bf u}_i $ denote its restrictions to $D \setminus B$, $A$ and $U$ respectively.
Let ${\bf c}$ be a codeword of $C$. Then
$${\mbox{wt}}({\bf c}) = {\mbox{wt}}({\bf d}) + {\mbox{wt}}({\bf a}) + {\mbox{wt}}({\bf u}).$$

Consider, in particular, the rows ${\bf u}_4, {\bf u}_5$ and ${\bf u}_6$. We will find conditions on the weight of these rows and show that it is impossible to construct a SO code,  this leads to a contradiction.
Since $C$ is a SO code,
\begin{align}\label{eq-5}
 \langle {\bf c}_i,{\bf c}_j\rangle&=\langle {\bf d}_i,{\bf d}_j\rangle+\langle {\bf a}_i,{\bf a}_j\rangle+\langle {\bf u}_i,{\bf u}_j\rangle=\langle {\bf d}_i,{\bf d}_j\rangle+\langle {\bf a}_i,{\bf a}_j\rangle.
\end{align}
Hence, $G$ generates a SO code if and only if the matrix $[A~|~U]$ generates a SO code.
Next we study the $7 \times 7$ matrix $U$ corresponding to the leftmost 7 rows in $G$.

(i) The rows ${\bf u}_i$ for $i=4,5,6$ have ${\mbox{wt}}({\bf u}_i) \geq 2$. This follows from the relation ${\mbox{wt}}({\bf c}_i) = {\mbox{wt}}({\bf d}_i) + {\mbox{wt}}({\bf a}_i) + {\mbox{wt}}({\bf u}_i) \geq 34$  which since ${\mbox{wt}}({\bf d}_i) = 32 - 4 = 28$ and ${\mbox{wt}}({\bf a}_i) = 4$ implies that ${\mbox{wt}}({\bf u}_i) \geq 2$.

(ii) For ${\bf u}_i+ {\bf u}_j$ for distinct $i$ and $j$ in $\{4,5,6\}$, we have ${\mbox{wt}}({\bf u}_i+{\bf u}_j) \geq 4$. This follows similarly from ${\mbox{wt}}({\bf c}_i + {\bf c}_j) = {\mbox{wt}}({\bf d}_i+{\bf d}_j) + {\mbox{wt}}({\bf a}_i + {\bf a}_j)  + {\mbox{wt}}({\bf u}_i + {\bf u}_j) \geq 34$ which since
    ${\mbox{wt}}({\bf d}_i+{\bf d}_j)= 28$ and ${\mbox{wt}}({\bf a}_i + {\bf a}_j) = 2$ implies that ${\mbox{wt}}({\bf u}_i+{\bf u}_j) \geq 4$.

(iii) We have  ${\mbox{wt}}({\bf u}_i) \in  \{2,4\}$ for $i = 4, 5, 6$. This follows since a word ${\mbox{wt}}({\bf u}_i)$ of weight 6 is impossible, since
\begin{eqnarray*}
                 {\mbox{wt}}({\bf c}_3 + {\bf c}_i)  &    =    & {\mbox{wt}}({\bf d}_3 +{\bf  d}_i) + {\mbox{wt}}({\bf a}_3 +{\bf a}_i)  + {\mbox{wt}}({\bf u}_3+{\bf u}_i)    \\
                                      &    =    &    28 + 3  + 7 - {\mbox{wt}}({\bf u}_i)  \\
                                      &     =   &  38 - {\mbox{wt}}({\bf u}_i)
\end{eqnarray*}
which implies ${\mbox{wt}}({\bf u}_i) \leq 4$ since $C$ has minimum distance 34. Since, all codewords in ${\bf c}_i$, ${\bf d}_i$ and ${\bf a}_i$ have even weights for $i=4,5,6$, it follows that all vectors ${\bf u}_i$ have even weights for $i = 4, 5, 6$.

(iv) We have ${\mbox{wt}}({\bf u}_7) = 6$ since,
\begin{eqnarray*}
                 {\mbox{wt}}({\bf c}_1 +  {\bf c}_2 + {\bf  c}_7)  &    =    & {\mbox{wt}}({\bf d}_1 +{\bf d}_2 + {\bf d}_7) + {\mbox{wt}}({\bf a}_1 +{\bf  a}_2 +{\bf a}_7)  + {\mbox{wt}}({\bf u}_1 +{\bf u}_2 + {\bf u}_7)    \\
                                                    &     =   &  28 + {\mbox{wt}}({\bf u}_7)
\end{eqnarray*}
which implies ${\mbox{wt}}({\bf u}_7) = 6$ since $C$ has the minimum distance 34 and all codewords in the SO code $C$ have even weights.

{\bf Case 1:}
Assume ${\mbox{wt}}({\bf u}_i) = {\mbox{wt}}({\bf u}_j) = 2$ for distinct $i$ and $j$ in $\{4,5,6\}$. In this case it follows from (i), (ii) and (iii) above that ${\mbox{wt}}({\bf u}_i + {\bf u}_j)=4$. Hence $|{\mbox{supp}}({\bf u}_i)\cap {\mbox{supp}}({\bf u}_j)|=0.$
This implies that the inner product between rows $i$ and $j$ in $[A~|~ U]$ is the same as inner product between ${\bf a}_i$ and ${\bf a}_j$ which is seen to be odd.
This means the inner product between ${\bf c}_i$ and ${\bf c}_j$ equals the inner product between ${\bf a}_i$ and ${\bf a}_j$ which since being odd contradicts that $C$ is SO.

{ \bf Case 2:}
Assume all three ${\bf u}_i$, $i=4,5,6$ have weight 4. Note that $\langle {\bf u}_i,{\bf u}_j\rangle=1$ for distinct $i$ and $j$ in $\{4,5,6\}$.
Then it follows from (i), (ii), (iii) and the equation (\ref{eq-5}) that
\begin{center}
$|{\mbox{supp}}({\bf u}_i)\cap {\mbox{supp}}({\bf u}_j)| =1$ for distinct $i$ and $j$ in $\{4,5,6\}$.
\end{center}
Hence
\begin{align*}
  |{\mbox{supp}}({\bf u}_4)\cup {\mbox{supp}}({\bf u}_5)\cup {\mbox{supp}}({\bf u}_6)| = &\sum_{i=4}^6 |{\mbox{supp}}({\bf u}_i)|- \sum_{i,j\in\{4,5,6\},i\neq j}|{\mbox{supp}}({\bf u}_i)\cap {\mbox{supp}}({\bf u}_j)|  \\
   & +|{\mbox{supp}}({\bf u}_4)\cap {\mbox{supp}}({\bf u}_5)\cap {\mbox{supp}}({\bf u}_6)|\\
  = &9+|{\mbox{supp}}({\bf u}_4)\cap {\mbox{supp}}({\bf u}_5)\cap {\mbox{supp}}({\bf u}_6)|.
\end{align*}
This contradicts that $|{\mbox{supp}}({\bf u}_4)\cup {\mbox{supp}}({\bf u}_5)\cup {\mbox{supp}}({\bf u}_6)|\leq 7$.

{ \bf Case 3:}
Let ${\mbox{wt}}({\bf u}_{i_1})=2$ and ${\mbox{wt}}({\bf u}_{i_2})={\mbox{wt}}({\bf u}_{i_3})=4$, where $\{i_1,i_2,i_3\}=\{4,5,6\}$. Note that $\langle {\bf u}_i,{\bf u}_j\rangle=1$ for distinct $i$ and $j$ in $\{4,5,6\}$. Then it follows from (i), (ii), (iii) and the equation (\ref{eq-5}) that
\begin{center}
$|{\mbox{supp}}({\bf u}_i)\cap {\mbox{supp}}({\bf u}_j)| =1$ for distinct $i$ and $j$ in $\{4,5,6\}$.
\end{center}
 However, ${\bf u}_7$ has weight 6 and must in this case have an odd inner product with ${\bf u}_4, {\bf u}_5$ and ${\bf u}_6$. Therefore we conclude that the code $C$ does not exist.
\end{proof}

\begin{remark}
Although we have proved the nonexistence of some binary SO $[n,7]$ codes, it is easy to check these results hold for general $k$. The proof process is similar, so it will not be repeated here.
\end{remark}

\begin{example}\label{example-SO}
We start from a binary SO $[18,6,8]$ code (see \cite{SO-40}). By applying (2) of Remark \ref{rem-short}
we can construct a binary SO [82,7,40] code with generator matrix
$$\left[
\begin{array}{c|c|c}
  1 & 1\cdots 1 &000000000000000000 \\
  \hline
  0 &  & 1 0 0 0 0 0 1 0 1 0 1 0 1 1 1 0 0 1\\
  0 &  & 0 1 0 0 0 0 1 1 1 1 0 0 0 1 0 0 1 1 \\
  0 & S_6 & 0 0 1 0 0 0 1 1 0 1 1 1 0 0 0 1 1 0\\
  0 &  & 0 0 0 1 0 0 0 1 1 0 1 1 1 0 0 0 1 1 \\
  0 &  & 0 0 0 0 1 0 1 0 0 1 0 0 1 1 1 1 1 0 \\
  0 &  & 0 0 0 0 0 1 0 1 0 0 1 0 0 1 1 1 1 1
\end{array}\right].
$$

Similarly, we start from binary SO $[16,6,6]$, $[18,6,8]$, $[24,6,10]$, $[26,6,12]$, $[40,6,18]$, $[47,6,22]$, $[55,6,26]$ and $[62,6,30]$ codes. By applying (2) of Remark \ref{rem-short}
we can construct binary SO $[80,7,38]$, [82,7,40], [88,7,42], [90,7,44], $[104,7,50]$, $[111,7,54]$, [119,7,58] and [126,7,62] codes.

In addition, by best-known linear codes (BKLC) database of Magma, there are binary SO $[43,7,20]$, $[44,7,20]$, $[50,7,24]$, $[59,7,28]$, $[75,7,36]$ and $[82,7,40]$ codes.
By puncturing the binary SO $[44,7,20]$ code on $\{9,44\}$, we can construct a binary SO $[42,7,18]$ code. By puncturing the binary SO $[50,7,24]$ code on $\{17,34\}$, we can construct a binary SO $[48,7,22]$ code. By puncturing the binary SO $[59,7,28]$ code on $\{5,6\}$, we can construct a binary SO $[57,7,26]$ code. By puncturing the binary SO $[75,7,36]$ code on $\{12,67\}$, we can construct a binary SO $[73,7,34]$ code.

Through a computer search, we construct a binary $[95,7,46]$ SO code with the generator matrix $G=[I_7~|~M]$, where $M=$
{\footnotesize
$$\left[
\begin{array}{cccc}
  1 0 0 1 0 1 1 0 0 0 1 0 1 1 1 1 0 0 1 0 0 1 0 0 0 1 1 1 1 0 1 1 0 1 0 0 1 0 1 1 1 0 0 1 1 1 0 0 0 0 0 1 0 1 0 0 1 1 0 1 0 1 1 1 0 1 1 0 1 0 0 0 1 0 1 0 1 1 1 1 0 1 0 0 1 0 1 0\\
  1 1 1 1 0 1 0 0 1 0 0 0 0 0 0 1 1 0 0 0 0 0 1 1 1 0 0 1 1 0 1 1 1 1 0 1 1 0 1 0 1 1 0 1 0 1 0 1 0 1 0 1 1 0 1 1 0 1 1 0 1 1 0 0 1 0 1 1 0 1 1 1 0 0 1 1 0 0 1 1 0 1 0 0 1 1 1 1\\
  0 0 0 0 1 1 0 0 0 1 0 1 0 0 1 1 0 0 1 1 0 0 0 0 1 0 0 0 0 1 1 1 0 1 1 1 1 1 1 0 1 0 1 1 1 0 1 1 0 0 1 1 1 1 1 1 0 1 0 0 0 0 1 0 1 0 1 0 1 0 0 0 0 1 0 1 1 0 1 1 1 1 1 1 1 0 1 1\\
  0 1 0 0 1 1 1 1 1 1 1 0 0 0 1 0 1 1 1 0 1 0 1 0 1 0 0 1 1 1 0 1 0 0 0 0 0 1 1 1 1 1 1 1 0 0 1 0 0 1 0 0 0 0 1 1 1 0 1 1 1 1 0 0 0 1 0 0 1 0 0 0 1 0 1 1 0 0 1 0 0 1 1 1 1 1 0 1\\
  1 0 0 0 0 0 0 1 0 1 1 0 1 0 1 0 0 0 0 1 1 0 0 0 1 1 1 1 0 1 1 1 1 0 0 0 0 1 1 1 0 1 0 1 1 1 1 1 1 1 1 1 1 1 1 0 1 1 1 1 1 1 0 0 0 0 0 1 1 0 1 1 0 1 0 0 0 1 1 1 1 0 0 0 0 0 0 0\\
  1 1 1 0 0 0 0 0 0 0 1 1 1 1 0 0 0 1 0 0 1 1 1 0 1 0 0 0 0 1 1 1 1 1 1 0 0 0 0 0 1 0 0 1 0 0 1 0 1 0 0 1 1 1 0 0 0 0 1 1 1 1 0 0 1 0 1 0 1 1 0 1 1 1 1 1 1 1 0 0 1 1 1 1 0 0 1 0\\
  0 1 1 0 0 1 1 1 0 1 1 0 0 1 1 0 1 1 0 1 0 0 0 1 1 0 1 0 1 0 1 1 0 0 1 0 1 0 1 1 1 0 1 0 0 1 1 1 1 1 0 0 0 1 0 1 0 0 1 0 0 1 1 1 1 0 1 0 0 0 1 1 1 0 0 0 0 1 0 1 1 0 0 1 0 1 0 0
\end{array}\right].
$$}
\end{example}

\begin{remark}
We refer to an important upper bound of $d_{so}(n,k)$, namely, $d_{so}(n,k)\leq \left\lfloor\frac{d(n,k)}{2}\right\rfloor$. The upper bound of $d(n,k)$ can be seen in \cite{codetables}.
Applying Lemma \ref{lem-g} to the binary SO Griesmer codes in Table 6, we can partially determine the exact value of $d_{so}(n,7)$ for $41\leq n\leq 126$. Combining Propositions \ref{prop-nonexistence}, \ref{[125-7]}, \ref{[118-7]}, \ref{[110,7,54]}, \ref{[94-7]}, \ref{[72-7]} and Example \ref{example-SO}, we give Table 7.
\end{remark}

\begin{center}
\begin{tabular}{ll|ll}
\multicolumn{4}{c}{{\rm Table 7: Some binary optimal $[N,7]$ SO codes}}\\
\hline
\makebox[0.1\textwidth][l]{$N$}& \makebox[0.15\textwidth][l]{$d_{so}(N,7)$}&
\makebox[0.1\textwidth][l]{$N$}& \makebox[0.1\textwidth][l]{$d_{so}(N,7)$} \\
\hline
$41$ & 16-18&$82,83,84,85,86$ & 40 \\
$42$ & 18&$87$ & 40-42\\

$43,44,45,46,47$ & 20 &$88,89$ & 42\\

$48,49$ & 22&$90,91,92,93,94^*$ & 44\\

$50,51,52,53,54,55$ & 24 &$95$ & 46 \\

$56$ & 24-26&$96,97,98,99,100,101,102$ & 48\\

$57,58$ & 26&$103$ & 48-50\\
$59,60,61,62^*$ & 28 &$104$ & 50\\

$63$ & 28-30&$105,106,107,108,109,110^*$ & 52\\

$64,65,66,67,68,69,70,71$ & 32&$111$ & 54\\

$72^*$ & 32&$112,113,114,115,116,117,118^*$ & 56 \\

$73,74$ & 34&$119$ & 58\\

$75,76,77,78,79$ & 36 &$120,121,122,123,124, 125^*$ & 60\\

$80,81$ & 38&$126$ & 62\\
\hline
\end{tabular}
\end{center}

\section{The nonexistence of some binary self-orthogonal codes with dimension 8}

In this section, we prove the nonexistence of some binary self-orthogonal codes with dimension 8 by applying the residual code technique.

\begin{prop}\label{prop-nonexistence}
There are no binary $[48,8,22]$, $[105,8,50]$, $[112,8,54]$, $[136,8,66]$, $[144,8,70]$, $[152,8,74]$, $[159,8,78]$, $[168,8,82]$, $[175,8,86]$, $[189,8,94]$, $[199,8,98]$, $[214,8,106]$, $[221,8,110]$, $[230,8,114]$, $[237,8,118]$, $[245,8,122]$ and $[252,8,126]$ SO codes.
\end{prop}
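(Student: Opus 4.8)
The plan is to dispatch all seventeen parameter sets at once via the residual‑code reduction of Lemma~\ref{SO-Lemma}. Every minimum distance $d$ in the list satisfies $d\equiv 2\pmod 4$, so if a binary SO $[n,8,d]$ code existed, then Lemma~\ref{SO-Lemma}, applied to a codeword of minimum weight $d$, would yield a binary linear $\left[\,n-d,\ 7,\ \tfrac d2+1\,\right]$ code (self‑orthogonality is no longer needed beyond this step). First I would tabulate the derived parameters for each pair, e.g. $[48,8,22]\to[26,7,12]$, $[105,8,50]\to[55,7,26]$, $[136,8,66]\to[70,7,34]$, $[144,8,70]\to[74,7,36]$, and so on through $[252,8,126]\to[126,7,64]$.

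I would then derive a contradiction for each derived triple $[m,7,\delta]$ in one of two ways. For nine of the sets the triple already violates the Griesmer bound, i.e.\ $g(7,\delta)>m$; this handles $[136,8,66]$, $[189,8,94]$, $[199,8,98]$, $[214,8,106]$, $[221,8,110]$, $[230,8,114]$, $[237,8,118]$, $[245,8,122]$, $[252,8,126]$ (for instance $g(7,34)=71>70$ and $g(7,64)=127>126$). For the eight remaining sets, yielding the triples $[26,7,12]$, $[55,7,26]$, $[58,7,28]$, $[74,7,36]$, $[78,7,38]$, $[81,7,40]$, $[86,7,42]$, $[89,7,44]$, the Griesmer bound is tight or almost tight and does not help; here I would instead invoke the tables of best known linear codes~\cite{codetables}, where in each case $d(m,7)<\delta$, so no $[m,7,\delta]$ linear code exists. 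This is exactly the one‑line residual argument already used for the earlier dimension‑$7$ non‑existence results (where, for example, a hypothetical SO $[47,7,22]$ code produced the impossible $[25,6,12]$ code). As there, I would work one case out in full — say $[252,8,126]$ — and state that the rest are analogous.

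The arithmetic — computing $m=n-d$ and $\delta=\tfrac d2+1$, evaluating $g(7,\delta)$, and reading off $d(m,7)$ — is routine. The only point that really needs care is the second family, where the Griesmer bound alone is insufficient: the conclusion there relies on non‑existence statements for specific $[m,7]$ codes lying strictly below the Griesmer bound (such as the absence of a $[74,7,36]$ code), which rest on the finer classification underlying the code tables rather than on any elementary inequality. Each such case should therefore be checked individually against the current tables; should any of them turn out to be realizable after all, one would have to replace the short argument by the more detailed residual‑structure analysis of the kind used in the preceding propositions.
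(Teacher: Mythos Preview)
Your approach is correct and essentially identical to the paper's: apply Lemma~\ref{SO-Lemma} to each hypothetical SO $[n,8,d]$ code with $d\equiv 2\pmod 4$ to obtain a residual $[n-d,7,\tfrac d2+1]$ code, and then contradict the known values of $d(n-d,7)$ from~\cite{codetables}. Your write-up is in fact more detailed than the paper's, which treats only the case $[252,8,126]\to[126,7,64]$ explicitly and declares the rest analogous, whereas you additionally separate the cases disposed of by the Griesmer bound alone from those requiring the finer table data.
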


\begin{proof}
Suppose that there is a binary SO $[252,8,126]$ code, then it follows from Lemma \ref{SO-Lemma} that there is a binary linear $[126,7,64]$ code, which contradicts the fact that the largest minimum distance of a binary linear $[126,7]$ code is 11 (see \cite{codetables}).
The proof is similar in other cases, so we omit it. This completes the proof.
\end{proof}

\begin{prop}\label{[253-8]}
There are no binary SO $[253,8,126]$ and $[126,8,62]$ codes.
\end{prop}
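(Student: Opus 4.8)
The plan is to run the two-step residual argument of Proposition~\ref{[125-7]}, raised by one in both dimension and length.

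\textbf{Step 1: from $[253,8,126]$ down to $[126,8,62]$.} Suppose $C$ is a binary SO $[253,8,126]$ code with generator matrix $G$, and let its first row ${\bf c}_1$ be a minimum weight codeword, of weight $126$. Since $126\equiv 2\pmod 4$, Lemma~\ref{SO-Lemma} shows that $C_1=Res(C,{\bf c}_1)$ is a binary $[127,7,64]$ code; as $g(7,64)=127$ it is a Griesmer code, hence the simplex code $\mathcal S_7$ generated by $S_7$. After permuting coordinates we may write
$$G\sim\left[\begin{array}{c|c} G_2 & \begin{array}{c} 0\cdots0\\ S_7\end{array}\end{array}\right],$$
where $G_2$ is an $8\times 126$ matrix whose first row is the all-one vector of length $126$ (the restriction of ${\bf c}_1$). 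Since $S_7$ generates a SO code, $G_2G_2^{T}=GG^{T}=O$, so $G_2$ generates a binary SO $[126,8]$ code $C_2$ which contains the all-one vector, has rank $8$, and satisfies $d(C_2)\ge d(C)-\max\{{\rm wt}({\bf c}):{\bf c}\in C_1\}=126-64=62$. As $g(8,64)=128>126$ and $C_2$ is SO, $d(C_2)=62$. Hence a binary SO $[253,8,126]$ code yields a binary SO $[126,8,62]$ code containing the all-one vector.

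\textbf{Step 2: a binary SO $[126,8,62]$ code containing the all-one vector does not exist.} Let $C_2$ be such a code, with generator matrix $G_2$ whose first row ${\bf c}_1'$ has weight $62$. Since $62\equiv 2\pmod 4$, Lemma~\ref{SO-Lemma} gives $Res(C_2,{\bf c}_1')=[64,7,32]$, a Griesmer code ($g(7,32)=64$), hence the first order Reed-Muller code $\mathcal R(1,6)$, generated by $[S_7\setminus S_6]$. Arguing exactly as in the proof of Proposition~\ref{[125-7]} --- using that the first row of the left block is the all-one vector of length $62$ and that $C_2$ contains the all-one vector of length $126$ --- one brings $G_2$ to the form
$$G_2\sim G_2'=\left[\begin{array}{ccc} 1\cdots1 & 0 & 0\cdots0\\ 1\cdots1 & 1 & 1\cdots1\\ G_5 & \begin{array}{c}0\\\vdots\\0\end{array} & S_6\end{array}\right]$$
(block widths $62$, $1$, $63$; the second row is the all-one vector of length $126$). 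Deleting the second row and the $63$rd column --- a column that is zero in every surviving row, so that all surviving pairwise inner products are unchanged --- yields a binary SO $[125,7]$ code $C_6$ with generator matrix $\left[\,G_7\mid\begin{smallmatrix} 0\cdots0\\ S_6\end{smallmatrix}\,\right]$, where $G_7=\left[\begin{smallmatrix}1\cdots1\\ G_5\end{smallmatrix}\right]$. Matching codewords of $C_6$ with those codewords of $C_2$ that vanish at the deleted coordinate shows ${\rm rank}(G_7)=7$ and $d(C_6)\ge 62$; since $g(7,64)=127>125$ and $C_6$ is SO, $d(C_6)=62$. But Proposition~\ref{[125-7]} asserts that there is no binary SO $[125,7,62]$ code, a contradiction. (Equivalently, one may go one step further as in Proposition~\ref{[125-7]}: $G_7$ generates a binary SO $[62,7]$ code with minimum distance at least $62-\max_{{\bf x}\in\F_2^{6}}{\rm wt}({\bf x}S_6)=30$, again contradicting Proposition~\ref{[125-7]}.)

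\textbf{Conclusion and the main obstacle.} Combining the two steps rules out a binary SO $[253,8,126]$ code; the nonexistence of a binary SO $[126,8,62]$ code is the content of Step~2, once one observes that any such code may be assumed to contain the all-one vector --- otherwise adjoining it gives a binary SO $[126,9,62]$ code, which by $g(9,62)=126$ is a Griesmer code, impossible by Proposition~\ref{prop-D-double-even} since $62$ is singly-even (the only other possibility, that $C_2$ has a codeword of weight exceeding $64$, is eliminated by a short residual argument pairing such a heavy codeword with a minimum weight one). The computations throughout are routine bookkeeping of weights and inner products under puncturing, exactly as in Proposition~\ref{[125-7]}; the two load-bearing points are the uniqueness of the intermediate Griesmer codes --- that $[127,7,64]$ must be $\mathcal S_7$ and $[64,7,32]$ must be $\mathcal R(1,6)$ --- and keeping the all-one codeword of $C_2$ visible as a row of its generator matrix, since the whole collapse of $[126,8,62]$ to $[125,7,62]$ rests on it.
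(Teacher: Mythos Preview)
Your reduction is correct and is exactly the lift of Proposition~\ref{[125-7]} that the paper intends. One point deserves correction, though. In the Conclusion you split into cases according to whether $\mathbf 1\in C_2$, and the case $\mathbf 1\notin C_2$ is not actually closed: adjoining $\mathbf 1$ yields an SO $[126,9]$ code whose minimum distance is $62$ only if no codeword of $C_2$ has weight exceeding $64$, and the promised ``short residual argument pairing such a heavy codeword with a minimum weight one'' is neither supplied nor obvious --- the preimage in $C_2$ of the unique weight-$64$ word of $\mathcal R(1,6)$ can a priori have any even weight between $64$ and $126$, and nothing you have written excludes, say, weight $96$.

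Fortunately the case split is unnecessary. Your own observation in Step~2 --- that the deleted $63$rd column is zero in every surviving row, so inner products and weights are preserved --- is precisely what makes $C_6$ a binary SO $[125,7]$ code with $d(C_6)\ge 62$, and it uses nothing about the content of the deleted second row. Hence Step~2 already disposes of \emph{every} binary SO $[126,8,62]$ code, with or without $\mathbf 1$; the displayed form of $G_2'$ with its second row equal to $\mathbf 1$ is cosmetic, not load-bearing. (The same remark applies verbatim to the paper's proof of Proposition~\ref{[125-7]}: the sentence invoking $\mathbf 1\in C_2$ to pin down the second row of $G_4$ is never used once that row is deleted.) So your identification of ``keeping the all-one codeword visible'' as a load-bearing point is mistaken; with that corrected, Steps~1--2 are a complete proof and the Conclusion's extra case can simply be dropped.
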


\begin{proof}
The proof is similar to the proof of Proposition \ref{[125-7]}, so we omit it.
\end{proof}

\begin{prop}\label{[246-8]}
There are no binary SO $[246,8,122]$ codes.
\end{prop}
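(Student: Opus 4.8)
The plan is to argue by contradiction using the residual-code technique of Propositions~\ref{[118-7]} and~\ref{[110,7,54]}. Suppose $C$ is a binary SO $[246,8,122]$ code and let ${\bf c}_1\in C$ have weight $122$. Since $122\equiv 2\pmod 4$, Lemma~\ref{SO-Lemma} shows that $C_0=Res(C,{\bf c}_1)$ is a $[124,7,62]$ code, and since $g(7,62)=124$ it is a Griesmer code with $d=62\le 2^{6}$. By \cite[Theorem 1.3]{Tor-4} it is a Solomon--Stiffler code or a Belov code; as $2^{6}-62=2$ and an even-distance Belov code of non--Solomon--Stiffler type would require $\sum 2^{u_i-1}+2^{u}-2=2$ with $u\ge 3$, which is impossible since $2^{u}\ge 8$, the code $C_0$ is, up to equivalence, the Solomon--Stiffler code $[S_7\setminus S_2]$. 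Writing this code in its four-block form,
$$S_7\setminus S_2\sim\left[\begin{array}{cccc}1\cdots1&1\cdots1&0\cdots0&0\cdots0\\ 1\cdots1&0\cdots0&1\cdots1&0\cdots0\\ S_5&S_5&S_5&S_5\end{array}\right],$$
we may take $G\sim\left[\begin{array}{c|c}1\cdots1&0\cdots0\\ G_1&S_7\setminus S_2\end{array}\right]$ and set ${\bf c}_i=({\bf g}_i,{\bf r}_i)$ for $2\le i\le 8$.

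Next I would pin down the first three rows exactly as in Proposition~\ref{[118-7]}: from ${\rm wt}({\bf c}_2),{\rm wt}({\bf c}_1+{\bf c}_2)\ge 122$ and ${\bf c}_1\cdot{\bf c}_2=0$ we get ${\rm wt}({\bf g}_2)\in\{60,62\}$, so after possibly replacing ${\bf c}_2$ by ${\bf c}_1+{\bf c}_2$ we may assume ${\rm wt}({\bf c}_2)=122$, ${\rm wt}({\bf g}_2)=60$, and similarly ${\rm wt}({\bf g}_3)=60$; then ${\rm wt}({\bf c}_2+{\bf c}_3)\ge 122$ and ${\rm wt}({\bf c}_1+{\bf c}_2+{\bf c}_3)\ge 122$ force $29\le|{\rm supp}({\bf g}_2)\cap{\rm supp}({\bf g}_3)|\le 30$, and since ${\bf r}_2\cdot{\bf r}_3=31$ is odd this intersection equals $29$. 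Permuting coordinates, $G$ takes the shape
$$G\sim\left[\begin{array}{c|c|c|c|c|c|c|c}1^{29}&1^{31}&1^{31}&1^{31}&0^{31}&0^{31}&0^{31}&0^{31}\\ 1^{29}&1^{31}&0^{31}&0^{31}&1^{31}&1^{31}&0^{31}&0^{31}\\ 1^{29}&0^{31}&1^{31}&0^{31}&1^{31}&0^{31}&1^{31}&0^{31}\\\hline G_{29}&A_1&A_2&A_3&S_5&S_5&S_5&S_5\end{array}\right],$$
with $G_{29}$ a $5\times29$ matrix; considering the residual codes $Res(C,{\bf c}_2)$ and $Res(C,{\bf c}_3)$, each again a $[124,7,62]$ Griesmer code equivalent to $[S_7\setminus S_2]$, forces $A_1=A_2=A_3=[S_5]$. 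Because $S_5$ is SO, any nonzero codeword $\sum_{i=4}^{8}\lambda_i{\bf c}_i$ vanishing on the first $29$ coordinates would be seven copies of a nonzero $S_5$-codeword, hence of weight $7\cdot16=112<122$; so $G_{29}$ has rank $5$, and an inner-product computation (using that $S_5$ is SO) shows that $G_{29}$ generates a binary SO code.

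Finally I would bound the weights of $C_{29}=\langle G_{29}\rangle$. For $({\bf x},{\bf y})\in\langle{\bf c}_4,\dots,{\bf c}_8\rangle$ with ${\bf x}\ne{\bf 0}$, so that ${\bf y}$ has weight $16$ in each of the seven $S_5$-blocks, one computes
$${\rm wt}\bigl({\bf c}_1+({\bf x},{\bf y})\bigr)=138-{\rm wt}({\bf x})\ge122,\qquad {\rm wt}\bigl({\bf c}_1+{\bf c}_2+({\bf x},{\bf y})\bigr)={\rm wt}({\bf x})+108\ge122,$$
so $14\le{\rm wt}({\bf x})\le16$; since $C$ is even, $C_{29}$ is a binary SO $[29,5,14]$ code. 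This contradicts $d_{so}(29,5)=12$ (Shi et al.~\cite{SO-5-6}): indeed, a SO $[29,5,14]$ code has only weights $0,14,16$ and is singly-even, so its doubly-even subcode has dimension $4$ and all of its nonzero codewords have weight $16$, i.e.\ it is a one-weight $[29,4,16]$ code, which cannot exist because a full-support binary one-weight code of dimension $4$ and weight $16$ has length $16(2^4-1)/2^3=30$. Hence no binary SO $[246,8,122]$ code exists. The main obstacle is the step forcing $A_1=A_2=A_3=[S_5]$ from the residual codes $Res(C,{\bf c}_2)$ and $Res(C,{\bf c}_3)$: it rests on the essentially unique block structure of the $[124,7,62]$ Griesmer code and must be handled with care about the coordinate permutations relating these residual codes.
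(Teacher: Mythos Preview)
Your proposal is correct and follows precisely the template of Proposition~\ref{[118-7]}, which is exactly what the paper does (it omits the proof, saying it is similar to that of Proposition~\ref{[118-7]}). The final contradiction can be obtained directly from $d_{so}(29,5)=12$ in \cite{SO-40}, so your one-weight-code argument, while correct, is a self-contained bonus rather than a necessity.
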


\begin{proof}
The proof is similar to the proof of Proposition \ref{[118-7]}, so we omit it.
\end{proof}

\begin{prop}\label{[238-8]}
There are no binary SO $[190,8,94]$, $[222,8,110]$, $[238,8,118]$ codes.
\end{prop}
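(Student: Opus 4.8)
\emph{Proof proposal.} The codes $[190,8,94]$, $[222,8,110]$ and $[238,8,118]$ are the dimension-$8$ counterparts of the codes $[62,7,30]$, $[94,7,46]$ and $[110,7,54]$ ruled out in Propositions \ref{[125-7]}, \ref{[94-7]} and \ref{[110,7,54]} (their residual codes will turn out to be $[S_7\setminus S_5]$, $[S_7\setminus S_4]$, $[S_7\setminus S_3]$, just as the residual codes of those dimension-$7$ codes are $[S_6\setminus S_5]$, $[S_6\setminus S_4]$, $[S_6\setminus S_3]$), and the plan is to run the same residual-code argument. In each case $d\equiv 2\pmod 4$, so Lemma \ref{SO-Lemma} gives that for a minimum-weight codeword ${\bf c}_1\in C$ the residual code $C_0=Res(C,{\bf c}_1)$ has parameters $[\,n-d,\,7,\,\frac{d}{2}+1\,]$; one checks that $n-d=g(7,\frac{d}{2}+1)$, so $C_0$ is a Griesmer code with minimum distance $\frac{d}{2}+1\in\{48,56,60\}\le 2^6$. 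Since $2^6-(\frac{d}{2}+1)\in\{16,8,4\}$ is a single power of $2$ and admits no Belov decomposition (a Belov code requires a summand $2^u-2\ge 6$), Helleseth's classification \cite[Theorem 1.3]{Tor-4} forces $C_0$, up to equivalence, to be the Solomon--Stiffler code $[S_7\setminus S_t]$ with $t=5,4,3$ respectively; in particular its generator matrix is known, it has exactly two nonzero weights $\frac{d}{2}$ and $64$, the $2^{7-t}-1$ weight-$64$ codewords are precisely the nonzero codewords in the span of its $7-t$ ``inner'' rows, and $C_0$ is self-orthogonal since $u_p=t\ge 3$ (Theorem \ref{SO-SS-codes}).

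Next I would reconstruct a generator matrix of $C$ column by column, exactly in the style of Propositions \ref{[125-7]}, \ref{[118-7]}, \ref{[110,7,54]}, \ref{prop-46-6-22}, \ref{[94-7]} and \ref{[72-7]}. Write $G=[\,G'\mid G''\,]$ with first row ${\bf c}_1=({\bf 1}_d,{\bf 0}_{n-d})$ and with the last seven rows of $G''$ equal to the generator of $[S_7\setminus S_t]$, so the remaining rows of $C$ are ${\bf c}_i=({\bf g}_i,{\bf r}_i)$, $2\le i\le 8$. The reconstruction uses: (i) the relations $\langle{\bf c}_i,{\bf c}_j\rangle=\langle{\bf g}_i,{\bf g}_j\rangle+\langle{\bf r}_i,{\bf r}_j\rangle=0$, so $\langle{\bf g}_i,{\bf g}_j\rangle=\langle{\bf r}_i,{\bf r}_j\rangle=0$ as $C_0$ is self-orthogonal; (ii) the weight inequalities ${\rm wt}({\bf c})\ge d$ applied to the rows ${\bf c}_i$, to ${\bf c}_1+{\bf c}_i$, to ${\bf c}_i+{\bf c}_j$ and to small further sums, which pin down ${\rm wt}({\bf g}_i)$ and the pairwise, triple, \dots\ intersections $|{\rm supp}({\bf g}_i)\cap{\rm supp}({\bf g}_j)|$; and (iii) the fact that $Res(C,{\bf c}_i)$ is again the same Solomon--Stiffler code for every minimum-weight ${\bf c}_i$, which (with the description of its weight-$64$ codewords above) forces the columns of $G'$ in the inner rows into Solomon--Stiffler blocks. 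This determines $G'$ completely except for a small block $G_0$ with $7-t$ rows.

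Finally, since $G$ generates a self-orthogonal code and the Solomon--Stiffler skeleton already located inside $G$ is self-orthogonal, $G_0$ must on its own generate a self-orthogonal code, and one concludes exactly as in the matching dimension-$7$ proof: either $G_0$ is a self-orthogonal code whose dimension exceeds half its length (as the $[5,3]$ code in Proposition \ref{[110,7,54]}), or one recognises inside $G$ a subcode whose parameters contradict a value $d_{so}(n',k')$ with $k'\le 7$ already determined (for instance $d_{so}(30,6)=12$ or $d_{so}(22,5)=8$ from \cite{SO-40}, or the nonexistence of a self-orthogonal $[94,7,46]$ or $[110,7,54]$ code just proved, as in Propositions \ref{[94-7]} and \ref{[118-7]}). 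In all three cases this is a contradiction, so none of the codes exists.

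The step I expect to be the main obstacle is the column-by-column reconstruction in the second paragraph. It is heavier than in dimension $7$: the anticode $S_t$ is now as large as the $[31,5,16]$ simplex, so the bound ${\rm wt}({\bf r}_i)\le 64$ is comparatively weak and one must control sums ${\bf g}_i+{\bf g}_j+\cdots$ through triple and quadruple support intersections; and since $C_0$ has two distinct nonzero weights, one must split into cases according to whether each ${\bf r}_i$ (and each small sum of them) has weight $\frac{d}{2}$ or $64$ before the block $G_0$ can be isolated. Carrying this bookkeeping through is routine but long — which is why the analogous dimension-$7$ statements (Propositions \ref{[118-7]}, \ref{[110,7,54]}, \ref{[72-7]}) have lengthy proofs — and the dimension-$8$ versions follow the same template.
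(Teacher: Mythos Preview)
Your proposal is correct and follows essentially the same residual-code reconstruction approach as the paper, which simply states that the proof is similar to Proposition~\ref{[110,7,54]} and omits all details. One minor slip: the two nonzero weights of the residual code $C_0$ are $\frac{d}{2}+1$ and $64$, not $\frac{d}{2}$ and $64$.
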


\begin{proof}
The proof is similar to the proof of Proposition \ref{[110,7,54]}, so we omit it.
\end{proof}

\begin{example}\label{example-SO-2}
We start from a binary SO $[18,6,8]$ code (see \cite{SO-40}). By applying (2) of Remark \ref{rem-short}
we can construct a binary SO [254,8,126] code.

Similarly, we start from binary SO $[14,7,4]$, $[18,7,6]$, $[19,7,8]$, $[26,7,10]$, $[27,7,12]$, $[42,7,18]$, $[43,7,20]$, $[48,7,22]$, $[50,7,24]$, $[57,7,26]$, $[59,7,28]$, $[73,7,34]$, $[75,7,36]$, $[80,7,38]$, $[82,7,40]$, $[88,7,42]$, $[95,7,46]$, $[104,7,50]$, $[111,7,54]$, $[119,7,58]$ and $[126,7,62]$ codes. By applying (2) of Remark \ref{rem-short}
we can construct binary SO $[142,8,68]$, $[146,8,70]$, $[147,8,72]$, $[154,8,74]$, $[155,8,76]$, $[170,8,82]$, $[171,8,84]$, $[180,8,86]$, $[178,8,88]$, $[185,8,90]$, $[187,8,92]$, $[201,8,98]$, $[203,8,100]$, $[208,8,102]$, $[210,8,104]$, $[216,8,106]$, $[223,8,110]$, $[232,8,114]$, $[239,8,118]$, [247,8,122] and [254,8,126] codes.

In addition, by best-known linear codes (BKLC) database of Magma, there are binary SO $[45,8,20]$, $[51,8,24]$, $[61,8,28]$, $[62,8,28]$, $[68,8,32]$, $[77,8,36]$, $[84,8,40]$, $[85,8,40]$, $[92,8,44]$, $[99,8,48]$, $[108,8,52]$, $[115,8,56]$ and $[162,8,80]$ codes. By puncturing the binary SO $[45,8,20]$ code on $\{10,11\}$, we can construct a binary SO $[43,8,18]$ code. By puncturing the binary SO $[62,8,28]$ code on $\{9,27\}$, we can construct a binary SO $[60,8,26]$ code. By puncturing the binary SO $[68,8,32]$ code on $\{64,65\}$, we can construct a binary SO $[66,8,30]$ code. By puncturing the binary SO $[77,8,36]$ code on $\{61,77\}$, we can construct a binary SO $[75,8,34]$ code. By puncturing the binary SO $[85,8,40]$ code on $\{35,52\}$, we can construct a binary SO $[83,8,38]$ code. By puncturing the binary SO $[108,8,52]$ code on $\{4,5\}$, we can construct a binary SO $[106,8,50]$ code. By puncturing the binary SO $[162,8,80]$ code on $\{1,10\}$, we can construct a binary SO $[160,8,78]$ code.
\end{example}

\begin{remark}
We refer to an important upper bound of $d_{so}(n,k)$, namely, $d_{so}(n,k)\leq \left\lfloor\frac{d(n,k)}{2}\right\rfloor$. The upper bound of $d(n,k)$ can be seen in \cite{codetables}. By Corollary \ref{cor-SO-SS}, there is a binary SO $[g(8,D),8,D]$ Griesmer code for $D=64,96,108,112,116,120,124$.
Applying Lemma \ref{lem-g} to these binary SO Griesmer codes, we can partially determine the exact value of $d_{so}(n,8)$ for $41\leq n\leq 255$. Combining Propositions \ref{[253-8]} and Example \ref{example-SO-2}, we give Table 8.
\end{remark}

\begin{center}
{\small
\begin{tabular}{ll|ll}
\multicolumn{4}{c}{{\rm Table 8: Some binary optimal $[N,8]$ SO codes}}\\
\hline
\makebox[0.1\textwidth][l]{$N$}& \makebox[0.01\textwidth][l]{$d_{so}$}&
\makebox[0.1\textwidth][l]{$N$}& \makebox[0.01\textwidth][l]{$d_{so}$}\\
\hline
$41$ & 16  &$153$ & 72-74 \\
$42$ & 16-18 &$154$ & 74 \\

$43,44$ & 18 &$155,156,157,158, 159$ & 76 \\

$45,46,47, 48$ & 20 &$160,161$ & 78 \\

$49,50$ & 20-22 &$162,163,164,165,166,167, 168$ & 80 \\

$51,52,53,54,55,56,57$ & 24 &$169$ & 80-82 \\

$58,59,60$ & 24-26 &$170$ & 82 \\

$60$ & 26 &$171,172,173,174, 175$ & 84 \\

$61,62,63,64$ & 28 &$176,177$ & 86 \\

$65$ & 28-30  &$178,179,180,181,182$ & 88 \\

$66,67$ & 30  &$183,184$ & 88-90 \\

$68,69,70,71,72,73$ & 32 &$185$ & 90 \\

$74$ & 32-34 &$186$ & 90-92\\

$75,76$ & 34 &$187,188, 189, 190$ & 92\\

$77,78,79,80$ & 36  &$191$ & 92-94\\

$81,82$ & 36-38 &$192,193,194,195,196,197,198, 199$ & 96\\

$83$ & 38 &$200$ & 96-98\\

$84,85,86,87,88$ & 40  &$201,202$ & 98\\

$89,90,91$ & 40-42 &$203,204,205,206$ & 100\\

$92,93,94,95$ & 44 &$207$ & 100-102\\

$96,97,98$ & 44-46 &$208,209$ & 102\\

$99,100,101,102,103,104, 105$ & 48 &$210,211,212,213, 214$ & 104\\

$106,107$ & 50 &$215$ & 104-106\\

$108,109,110,111, 112$ & 52 &$216$ & 106\\

$113,114$ & 52-54 &$217,218,219,220, 221,222$ & 108\\

$115,116,117,118,119$ & 56 &$223$ & 110\\

$120,121,123$ & 56-58 &$224,225,226,227,228,229, 230$ & 112\\

$124,125$ & 56-60 &$231$ & 112-114\\

$126,127$ & 56-62 &$232$ & 114\\

$128,129,130,131,132,133,134,135, 136$ & 64 &$233,234,235,236, 237, 238$ & 116\\

$137,138,139$ & 64-66 &$239$ & 118\\

$140,141$ & 64-68 &$240,241,242,243,244, 245, 246$ & 120\\

$142,143, 144$ & 68 &$247$ & 122\\

$145$ & 68-70 &$248,249,250,251,252,253$ & 124\\

$146$ & 70 &$254$ & 126\\

$147,148,149,150,151,152$ & 72 \\
\hline
\end{tabular}}
\end{center}

\section{Conclusion}

In this paper, we have pushed further the study of characterization of optimal binary SO codes. First, binary SO Griesmer codes have been constructed from the Solomon-Stiffler codes. Next, we have proposed a general construction method for binary SO codes. As a consequence, the exact value of $d_{so}(n,k)$ have been determined when $n$ is large relative to $k$. An open problem proposed by Kim and Choi \cite{Kim-SO} has been also pushed greatly. More specifically, we have reduced the problem with an infinite number of cases to the problem of a finite number of cases.
Finally, we have proved the nonexistence of some binary SO codes with dimension 7 by employing the residual code approach.
Our results provide a general method to prove the nonexistence of some binary SO codes.

As future work, the authors want other researchers to tackle the remaining five cases where $k=7,8$. It will be also very difficult to characterize optimal binary SO codes with length $n\geq30$ and dimension $k\geq 9$.\\

\noindent{\bf Conflict of Interest}
The authors have no conflicts of interest to declare that are relevant to the content of this
article.\\

\noindent{\bf Data Deposition Information}
Our data can be obtained from the authors upon reasonable request.\\

\noindent{\bf Acknowledgement}
This research is {\mbox{supp}}orted by Natural Science Foundation of China (12071001).

\end{document}